\pgfplotsset{compat=1.18}
\newcommand{\Prob}{\mathbb{P}}
\newtheorem{lemma}{Lemma}
\newtheorem{theorem}{Theorem}
\newcommand{\R}{\mathbb{R}}
\newcommand{\sla}[1]{\textcolor{black}{#1}}
\newcommand{\slb}[1]{\textcolor{black}{#1}}
\newcommand{\slc}[1]{\textcolor{black}{#1}}
\newcommand*{\textlabel}[2]{%
  \edef\@currentlabel{#1}
  \phantomsection
  #1\label{#2}
}
\title{Data-adaptive structural change-point detection via isolation}
\author{Andreas Anastasiou \\
	Department of Mathematics and Statistics \\ 
    University of Cyprus\\
	\texttt{anastasiou.andreas@ucy.ac.cy} \\
	\And
	Sophia Loizidou \\
	Department of Mathematics\\
    University of Luxembourg\\
	\texttt{sophia.loizidou@uni.lu} \\
}
\date{}
\begin{document}
\maketitle

\begin{abstract}
	In this paper, a new data-adaptive method, called DAIS (Data Adaptive ISolation), is introduced for the estimation of the number and the location of change-points in a given data sequence. The proposed method can detect changes in various different signal structures; we focus on the examples of piecewise-constant and continuous, piecewise-linear signals. 
    The novelty of the proposed algorithm comes from the data-adaptive nature of the methodology. At
    each step, and for the data under consideration, we search for the most prominent change-point in a targeted neighborhood of the data sequence that contains this change-point with high probability. Using a suitably chosen contrast function, the change-point will then get detected after being isolated in an interval. The isolation feature enhances estimation accuracy, while the data-adaptive nature of DAIS is advantageous regarding, mainly, computational complexity. 
    The methodology can be applied to both univariate and multivariate signals.
    The simulation results presented indicate that DAIS is at least as accurate as state-of-the-art competitors and in many cases significantly less computationally expensive.
\end{abstract}

\keywords{Change-point detection \and data-adaptivity \and isolation \and thresholding criterion}


\section{Introduction} \label{introduction}

Change-point detection, also known as data segmentation, is the problem of finding abrupt changes in data when, at least, one of their properties over time changes. 
It has attracted a lot of interest over the years, mostly due to its importance in time series analysis and the wide range of applications where change-point detection methods are needed. 
These include genomics (\cite{10.1093/biomet/asv031}), neuroscience (\cite{seeded_bs}), seismic data (\cite{se-12-2717-2021}), astronomy (\cite{chan2022}), and finance (\cite{10.2307/24310145}).

There are two directions; sequential (or online) and a posteriori (or offline) change-point detection. 
We focus on the latter, where the goal is to estimate the number and locations of changes in given data.
We work under the model
\begin{equation} \label{model}
    X_t = f_t + \sigma \epsilon_t, \quad t = 1, 2, \ldots, T,
\end{equation}
where $X_t$ are the observed data and $f_t$ is a one-dimensional deterministic signal with structural changes at certain points.
We highlight, however, that the case when $X_t$ is multivariate is also discussed in our paper. We focus on the case of piecewise-constant and continuous, piecewise-linear signals $f_t$, meaning that the changes are in the mean or the slope. However, the proposed algorithm can be extended to other, possibly more complicated, signal structures. 
Although detecting changes in the mean is a simple case, as noted by \cite{brodsky2000non}, more complex change-point problems that allow changes in properties other than the mean, can be reduced to the segmentation problem we are studying. 
This can be done by applying a suitable transformation to the data that reveals the changes as those in the mean of the transformed data. 
For example, \cite{cho2015multiple} and \cite{ccid} provide methods for the detection of multiple change-points in the second-order (i.e. autocovariance and cross-covariance) structure of possibly high dimensional time series, using Haar wavelets as building blocks. 
The signal is transformed into a piecewise-constant one with change-point locations identical to those in the original signal. 
We want to emphasize that the novelty of our paper does not come from the structure of changes that the method can be applied to.
Upon correct choice of a contrast function, we can cover more scenarios such as the case of piecewise-quadratic signals (see \cite{baranowski2019narrowest} for more details).
The importance and novelty of our work lies mainly in the data-adaptive nature of the proposed algorithm itself, which, at each step, starts searching for change-points in the areas that most prominently include one.

The various methods available for solving the problem of change-point detection 
can be mainly split into two categories, according to whether they are optimization-based methods or they use an appropriately chosen contrast function. 
The former group includes methods that look for the optimal partition of the data, by performing model selection using a penalization function to avoid overfitting.
The latter category's methods do not search for the globally optimal partition of the signal. 
Instead, the change-point locations are chosen as the most probable location at each step of the algorithm.
Only some of the already existing methods are mentioned in this introduction but comprehensive overviews and more detailed explanations can be found in \cite{TRUONG2020107299} and \cite{yu2020review}.

Starting with the optimization-based methods for detecting changes in the mean when $f_t$ is piecewise-constant, one of the most common penalty functions is the Schwarz Information Criterion (\cite{10.2307/2958889}), which was used by \cite{YAO1988181} for change-point detection, assuming Gaussian random variables $\epsilon_t$ in \eqref{model}.
Under the same assumption, \cite{yao1989least} studies an estimator based on least squares.
Relaxing this assumption and allowing instead for more general exponential family distributions, \cite{HAWKINS2001323} introduces a dynamic programming algorithm which uses maximum likelihood estimates of the location of the change-points. 
\cite{NINOMIYA2005237} introduces an AIC-type criterion for change-point models and \cite{7938741} shows that while an AIC-like information criterion does not give a strongly consistent selection of the optimal number of change-points, a BIC-like information criterion does.
\cite{1381461} employs dynamic programming to guarantee that the exact global optimum, in terms of creating the segments, is found, while
\cite{doi:10.1080/01621459.2012.737745} proposes an algorithm, called PELT, whose computational cost is linear in the number of observations.
\cite{rigaill2015pruned} introduces the pDPA algorithm, which includes a pruning step towards complexity reduction.
\cite{maidstone2017optimal} combines the ideas from PELT and pDPA leading to two new algorithms, FPOP and SNIP, which have low computational complexity.
\sla{Finally, \cite{frick_multiscale_2014} introduces SMUCE, a new estimator for the change-point problem in exponential family regression and \cite{Verzelen_optimal} proposes two procedures achieving optimal rate, a least-squares estimator with a new multiscale penalty and a two-step multiscale post-processing procedure.} 

Focusing now on the category where a contrast function is used, for $f_t$ being piecewise-constant the relevant function is the absolute value of the CUSUM statistic, which is defined in Section~\ref{piecewise constant}. 
A method that has received a lot of attention is the Binary Segmentation algorithm, as introduced in \cite{vostrikova1981detecting}.
It starts by searching the whole data sequence for one change-point.
Subsequently, at each step, the data sequence is split according to the already detected change-points.
However, because of checking for a single change-point in intervals that could have more than one change-points, Binary Segmentation has suboptimal accuracy; many methods have been developed with the scope to improve on such drawbacks.
One of these methods is proposed by \cite{fryzlewicz2014wild}, called WBS, which calculates the value of the contrast function on a large number of randomly drawn intervals,
which allows detection of change-points in small spacings.
\cite{fryzlewicz2020detecting} proposes WBS2 which draws new intervals each time a detection occurs.
A different approach is adopted by \cite{baranowski2019narrowest} with the NOT algorithm. 
By choosing the narrowest interval for which the value of the chosen contrast function exceeds a pre-defined threshold, 
there is exactly one change-point in each interval with high probability. 
The ID method in \cite{anastasiou2022detecting} achieves, first, isolation and then detection of each change-point through an idea based on expanding intervals in a sequential way, starting from both the beginning and the end of the data sequence, in an interchangeable way. 
The isolation of each change-point maximizes the detection power.
\cite{seeded_bs} proposes SeedBS which uses a deterministic construction of search intervals which can be pre-computed while \cite{chu1995mosum}, \cite{10.3150/16-BEJ887} and \cite{10.1214/22-EJS2101} consider moving sum statistics.
\cite{10.1214/17-AOS1662} achieves a multiscale decomposition of the data with respect to an adaptively chosen unbalanced Haar wavelet basis, using a transformation of the data called TGUH transform and \cite{chan2022}, through a `bottom-up' approach, explores reverse segmentation, which involves the creation of a `solution path' by deleting the change-point with the smallest CUSUM value in the segment determined by its closest left and right neighbors, in order to obtain a hierarchy of nested models. 
A method that does not fall in any of the two categories described above is FDRSeg, proposed by \cite{10.1214/16-EJS1131}, which controls the false discovery rate in the sense that the number of falsely detected change-points is bounded linearly by the number of true jumps.

Methods for detecting more general structural changes can again be split into the same two categories.  Focusing on optimization-based methods, \cite{10.2307/2998540} considers the estimation of linear models based on the least squares principle. 
\cite{doi:10.1137/070690274} introduces a trend filtering (TF) approach to produce piecewise-linear trend estimates using an $\ell_1$ penalty. 
\cite{doi:10.1080/10618600.2018.1512868} presents a locally dynamic approach, called CPOP, that finds the best continuous piecewise-linear fit to the data using penalized least squares and an $\ell_0$ penalty. 
\cite{10.1214/aos/1176347963} proposes the MARS method for flexible regression using splines functions with the degree and knot locations determined by the data. 
Two methods for optimizing the knot locations and smoothing parameters for least-squares or penalized splines are introduced by \cite{doi:10.1080/00949655.2011.647317}. 
\cite{WIGGINS2015346} introduces the frequentist information criterion for change-point detection which can detect changes in the mean, slope, standard deviation or serial correlation structure of a signal whose noise can be modeled by Gaussian, Wiener, or Ornstein-Uhlenbeck processes. 
\cite{SOSACOSTA20182044} introduces PLANT, a bottom-up type algorithm that finds the points at which there is potential variation of the slope using a likelihood-based approach, and then recursively merges the adjacent segments. 
Considering algorithms using a suitably chosen contrast function, the algorithm NOT can be applied to both piecewise-linear and quadratic signals, while ID can be applied to piecewise-linear and can be extended to piecewise-quadratic signals.
Finally, \cite{maeng2023detecting} extends TGUH to piecewise-linear signals and \cite{mosum_linear} proposes a moving sum methodology.


In this paper, we propose a data-adaptive change-point detection method, called Data Adaptive ISolation (labelled DAIS), that attempts to isolate the change-points before detection, by also taking into account the potential true locations of the change-points. 
This means that the algorithm does not start from a random interval, or the beginning/end of the signal, as most algorithms in the literature, but instead starts checking around a point that there is reason to believe it could be around a true change-point.
The idea behind the data-adaptive nature of the algorithm, and the belief that the algorithm will start close to a change-point, arise from the fact that, for example, in the case that the unobserved true signal $f_t$ in \eqref{model} is piecewise-constant, taking pairwise differences between consecutive time steps will reveal a constant signal with the value 0, with spikes where the change-points occur. 
This means that the largest spike, in absolute value, occurs at the location of the change-point with the largest jump in the sequence, therefore the most prominent change-point. 
Similarly, in the case that $f_t$ is continuous and piecewise-linear, differencing the signal twice will again reveal a signal with spikes near to the true locations of the change-points. 
We try to use this fact in the observed signal, $X_t$. 
In the algorithm, we identify the location of this spike, which we will be referring to as the `largest difference' for the rest of the paper, and test around it for possible change-points.
A discussion about the connection between the isolation aspect of the algorithm and the location of the largest difference can be found in Section~\ref{discussion}, while Section~\ref{discussion prob} complements this discussion with theoretical results in the case that $\epsilon_t$ is Gaussian.
We want to emphasize that since the largest difference is only used as the starting point of the search for change-points in the data sequence, the true change-points are detected even when the largest difference is far from them.
The novelty of our work lies in the data-adaptive nature of the algorithm as just described. 

The DAIS algorithm uses left- and right-expanding intervals around the location of the largest difference (in absolute value) found, denoted by $d_{s,e}$, in the interval $[s,e]$ in order to identify the potential change-point that might have caused the spike and so must be close to $d_{s,e}$. 
This is done in a deterministic way around $d_{s,e}$, expanding once either only to the left or only to the right at each step, in an alternate way. 
Using expanding intervals, we achieve isolation of the change-points, which is desirable as the detection power of the contrast function is maximized in such cases
(see Sections \ref{piecewise constant} and \ref{piecewise linear} for the choice of the contrast function). 
Due to the alternating sides of the expansions, the location of the largest difference, which, as explained before, we have a reason to believe to be around the location of a change-point, is at the midpoint of the interval being checked (or close to it) after an even number of steps has been performed. 
This increases the power of the contrast function and so gives an advantage to the detection power of the method.
Using an expansion parameter $\lambda_T$ (more details in Section \ref{parameters}) at most $\lceil (e-s+1) / \lambda_T \rceil + 1$ steps are necessary to check the whole length of the interval $[s,e]$ around the largest difference. 
As soon as a change-point is detected, DAIS restarts on two disjoint intervals, one ending at the start-point of the interval where the detection occurred and one starting from the end-point of that same interval. 
A more detailed explanation of the algorithm can be found in Section \ref{methodology}.

The data-adaptive nature of DAIS is what differentiates it from its competitors. 
A key methodological difference with other algorithms is the way the intervals are being checked.
\sla{WBS and NOT, as a first step, randomly draw a number of intervals, and derive the maximum of the contrast function values within those intervals. For those values exceeding a certain, predefined threshold, the change-point is detected for WBS at the location where the maximum, among the maximums within each interval, is attained and for NOT at the location corresponding to that maximum value (that surpasses the threshold) that belongs to the smallest interval drawn. WBS2 draws new random intervals every time a change-point is detected.}
\sla{ID uses one-sided expanding intervals starting from the end-points of the data sequence under consideration in order to achieve isolation before detection, and so the intervals are checked in a deterministic way. All aforementioned algorithms will follow the described method for any data sequence, 
drawing random intervals (WBS, WBS2, NOT) or starting from a fixed point (ID).
In contrast to this, DAIS adapts its starting-point at every step to the nature of the data, as it takes into account the location and magnitude of the changes when calculating the location of the largest difference.
Expanding intervals around the location of the largest, in absolute value, difference, are created and thus detection of the change-point in an interval in which it is isolated occurs with high probability.
This way of expanding around $d_{s,e}$ increases the detection power, as explained in Section~\ref{sec: DAIS_algorithm}.
}
The data-adaptivity also gives an advantage in terms of computational complexity, as is explained in Section~\ref{computational complexity}.

The paper is organized as follows. 
Section \ref{methodology} provides details on the methodology of DAIS. 
We present a simple example followed by a detailed explanation of the algorithm and, 
in Section~\ref{discussion}, we discuss the connection between the location of the largest difference calculated at each step of the algorithm and the guarantee that the change-point can be detected in an interval in which it is isolated.
In Section \ref{theory}, we provide theoretical results regarding the consistency of the number of change-points detected and the accuracy of their estimated locations as well as some theoretical results concerning the discussion of Section \ref{discussion}. 
Section \ref{Computational complexity and practicalities} includes an explanation of how some parameters of DAIS are selected
and a note on computational complexity, which is also compared to that of competitors.
In Section \ref{simulations}, we provide results on simulated data and a comparison to state-of-the-art competitors. 
Section \ref{sec: DAIS extensions} discusses extensions to the algorithm in more complicated signal structures, which include temporal correlation, relaxing the Gaussianity assumption on the noise term $\epsilon_t$, and multivariate data sequences.
In Section \ref{Real data}, two examples using real world data are presented. 
The first one is about crime data reported daily and the second one involves weekly data on the Euro to British pound exchange rate.
Section \ref{Conclusions} concludes the paper with a summary of the most important findings. 
The proof of Theorem~\ref{consistency_theorem} and the outline of the proof of Theorem \ref{thm: consistency_multivariate} can be found in Appendix~\ref{proofs} and \ref{appendix: proof_multivariate}, respectively.
Finally, further simulation results, \sla{an investigation of the impact of the choice of parameter values in the performance of the algorithm,} and the proof of Theorem \ref{consistency_theorem_slope} can be found in the supplementary material.

\section{Methodology} \label{methodology}

\subsection{Simple example}
We begin by presenting a toy example of how DAIS works in practice in the case of $f_t$ in \eqref{model} being a piecewise-constant signal. 
We have a sequence with length $T = 100$, standard deviation $\sigma=1$, and a change-point at location $r=65$ with a jump of magnitude equal to $1.5$. The data are presented in the left plot of Figure \ref{fig:toy example} with the true underlying signal plotted in red.
We first calculate the absolute difference of consecutive observations, $X_t$ and $X_{t+1}$ for $t=1,2,\dots, T-1$. 
As can be seen by the plot on the right of Figure \ref{fig:toy example}, in this example the largest difference occurs at the location of the change-point ($d_{1,100}=65$), which means that $\textrm{argmax}_{t = 1,2, \ldots, 99}\lvert X_{t+1}-X_{t}\rvert = 65$. 
As shown in the left plot of Figure \ref{fig:tikz3}, DAIS creates right- and left-expanding intervals around this point, where, for the sake of presentation of this example, the expansion parameter is chosen to be $\lambda_T = 10$.
The intervals in the order in which they are checked are $[65,74]$, $[55,74]$ and $[55,84]$, where detection occurs. 
Note here that the algorithm does not detect the change-point in the first two intervals being checked, most probably either because the magnitude of the jump compared to $\sigma$ is small, or because the number of observations used is not sufficiently large, or a combination of the two.
The change-point is identified at location 65 and then the algorithm restarts in the intervals $[1,55]$ and $[84,100]$. 
As these two intervals contain no change-points in our example, the largest differences are just at points where the absolute value of the difference between the noise of two consecutive observations is large.
In this case, the location of the largest difference is arbitrary and only acts as the starting point for searching the whole interval for change-points to confirm that there is none.
The largest differences are now detected at points $d_{1,55} = 46$ and $d_{84,100} = 85$ in the two intervals, respectively. 
The algorithm starts checking the interval $[1,55]$, with the expansions being performed in a similar way as to the smaller interval $[84,100]$, on which we focus for simplicity.
For the interval $[84,100]$ the left- and right-expanding subintervals checked are: $[85,94]$, $[84,94]$, and $[84,100]$, as shown in the right plot of Figure \ref{fig:tikz3}. 
Since no change-point is detected in either of the intervals, the algorithm terminates.

\begin{figure}[htbp]
\centering
  \includegraphics[scale = 0.6]{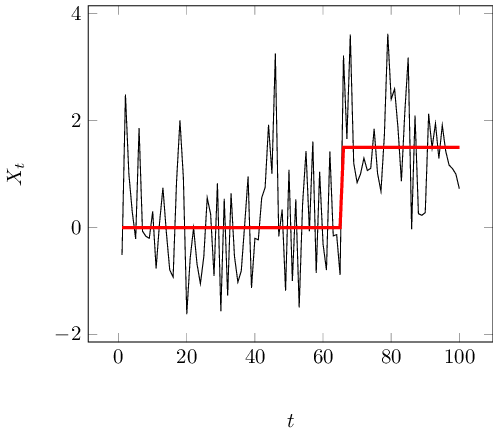}
  \includegraphics[scale = 0.6]{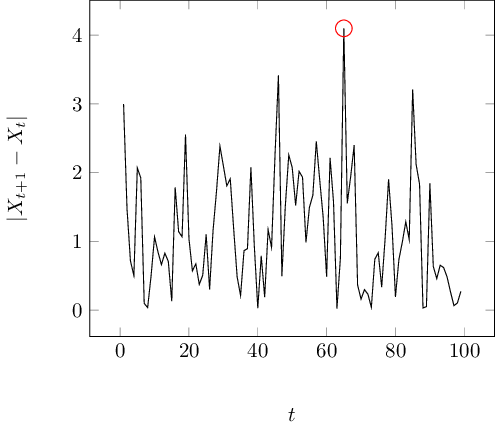}
  \caption{\textbf{Left plot:} The data (in black) and the underlying signal (in red) used for the toy example. The change-point is at location 65. \textbf{Right plot:} The absolute values of consecutive pairwise differences. The maximum, with location $d_{1,100} = 65$, is marked with a red circle.}
  \label{fig:toy example}
\end{figure}

\begin{figure}[htbp]
\centering
  \includegraphics[scale = 0.6]{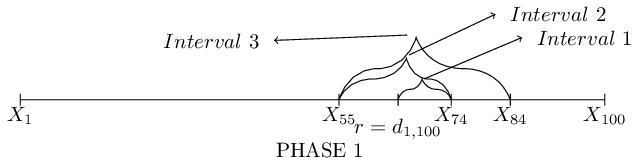}
  \includegraphics[scale = 0.6]{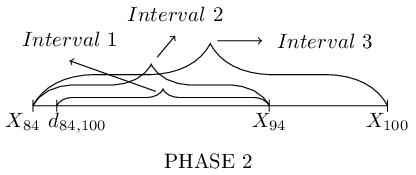}
  \caption{Example with one change-point at $r=65$ with jump 1.5. \textbf{Left plot:} The largest difference is at $d_{1,100}=65$ and the change-point is detected when the third expanding interval is checked. After detection the next intervals that are checked are $[1,55]$ and $[84, 100]$. \textbf{Right plot:} The left- and right-expanding intervals around the location of the largest difference, $d_{84,100}=85$, when checking $[84, 100]$. No change-point is detected.}
  \label{fig:tikz3}
\end{figure}

We emphasize that the differences are only used as a way to start the algorithm in an area close to the change-point. 
In cases with high variance, $\sigma^2$, of the noise, the largest difference might not be around the change-point but under the assumptions specified in Section \ref{theory}, the change-points will still be detected with high probability.
The location of the largest difference that guarantees detection of the change-point in an interval in which it is isolated is discussed in Section~\ref{discussion}, and further results are provided in Section~\ref{discussion prob} for the case that $\epsilon_t$ in \eqref{model} is Gaussian.

\subsection{The DAIS algorithm} \label{sec: DAIS_algorithm}
In the following paragraphs, the method is presented in more generality. For the rest of the paper, we use $s, e$ to denote the start- and the end-point, respectively, of the interval under consideration, while $T \in \mathbb{N}$ denotes the total length of the given signal and $1\leq s<e\leq T$. We denote by $N \in \mathbb{N}_0$ and by $r_i$ for $i=1,2,\dots,N$ the number and the locations of the true change-points, respectively, with the latter ones sorted in increasing order, while $\hat{N}$ and $\hat{r}_i$ for $i=1,2,\dots,\hat{N}$ are their estimated values. The true, unknown number, $N$, of change-points is allowed to grow with the sample size. We denote by $\lambda_T \in \mathbb{N}$ the expansion parameter and $\zeta_T \in \mathbb{R^+}$ the threshold, which is used to decide whether a time-point is a change-point; more details on the choice of their values are
given in Section \ref{parameters}. 
In addition, $d_{s,e}$ is the location of the largest difference, in absolute value, in the interval $[s,e]$  and is defined as
\begin{equation}\label{def_largest_diff}
    d_{s,e} = \left\{\begin{array}{ll}
    \underset{t \in \{s, s+1, \dots, e-1\}}{\textrm{argmax}} \{|X_{t+1} - X_t|\}, & f_t \textrm{ is piecewise-constant}, \\
    \underset{t \in \{s, s+1, \dots, e-2\}}{\textrm{argmax}}\{\lvert X_{t+2} - 2X_{t+1} + X_{t} \rvert\}, & f_t \textrm{ is continuous, piecewise-linear.}
    \end{array}\right.
\end{equation}
We denote by $C^b_{s, e}(\boldsymbol{X})$ the contrast function applied to the location $b$ using the values $\{X_s, X_{s+1},\ldots,  X_e\}$ for $b\in[s,e)$. The functions used for the cases of piecewise-constant and continuous, piecewise-linear signals are explained in Sections \ref{piecewise constant} and \ref{piecewise linear}, respectively. 
For $K^l = \lceil \frac{d_{s,e} - s + 1}{\lambda_T} \rceil$, $K^r = \lceil \frac{e - d_{s,e} + 1}{\lambda_T} \rceil$, $K^{\max} = \max\{K^l, K^r\}$, $K^{\min} = \min\{K^l, K^r\}$, and $K = K^{\max} + K^{\min}$, then the start- and end-points of the intervals $[c_m^l, c_k^r]$ in which the algorithm tests for the existence of a change-point, are given by
\begin{align*}
    & c_{m}^l =
    \max\{d_{s,e}-m\lambda_T, s\}, \quad m =0, 1, \dots,K^{\max}
    , \\
    & c_{k}^r = \min\{d_{s,e}+k\lambda_T-1,e\}, \quad k =1,2, \dots,K^{\max}.
\end{align*}
These are collected in the ordered sets, $L_{d_{s,e},s,e}$ and $R_{d_{s,e},s,e}$, defined in \eqref{end-points}, which consist of the left and right end-points, respectively, of the intervals used in DAIS. 
\sla{It holds that for $m, k\leq K^{\min}$, $s \leq c_{m}^l < c_{m-1}^l$ and $c_{k-1}^r < c_{k}^r \leq e$, while for $m,k \in \{K^{\min}+1, \ldots,K^{\max}\}$, either $s = c_{m}^l = c_{m-1}^l$ and $c_{k-1}^r < c_{k}^r \leq e$ or $s \leq c_{m}^l < c_{m-1}^l$ and $c_{k-1}^r = c_{k}^r = e$ (so one end-point remains fixed).}
Since the expansions occur in turn once from the right and once from the left, the points appear twice in each set, up until the moment when one end-point is equal to either $s$ or $e$, \sla{which occurs when $K^{\min}$ expansions have occurred on both sides, and so $2K^{\min}$ expansions in total}. 
The left- and right-expanding intervals checked are collected in the ordered set $I_j$ in \eqref{intervals}. 
The ordered sets of the end-points for the expansions are:
\begin{align} \label{end-points}
    & L_{d_{s,e},s,e} = \bigl\{c^l_0, c^l_1, c^l_1, c^l_2, \dots, c^l_{K^{\min}-1}, c^l_{K^{\min}}, c^l_{K^{\min}+1}, c^l_{K^{\min}+2}, \ldots, c^l_{K^{\max}}\bigr\}, \nonumber\\
    & R_{d_{s,e},s,e} = \bigl\{c^r_1, c^r_1, c^r_2, c^r_2, \dots, c^r_{K^{\min}}, c^r_{K^{\min}}, c^r_{K^{\min} + 1}, c^r_{K^{\min}+2}, \ldots, c^r_{K^{\max}}\bigr\},
\end{align} 
while the ordered set of the intervals checked can be written as:
\begin{align}
    I_{j} = \bigl[s_j, e_j\bigr] = \Bigl[ L_{d_{s,e},s,e}[j], R_{d_{s,e},s,e}[j]\Bigr], \textrm{ for }  j \in \{1, 2, \ldots, K\} \label{intervals},
\end{align}
where A[i] denotes the $i^{th}$ element of the ordered set A.
It is easy to show that $\lvert L_{d_{s,e},s,e} \rvert = \lvert R_{d_{s,e},s,e} \rvert = K$, where $\lvert A \rvert$ denotes the cardinality of the set A.
Note that the largest number of expansions required in order to check the whole length of the interval $[s,e]$ around $d_{s,e}$ is $\lceil \frac{e-s+1}{\lambda_T}\rceil + 1$.
The intervals $I_j$ defined in \eqref{intervals} are not deterministic. The end-points depend on the location of the largest difference, which depends on the structure of the unknown signal and the noise of the data sequence under consideration.

\begin{algorithm}[htbp]
\caption{DAIS}\label{alg:DAIS}
\begin{algorithmic}
\State \textbf{function} DAIS($s, e, \lambda_{T}, \zeta_T$)
\If{$e-s<3$}
    \State STOP
\Else
    \State Set $d_{s,e}$ as in \eqref{def_largest_diff}
    \State For $j \in \{1,2,\dots,K\}$ let $I_j=[s_j, e_j]$ as in \eqref{intervals}
    \State $i=1$
    \State \textbf{(Main part)}
    \State $b_{i} = {\rm{argmax}}_{t\in [s_{i}, e_{i})} C_{s_{i}, e_{i}}^t\left(\boldsymbol{X}\right)$
    \If{$C_{s_{i}, e_{i}}^{b_{i}}\left(\boldsymbol{X}\right) > \zeta_T$}
        \State add $b_{i}$ to the list of estimated change-points
        \State DAIS($s, s_{i}, \lambda_{T}, \zeta_T$)
        \State DAIS($e_{i}, e, \lambda_{T}, \zeta_T$)
    \Else
        \State $i=i+1$
        \If{$i \leq K$}
            \State Go back to \textbf{(Main part)} and repeat
        \Else
        \State STOP
        \EndIf
    \EndIf
\EndIf
\State \textbf{end function}
\end{algorithmic}
\end{algorithm}

Algorithm \ref{alg:DAIS} provides a pseudocode that briefly explains the workflow of DAIS.
The first step is to identify, within the observed data, $X_t$, $t=1,2,\ldots, T$, the location of the largest difference $d_{1, T}$, as defined in \eqref{def_largest_diff}. 
This is the step that makes the algorithm data-adaptive, differentiating it from the competitors, as the algorithm will check different intervals $I_j$ for the same underlying signals (in terms of the type of the change, the length of the sequence and the change-point locations) if they have different observed noise, depending on the value of $d_{1, T}$.
Applying the appropriate transformation (as in \eqref{def_largest_diff} based on the signal structure) to the unobserved true signal $f_t$ in \eqref{model}, will reveal a constant signal at 0, with non-zero values at the locations where the change-points occur, so we expect that the largest difference for the observed data $X_t$ will be at a point near the change-point with the largest magnitude of change. 
A proper explanation of what we consider to be `near' is discussed in Section \ref{discussion} in detail.
After calculating $d_{1, T}$, expanding intervals are used to check for possible change-points. 
The algorithm expands the intervals once from the right and once from the left of the point where the maximum absolute difference occurs at every step by an expansion parameter of magnitude $\lambda_T$. 
By doing this, we identify the location of the change-point that is closest to the detected largest difference.
This expansion is performed for at most $\lceil \frac{T}{\lambda_T} \rceil + 1$ steps in total. 
Upon the detection of a change-point, the algorithm restarts after discarding the data that have already been checked. 
If, for example, the change-point was detected in the subinterval $[s^{\ast}, e^{\ast}]$, then DAIS will be reapplied to $[1, s^{\ast}]$ and $[e^{\ast}, T]$. 
When checking any interval $[s,e]$ the maximum value attained by the contrast function at a point $b$ within the interval, is compared to the chosen threshold $\zeta_T$.
If this value exceeds $\zeta_T$, we take $b$ to be a change-point, for $b\in\{s,s+1,\ldots,e-1\}$ in the case of piecewise-constant signals, and $b\in\{s,s+1,\ldots,e-2\}$ in the case of continuous, piecewise-linear signals. 
If no change-point is detected, then the algorithm terminates. 

As mentioned in the Introduction, the algorithm can be applied to more general signals, for example \sla{continuous}, piecewise-polynomial signals of order $p$.
In this case, the location of the largest difference needs to be generalized.
In the more general case, it can be defined as the value of $t$ for which the \sla{absolute value of the} $(p+1)-$times differentiated signal is maximized.
\sla{Differentiating the signal a $p+1$ amount of times eliminates a degree $p$ polynomial trend (\citep{Akilagun_Fast_Optimal}).}

There are two main advantages to the algorithm due to its data-adaptive nature. 
Firstly, since at each step DAIS starts checking for potential change-points around the largest difference, which as already explained we have reason to believe that it is around the most prominent change-point,
then this change-point is detected fast. 
If the starting point is not near the change-point, detection still occurs with a slower speed.
Secondly, when the change-point is at the location of the largest difference, or close to it, due to the way the expansions occur, the change-point lies in the middle of the intervals being checked when an even number of expansions have been performed. This provides advantages in detection power due to the change-point being detected in some cases in balanced intervals.
Such near balance in the distances of the change-points from the left and right end-points of the intervals under consideration does not necessarily appear in many state-of-the-art competitors, such as Binary Segmentation, ID, NOT, WBS, WBS2 and SeedBS.
Thus, the data-adaptive nature of our algorithm enhances its speed and accuracy in estimating the locations of the change-points.

\subsection{Location of the largest difference and the isolation aspect}\label{discussion}

As already explained in detail in Section \ref{sec: DAIS_algorithm}, for a given data sequence $X_t$, the algorithm calculates the location of the largest difference $d_{s,e}$, as defined in \eqref{def_largest_diff}, in the interval $[s,e]$ under consideration at each step and checks for change-points around it.
For different data sequences, DAIS will start searching for change-points in different neighborhoods of the data, based on where the most prominent change-point (if any) lies. This is exactly the novel data-adaptive nature of the proposed algorithm, compared to the existing algorithms in the literature that exhibit a fixed workflow, not adaptive to the data structure regarding the location of the change-points, if any exist.
\sla{
In this section, we now explain the role of the location of the largest difference in the detection of a change-point in an interval where isolation is guaranteed, a discussion which is related to Assumption (A1) of Section~\ref{piecewise constant}.
}

\sla{
For the rest of this section, we are working under the framework of Assumptions (A2) and (A3), for piecewise-constant and continuous, piecewise-linear signals, respectively.
The assumptions can be found in Sections~\ref{piecewise constant} and \ref{piecewise linear} and need to be satisfied for the change-points to be detected.}
\sla{The detection of a change-point $r_j$, for $j \in \{1,\ldots,N\}, N \in \mathbb{N}$ will certainly occur if both end-points are far enough from the true location of the change-point, but of course it can happen at any smaller interval where the value of the contrast function exceeds the pre-specified threshold $\zeta_T$.}
\sla{We define the smallest distance between two consecutive change-points by $\delta_T$, as in \eqref{eq: def distance between cpts}. 
The constant $n \geq 3/2$, also introduced in this section, acts as a scaling parameter and 
in the worst case scenario, meaning that the interval in which detection occurs is the largest possible, both end-points are at least at a distance $\delta_T/2n$ from the true change-point (as also explained in the proof of Theorem \ref{consistency_theorem} in Appendix \ref{proofs}).}
\sla{Larger values of $n$ imply that the end-points of the intervals will be closer to the change-points when detection occurs.}
In the worst-case scenario, 
\sla{as just explained}, it holds that
\sla{not only} both end-points are at least a distance $\delta_T/2n$ from the change-point, 
\sla{but also} at least one of the end-points will be in one of the intervals:
\begin{equation} \label{intervals_discussion}
    I^L_j = \left(r_j-\frac{\delta_{T}}{n}, r_j-\frac{\delta_{T}}{2n}\right], \quad
    I^R_j = \left[r_j + \frac{\delta_{T}}{2n}, r_j + \frac{\delta_{T}}{n}\right).
\end{equation}
Note that in \cite{fryzlewicz2014wild} and \cite{anastasiou2022detecting} the choice $n=3/2$ was made while in \cite{baranowski2019narrowest} the value of $n=3$ was used when defining similar intervals for the location of the end-points of the interval which allow for detection to occur.
Since the length of the intervals in \eqref{intervals_discussion} is $\delta_T/2n$, the condition $\lambda_T \leq \delta_T/2n$ is required, so that there is always at least one end-point that lies within $I^L_j$ and $I^R_j$,
\slb{where, as previously explained, $\lambda_T$ is the expansion parameter}.
Since $\lambda_T \geq 1$, we implicitly require $\delta_T \geq 2n$ \sla{and, using the lower bound of $n$, $\lambda_T \leq \delta_T/3$}.

In the cases of either $\sla{d_{s,e} \leq r_{1}}$ or $\sla{d_{s,e} \geq r_N}$, it is apparent that any value of $d_{s,e}$ guarantees isolation, as the left and right expanding intervals will not include any other change-point, besides $r_1$ and $r_N$, respectively, before detection occurs.
Considering now the more complicated case, where the location of the largest difference lies between two change-points, define \sla{$\Tilde{\delta}_j$ as in \eqref{eq: def distance between cpts} to be the distance between the consecutive change-points $r_j$ and $r_{j+1}$.}
In addition, $\delta_{s,e}^j = \min\{r_{j+1}-d_{s,e}, d_{s,e}-r_j\}$ for $j=0,1,\ldots,N$.
Without loss of generality, suppose that $d_{s,e} \in \left[ r_J, r_{J+1} \right]$ for some $J\in\{ 1,\ldots,N-1 \}$.
\sla{Requiring}
\begin{align} \label{delta upper bound}
    0 \leq \delta_{s,e}^J \leq \frac{\Tilde{\delta}_J}{2} - \frac{3\delta_T}{4n},
\end{align}
guarantees detection in an interval where the change-point is isolated and also allows for the detection of the neighboring change-points.
This is because the end-point of the interval the algorithm will be applied to after detection of the first change-point (either $r_J$ or $r_{J+1}$), will have a distance of at least $\delta_T/2n$ from all the undetected change-points; see \eqref{proof_of_detection} and \eqref{slope_observed} in the supplementary material for a proof that $\delta_T/2n$ is the minimum distance of the end-points from the change-point that guarantees detection.
Note that the lower bound of $n$ ensures that the right hand side of the second inequality in \eqref{delta upper bound} is non-negative.
\sla{
As is explained later in detail, for any $n>3/2$, the upper bound in \eqref{delta upper bound} increases as $T\rightarrow\infty$.
In the case that $n=3/2$ and $\Tilde{\delta}_J = \delta_T$, the upper bound becomes $0$, which means that $d_{s,e}$ has to lie exactly at the location of the change-point for \eqref{delta upper bound} to be satisfied.
However, this does not mean that the method fails in this case; it only implies that the change-point is not guaranteed to be isolated when it is detected.
More discussion about this is given after Assumption (A1) in Section~\ref{piecewise constant}.} 

\sla{
It is worth mentioning that \eqref{delta upper bound} is derived such that all change-points can be detected in an interval in which they are isolated, for any $\lambda_T\leq \delta_T/2n$.
If we fix $\lambda_T=1$, then in the worst-case scenario, which is for detection to happen when both end-points are necessarily at least at a distance $\delta_T/2n$ from the true change-point location, one of the end-points will be exactly at a distance $\lceil \delta_T/2n\rceil$. 
So, for $\lambda_T=1$, \eqref{delta upper bound} can be relaxed to
\begin{align}\label{delta upper bound relaxed}
    0 \leq \delta_{s,e}^J \leq \frac{\Tilde{\delta}_J}{2} - \Big\lceil\frac{\delta_T}{2n} \Big\rceil.
\end{align}
For the choice $n=3/2$ and $\Tilde{\delta}_J = \delta_T$, using \eqref{delta upper bound relaxed}, $d_{s,e}$ has to be within $\delta_T/2 - \Big\lceil\delta_T/3 \Big\rceil$ of the true location of the change-point.
}

\sla{
From the consistency result as in Theorem~\ref{consistency_theorem}, $\delta_T$ must be at least of order $O(\log T)$. 
This is because $\delta_T$ needs to be larger than the distance between the estimated and true locations of the change-points, which is at most of order $O(\log T)$, as can be seen from \eqref{main result mean}; otherwise we would not be able to match the estimated change-points with the true ones.
Also, by definition, $\Tilde{\delta}_j$ is at least of order $\mathcal{O}(\delta_T)$ for any $j=0,1,\ldots,N$.
This means that as $T\rightarrow\infty$, the upper bound of $\delta_{s,e}^J$ as given in \eqref{delta upper bound relaxed} (and \eqref{delta upper bound} for $n>3/2$) increases.
Additionally,} if $\underline{f}_T$ increases, the signal-to-noise ratio becomes larger and so the probability that the largest difference is exactly at the location of the change-point increases.
\sla{In both cases, the probability that $d_{s,e}$ lies within $\delta_{s,e}^J$ of any change-point increases.}
More details on this, in the case that $\epsilon_t, t=1,2,\ldots,T$ follow the Gaussian distribution, are provided in Section~\ref{discussion prob}.

\sla{We define 
\begin{equation}\label{eq: max_largest_diff}
    d_{s,e}^{\max} = \left\{ \begin{array}{cc}
        T-1, & \textrm{when } f_t \textrm{ is piecewise-constant,} \\
        T - 2, & \textrm{when } f_t \textrm{ is continuous, piecewise-linear,}
    \end{array} \right.,
\end{equation}
to be the maximum value that $d_{s,e}$ can obtain.
In the case of continuous, piecewise polynomial signals of order $p$, it can be extended to $d_{s,e}^{\max} = T-p-1$.}
Considering now any $d_{s,e} \in \{1,2,\ldots,\sla{d_{s,e}^{\max}}\}$ and using \eqref{delta upper bound}, \sla{in order to have detection while the change-point is isolated,} it must hold that
\begin{equation} \label{eq: assumption_largest_diff}
    d_{s,e} \in \left[1, r_1 + \frac{\Tilde{\delta}_1}{2} - \frac{3\delta_T}{4n} \right] 
    \cup 
    \left[r_N - \frac{\Tilde{\delta}_{N-1}}{2} + \frac{3\delta_T}{4n}, \sla{d_{s,e}^{\max}} \right] \cup \left\{ 
    \bigcup_{j=2}^{N-1} B_{n;j}  \right\}
    \end{equation}
where
\begin{equation}\label{eq: interval_lergest_diff}
    B_{n;j} = \left[r_j - \frac{\Tilde{\delta}_{j-1}}{2} + \frac{3\delta_T}{4n}, r_j + \frac{\Tilde{\delta}_j}{2} - \frac{3\delta_T}{4n} \right]
\end{equation}
\sla{and the union over $j$ in \eqref{eq: assumption_largest_diff} is the empty set for $N\leq2$.}
Note that the length of the interval where $d_{s,e}$ cannot lie under this assumption, \sla{at each step of the algorithm,} is 
\sla{$2\lceil 3\delta_T/4n \rceil - 1$}
around the midpoint of every pair of consecutive change-points.
\sla{Similar intervals to \eqref{eq: assumption_largest_diff} can be defined using \eqref{delta upper bound relaxed}, but we focus on the general $\lambda_T\leq\delta_T/2n$ case.}

\sla{
Now, let us consider the case where the signal-to-noise ratio is so small that it is impossible for the location of the largest difference to be informative (near anyone of the true change-points); in such cases, and based on the noise, the largest difference could be anywhere in the set $\{1, 2, \ldots d_{s,e}^{{\rm max}}\}$, with $d_{s,e}^{{\rm max}}$ as in \eqref{eq: max_largest_diff}. Assumptions (A2) or (A3), which can be found in Sections ~\ref{piecewise constant} and \ref{piecewise linear}, respectively, and are needed for detection of the change-points to be possible, are considered to be satisfied. 
In this case,
the probability that the location of the largest difference does not lie in \eqref{eq: assumption_largest_diff} is 
\slb{$(N-1) \left(2\lceil \frac{3\delta_T}{4n} \rceil - 1\right)/d_{s,e}^{\max}
$}.
Therefore, even in the worst case that the signal-to-noise ratio is such that the location of the largest difference is uninformative, as long as $N\delta_T$ is of order less that $O(T)$, the probability of the result in \eqref{eq: assumption_largest_diff} will still go to 1 as $T \to \infty$. This is a strong indication that the requirement expressed in \eqref{eq: assumption_largest_diff}, which is, in fact, Assumption (A1) provided in Section \ref{theory}, is not restrictive and it only used to have guaranteed isolation and detection at the same time.}

We highlight that in the specific case of $\epsilon_t$ in \eqref{model} being i.i.d. random variables following the $\mathcal{N}(0,1)$, a discussion on the calculation of the exact probability that the largest difference is at a location that guarantees detection in an interval where the change-point is isolated, for all change-points, can be found in Section \ref{discussion prob}.

\section{Theory} \label{theory}
DAIS requires knowledge of the standard deviation, $\sigma$, of the data generation mechanism in \eqref{model}. 
If $\sigma$ is unknown, it can be estimated using the Median Absolute Deviation method (MAD), with the estimator defined as $\hat\sigma := 1.4826\times \text{median} (\left| \boldsymbol{x} - \text{median}(\boldsymbol{x}) \right|)$ for $\boldsymbol{x} = (x_1, x_2,\dots x_T)$, as proposed by \cite{Hampel}. \sla{The obtained estimator has a simple explicit formula, needs little computation time, is very robust with a bounded influence function and a 50\% breakdown point, and, in the case of Gaussianity is a consistent estimator of the population standard deviation (\citep{doi:10.1080/01621459.1993.10476408}).}
\sla{Due to the expected presence of change-points in the data, we use this estimator on the first and second order differenced data, for piecewise-constant and continuous, piecewise-linear signals, respectively; see for example \cite{baranowski2019narrowest} for a similar approach.
The exact expressions for the piecewise-constant and continuous, piecewise-linear cases are provided in Sections~\ref{piecewise constant} and \ref{piecewise linear}, respectively. }
\sla{We take $\sigma$ to be known} and, without loss of generality, we let $\sigma=1$. The model becomes
\begin{equation} \label{model_sigma1}
    X_t = f_t + \epsilon_t, \quad t = 1, 2, \ldots, T,
\end{equation}
\sla{where $\epsilon_t, 
t= 1, 2, \ldots, T$ are i.i.d. random variables following the $\mathcal{N}(0,1)$ distribution. 
This assumption is commonly used in the literature to prove theoretical consistency results.}


\sla{Before proceeding, we introduce some notation that is used throughout the paper.
With $f_t$ the underlying signal as in \eqref{model}, then for $j=1,2,\ldots,N$, the minimum magnitude of change in the data sequence is defined as
\begin{equation} \label{eq: def minimum magnitude of change}
    \underline{f}_T = \min_j \Delta^f_j,
\end{equation}
where
\begin{equation} \label{eq: def magnitude of change}
    \Delta^f_j = 
    \left\{
    \begin{array}{ll}
        \left|f_{r_j+1}-f_{r_j}\right|, & \textrm{when } f_t \textrm{ is piecewise-constant,} \\
        \\
        \lvert 2f_{r_j} - f_{r_j+1} - f_{r_j-1} \rvert, & \textrm{when } f_t \textrm{ is continuous, piecewise-linear.}
    \end{array}
    \right.
\end{equation}
Furthermore, for $r_0=0$ and $r_{N+1}=T$, then the minimum distance between consecutive change-points is defined as
\begin{equation}\label{eq: def distance between cpts}
    \delta_T = \min_{j=0,\ldots,N} \Tilde{\delta}_j, 
    \textrm{ where } \Tilde{\delta}_j = r_{j+1} - r_{j}.
\end{equation}
}

In Sections \ref{piecewise constant} and \ref{piecewise linear} we provide the theoretical results on the consistency of the algorithm for the cases of piecewise-constant and continuous, piecewise-linear signals, respectively. 
The proofs of Theorems \ref{consistency_theorem} and \ref{consistency_theorem_slope} can be found in Appendix \ref{proofs} and the supplementary material, respectively.
In Section~\ref{sec: multivariate_extension} we also provide theoretical results for the consistency in the case of multivariate signals and the outline of the proof can be found in Appendix~\ref{appendix: proof_multivariate}.
In Section \ref{discussion prob}, we return to the discussion of Section \ref{discussion} and present some results in the case when \eqref{eq: assumption_largest_diff} holds, and $\epsilon_t$ in \eqref{model} are i.i.d. from the $\mathcal{N}(0,1)$ distribution.

\subsection{Piecewise-constant signals} \label{piecewise constant}
Under piecewise-constancy, $f_t = \mu_j$ for $t = r_{j-1}+1, \ldots, r_j$ and $f_{r_j} \neq f_{r_j+1}$. \sla{The estimator of the standard deviation used is $\hat\sigma_C := 1.4826\times \text{median} (\left| \boldsymbol{Y} - \text{median}(\boldsymbol{Y}) \right|)/\sqrt{2}$, where $\boldsymbol{Y} = (X_2-X_1, \ldots, X_T - X_{T-1})$}. 
The statistic used as a contrast function in this case is the absolute value of the CUSUM statistic, which is defined as
\begin{equation}\label{CUSUM}
    C_{s,e}^b\left(\boldsymbol{X}\right) = \left| \sqrt{\frac{e-b}{\ell(b-s+1)}}\sum^b_{t=s}X_t - \sqrt{\frac{b-s+1}{\ell(e-b)}}\sum^{e}_{t=b+1}X_t \right|,
\end{equation}
where $s \leq b < e$ and $\ell=e-s+1$. 
The value of $C_{s,e}^b\left(\boldsymbol{X}\right)$ is small if $b$ is not a change-point and large otherwise.

To prove the consistency of our method in accurately estimating the number and locations of the estimated change-points, we work under the following assumptions: 
\begin{enumerate}
    \item[{(A1)}] The location of the largest difference $d_{s,e}$ as defined in \eqref{def_largest_diff}, satisfies \eqref{eq: assumption_largest_diff} every time the algorithm is applied to an interval $[s,e]$ which contains at least one change-point.

    \item[{(A2)}] The minimum distance, $\delta_T$, between two successive change-points, and the minimum magnitude of jumps 
    \sla{$\underline{f}_T$, as defined in \eqref{eq: def distance between cpts} and \eqref{eq: def minimum magnitude of change}, respectively,}
    are connected by $\sqrt{\delta_T}\underline{f}_T \geq \underline{C} \sqrt{\log T}$ for a large enough constant $\underline{C}$.
\end{enumerate}

The first assumption is a formal statement of what is discussed in Section \ref{discussion} and is required to ensure that every change-point can be detected in an interval where isolation is guaranteed.
It is not a strict assumption, for more details on this see Sections \ref{discussion} and \ref{discussion prob}.
The exact probability of the consecutive pairwise difference being larger in absolute value at a change-point location compared to a point where no change occurs, in the case where $\epsilon_t$ are assumed to be Gaussian, is provided in Section \ref{discussion prob}. 
(A2) is a typical assumption in the literature and characterizes the difficulty of the detection problem. It is worth mentioning that if $\delta_T$ is of order greater than $O(\log T)$, then $\underline{f}_T$ is allowed to decrease as $T$ increases. Further discussion for (A2) is given after Theorem~\ref{consistency_theorem}.
\sla{Assumptions (A1) and (A2), together, guarantee that all change-points will be detected in intervals in which they are isolated.
However, even when (A1) fails, the change-points can still potentially get detected in practice. 
This can happen in one of two ways. 
The first is that detection occurs in an interval in which the change-point is isolated, whose end-points (one or both) are at a distance of less than $\delta_T/2n$ from the true location of that change-point. 
The second way is in an interval in which the change-point is not isolated. 
In both cases the detection power, which is based on the value of the contrast function, is reduced, but change-points can still be detected.
The detection power is maximized when the change-point is isolated, in the midpoint of the interval under consideration with end-points far from it, and this is what we attempt to achieve using Assumption (A1).}

Below, we provide the relevant theorem for the consistency of DAIS in accurately estimating the true number and the locations of the change-points in the case of piecewise-constant signals.
\sla{The proof of the theorem can be found in Appendix~\ref{proofs}.}

\begin{theorem}\label{consistency_theorem}
Let $\{X_t\}_{t=1,2,\ldots, T}$ follow model \eqref{model_sigma1}, with $f_t$ being a piecewise-constant signal and assume that the random sequence $\{\epsilon_t\}_{t=1,2,\ldots, T}$ is independent and identically distributed (i.i.d.) from the normal distribution with mean zero and variance one and also that Assumptions (A1) and (A2) hold. 
Let $N$ and $r_j, j=1, 2, \ldots N$ be the number and location of the change-points, while $\hat{N}$ and $\hat{r}_j, j=1, 2, \ldots \hat{N}$ their estimates, sorted in increasing order. 
Then, with $\Delta_j^f$, $\underline{f}_T$, and $\delta_T$ as in \eqref{eq: def magnitude of change}, \eqref{eq: def minimum magnitude of change}, and \eqref{eq: def distance between cpts}, respectively, there exist positive constants $C_1, C_2, C_3, C_4$ independent of $T$ such that for $C_1 \sqrt{\log T} \leq \zeta_{T} < C_2 \sqrt{\delta_T}\underline{f}_T$ and for sufficiently large $T$, we obtain
\begin{equation} \label{main result mean}
\mathbb{P}\Biggl( \hat{N} = N, \max_{j=1, 2, \ldots, N} \biggl( \left| \hat{r}_j - r_j \right| \left( \Delta^{f}_j \right)^{2} \biggl) 
\leq C_{3} \log T \Biggl) \geq 1-\frac{C_{4}}{T}.
\end{equation}
\end{theorem}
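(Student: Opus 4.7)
\bigskip

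\noindent\textbf{Proof proposal for Theorem~\ref{consistency_theorem}.}

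The plan is to follow the standard two-step strategy for isolation-based change-point methods (as in \cite{anastasiou2022detecting,baranowski2019narrowest,fryzlewicz2014wild}): first define a high-probability event on which all relevant noise sums are controlled, and then argue by induction on the recursive calls of DAIS that, on this event, every true change-point is detected within $C_3\log T/(\Delta^f_j)^2$ of its location and no spurious change-point is ever added. Concretely, I would introduce the event
\begin{equation*}
    A_T = \left\{\, \max_{1\le s\le b<e\le T}\bigl| C^b_{s,e}(\boldsymbol{\epsilon})\bigr| \le \tfrac{1}{2} C_1\sqrt{\log T}\,\right\},
\end{equation*}
where $C^b_{s,e}(\boldsymbol{\epsilon})$ is the CUSUM \eqref{CUSUM} applied to the noise sequence. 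A standard Gaussian tail bound combined with a union bound over the at most $T^3$ triples $(s,b,e)$ gives $\mathbb{P}(A_T)\ge 1-C_4/T$ for $C_1$ large enough. All the deterministic arguments below are then carried out on $A_T$.

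Next, I would prove by induction on the number of true change-points contained in the interval $[s,e]$ to which DAIS is currently applied that (i) every true change-point in $[s,e]$ is returned and (ii) no point outside $\bigcup_j [r_j-c\log T/(\Delta^f_j)^2,\, r_j+c\log T/(\Delta^f_j)^2]$ is ever returned. The base case ($[s,e]$ contains no change-point) follows from the fact that $f_t$ is constant on $[s,e]$, so $C^b_{s,e}(\boldsymbol{X})=C^b_{s,e}(\boldsymbol{\epsilon})$, which on $A_T$ is at most $\tfrac12 C_1\sqrt{\log T}<\zeta_T$, hence the algorithm exits without declaring a change-point. For the inductive step, Assumption (A1) places $d_{s,e}$ inside one of the intervals in \eqref{eq: assumption_largest_diff}, and combined with the constraint $\lambda_T\le\delta_T/(2n)$ this forces the deterministic expanding sequence $\{I_j\}$ to contain an index $j^\star$ such that exactly one true change-point $r_{i^\star}$ lies in $I_{j^\star}$ and both end-points of $I_{j^\star}$ are at distance at least $\delta_T/(2n)$ from $r_{i^\star}$. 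Let $j^\dagger\le j^\star$ be the first index at which detection occurs. For $j<j^\dagger$ non-detection is automatic (the condition $C^{b_j}_{s_j,e_j}>\zeta_T$ fails by definition), and the interval $I_{j^\dagger}$ therefore still contains an isolated change-point with both end-points bounded away from it, so the standard CUSUM lower bound gives $\max_b C^b_{s_{j^\dagger},e_{j^\dagger}}(\boldsymbol{f})\ge c\sqrt{\delta_T}\,\underline{f}_T$, which together with $A_T$ and the upper constraint $\zeta_T<C_2\sqrt{\delta_T}\,\underline{f}_T$ yields $\max_b C^b_{s_{j^\dagger},e_{j^\dagger}}(\boldsymbol{X})>\zeta_T$.

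For the localization part I would plug the CUSUM decomposition $C^b_{s,e}(\boldsymbol{X}) = C^b_{s,e}(\boldsymbol{f}) + C^b_{s,e}(\boldsymbol{\epsilon})$ into the comparison $C^{\hat r_{i^\star}}_{s_{j^\dagger},e_{j^\dagger}}(\boldsymbol{X})\ge C^{r_{i^\star}}_{s_{j^\dagger},e_{j^\dagger}}(\boldsymbol{X})$ that follows from the argmax definition. The deterministic part $C^b_{s,e}(\boldsymbol{f})$ is a concave triangular function of $b$ with peak at $r_{i^\star}$ and curvature proportional to $(\Delta^f_{i^\star})^2$ (after using that the end-points are $\Theta(\delta_T)$ away from $r_{i^\star}$), while on $A_T$ the two noise terms contribute at most $C_1\sqrt{\log T}$ in absolute value. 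Rearranging gives $|\hat r_{i^\star}-r_{i^\star}|(\Delta^f_{i^\star})^2\le C_3\log T$, which is exactly the bound in \eqref{main result mean}. After this detection DAIS recurses on $[s,s_{j^\dagger}]$ and $[e_{j^\dagger},e]$, and the choice in \eqref{eq: assumption_largest_diff} (together with $\lambda_T\le\delta_T/(2n)$) guarantees that both subintervals are separated from the remaining true change-points by at least $\delta_T/(2n)$, so the inductive hypothesis applies to each recursive call. Finally, (A2) forces $C_3\log T/(\Delta^f_j)^2 \le C_3\delta_T/\underline{C}^2$, so the detected points can be matched bijectively to the true ones, giving $\hat N=N$.

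The main obstacle I expect is the bookkeeping that ties Assumption (A1), the deterministic expansion schedule in \eqref{end-points}--\eqref{intervals}, and the isolation-before-detection invariant together in a way that survives the recursion. In particular, showing that every first detection on a call necessarily happens in an interval that isolates a single change-point (rather than accidentally in an interval that already contains two of them and produces a spurious argmax) requires carefully exploiting both the upper bound on $\delta^J_{s,e}$ in \eqref{delta upper bound} and the step size $\lambda_T\le\delta_T/(2n)$ to rule out the ``two change-point interval fires first'' scenario. Everything else amounts to the standard deterministic/stochastic split that is routine once $A_T$ is in hand.
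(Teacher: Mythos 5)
Your overall architecture (a single high-probability event, detection by lower-bounding the signal CUSUM against $\zeta_T$, recursion on the two leftover intervals, and no false positives on change-point-free segments) matches the paper's Steps 1, 3, 4, 5 and 6. The genuine gap is in your localization step. With only your event $A_T$ (a uniform $O(\sqrt{\log T})$ bound on noise CUSUMs), the argmax comparison $C^{\hat r_j}_{s,e}(\boldsymbol{X}) \ge C^{r_j}_{s,e}(\boldsymbol{X})$ gives $\lvert \tilde f^{r_j}_{s,e}\rvert - \lvert \tilde f^{\hat r_j}_{s,e}\rvert \le 2\sqrt{8\log T}$, and to convert this into a statement about $\lvert \hat r_j - r_j\rvert$ you must pass to the difference of \emph{squared} CUSUMs (this is what Lemma \ref{Lemma_for_main_thm} controls: $(\tilde f^{r_j}_{s,e})^2 - (\tilde f^{b}_{s,e})^2 \approx \lvert b - r_j\rvert (\Delta^f_j)^2$ near the peak). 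That multiplication costs you the peak height $\lvert \tilde f^{r_j}_{s,e}\rvert \le \sqrt{\min\{r_j-s+1,\,e-r_j\}}\,\Delta^f_j$, so what you actually obtain is $\lvert \hat r_j - r_j\rvert (\Delta^f_j)^2 \lesssim \sqrt{\log T}\cdot\sqrt{\delta'}\,\Delta^f_j$, where $\delta'$ is the distance from $r_j$ to the nearer end-point. Since (A2) is only a \emph{lower} bound on $\sqrt{\delta_T}\underline{f}_T$, the factor $\sqrt{\delta'}\Delta^f_j$ can be arbitrarily larger than $\sqrt{\log T}$ (take $\delta_T \asymp T$ with fixed jump sizes): your argument then yields a localization error of order $\sqrt{\delta_T \log T}/\Delta^f_j$, not $\log T/(\Delta^f_j)^2$, so the claimed "rearranging gives \eqref{main result mean}" does not go through. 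The "curvature proportional to $(\Delta^f_j)^2$" intuition is correct for the squared CUSUM, but $A_T$ controls the unsquared one, and bridging the two reintroduces the peak height.

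The missing ingredient is the paper's second event $B_T$ in \eqref{B_T}: a uniform $\sqrt{8\log T}$ bound on the noise projected onto the \emph{normalized} directions $\boldsymbol{\psi_{s,e}^b}\langle\boldsymbol{f},\boldsymbol{\psi_{s,e}^b}\rangle - \boldsymbol{\psi_{s,e}^{r_j}}\langle\boldsymbol{f},\boldsymbol{\psi_{s,e}^{r_j}}\rangle$. Combined with the identity $\bigl\|\boldsymbol{\psi_{s,e}^b}\langle\boldsymbol{f},\boldsymbol{\psi_{s,e}^b}\rangle - \boldsymbol{\psi_{s,e}^{r_j}}\langle\boldsymbol{f},\boldsymbol{\psi_{s,e}^{r_j}}\rangle\bigr\|_2^2 = (\tilde f^{r_j}_{s,e})^2 - (\tilde f^{b}_{s,e})^2 =: \Lambda$ from Lemma \ref{Lemma_for_main_thm}, the cross term in the expansion of $(\tilde X^{r_j}_{s,e})^2 - (\tilde X^{b}_{s,e})^2$ (see \eqref{equivalent_contradiction}--\eqref{eq_extra_stuff}) is bounded by $2\sqrt{\Lambda}\sqrt{8\log T}$, i.e.\ it is self-normalized by $\sqrt{\Lambda}$ rather than by the peak height. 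The resulting inequality $\Lambda > 8\log T + \sqrt{2}\sqrt{\Lambda}\sqrt{8\log T}$ is quadratic in $\sqrt{\Lambda}$ and yields $\Lambda > (2\sqrt{2}+4)^2\log T$, which is exactly what converts into $\lvert \hat r_j - r_j\rvert(\Delta^f_j)^2 \le C_3\log T$ — together with a contradiction argument (using $\zeta_T \ge C_1\sqrt{\log T}$) showing the detection interval's own end-points are at distance greater than $C_3\log T/(\Delta^f_j)^2$ from $r_j$. A secondary, smaller issue: you assert that after a detection both recursion subintervals are automatically $\delta_T/(2n)$-separated from all remaining change-points; this is not an immediate consequence of (A1) for the nearest neighbour $r_{J-1}$, and the paper's Step 5 proves it by a case analysis on which end-point of the detection interval landed in $I^L_J$ or $I^R_J$, using \eqref{delta upper bound}.
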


The lower bound of the probability in \eqref{main result mean} is $1-\mathcal{O}(1/T)$.
From Theorem \ref{consistency_theorem}, we can conclude that in order to be able to match the estimated change-points with the true ones, $\delta_T$ needs to be larger than $\max_{j=1, 2, \ldots, N} \left| \hat{r}_j - r_j \right|$, which means that $\delta_T$ must be at least of order $\mathcal{O}(\log T)$. 
Assumption (A2) ensures that the rate attained for $\delta_T \underline{f}_T^2$, which characterizes the complexity of the problem, is $\mathcal{O}(\log T)$ and, as \cite{10.2307/24310529} argues, the lowest possible $\delta_T \underline{f}_T^2$ that allows detection of the change-points, in the case of piecewise-constant signals is $\mathcal{O}(\log T - \log\log T)$.
Therefore, the rate obtained by DAIS is optimal up to a rather negligible double logarithmic factor. 
As mentioned in \cite{yu2020review}, the rate we obtain has been established in various different papers.
Rates for other methods which are the same or comparable to ours can be found in \cite{CHO202476}.

\subsection{Continuous, piecewise-linear signals} \label{piecewise linear}
For the case of continuous, piecewise-linear signals,  $f_t = \mu_{j,1} + \mu_{j,2}t$ for $t = r_{j-1}+1, \ldots, r_j$ and $f_{r_j-1} + f_{r_j+1} \neq 2f_{r_j}$ with the additional constraint $\mu_{j,1}+\mu_{j,2}r_j = \mu_{j+1,1}+\mu_{j+1,2}r_j$ so that the signal is continuous.
\sla{The estimator of the standard deviation used is $\hat\sigma_L := 1.4826 \times \text{median} (\left| \boldsymbol{Y} - \text{median}(\boldsymbol{Y}) \right|)/\sqrt{6}$, where $\boldsymbol{Y} = (X_1-2X_2+X_3, \ldots, X_{T-2} - 2X_{T-1}+X_{T})$}. 
Through a log-likelihood approach, \cite{baranowski2019narrowest} shows that contrast function is $C_{s,e}^b\left(\boldsymbol{X}\right) = \left| \langle \boldsymbol{X}, \boldsymbol{\phi^b_{s,e}}\rangle\right|$, where the contrast vector $\boldsymbol{\phi^b_{s,e}} = (\phi^b_{s,e}(1),\dots, \phi^b_{s,e}(T))$ is given by
\begin{equation} \label{phi_definition}
    \phi^b_{s,e}(t)=\left\{\begin{array}{lll}
    \alpha^b_{s,e}\beta^b_{s,e} \big[(e+2b-3s+2)t - (be+bs-2s^2+2s)\big], & t \in\{s, \dots, b\}, \\
    -\frac{\alpha^b_{s,e}}{\beta^b_{s,e}} \big[(3e-2b-s+2)t - (2e^2+2e-be-bs)\big], & \hspace{-0.4cm} t \in\{b+1, \ldots, e\},\\
    0, & \text{otherwise,}
\end{array}\right.
\end{equation}
where $\ell=e-s+1$ and
\begin{align*}
    & \alpha^b_{s,e} = \Biggl(\frac{6}{\ell(\ell^2-1)(1+(e-b+1)(b-s+1)+(e-b)(b-s))}\Biggr)^{\frac{1}{2}}, \\
    & \beta^b_{s,e} = \Biggl( \frac{(e-b+1)(e-b)} {(b-s+1)(b-s)} \Biggr) ^ \frac{1}{2}.
\end{align*} 
Under the Gaussianity assumption, $C_{s,e}^b\left(\boldsymbol{X}\right)$, as defined above, is maximized at the same point as the generalized log-likelihood ratio for all possible single change-points within $[s,e)$.

Before presenting the theoretical result for the consistency of the number and location of the estimated change-points by DAIS in the case of continuous, piecewise-linear signals, we require the following assumption, which is equivalent to (A2) in Section \ref{piecewise constant}, but now under the current setting of changes in the slope.

\begin{enumerate}
    \item[{(A3)}] The minimum distance, $\delta_T$, between two successive change-points and the minimum magnitude of jumps \sla{$\underline{f}_T$, as defined in \eqref{eq: def distance between cpts} and \eqref{eq: def minimum magnitude of change}, respecitvely,}
    are connected by $\delta_T ^ {3/2}\underline{f}_T \geq C^{\ast} \sqrt{\log T}$ for a large enough constant $ C^{\ast}$.
\end{enumerate}

\begin{theorem}\label{consistency_theorem_slope}
Let $\{X_t\}_{t=1,2,\ldots, T}$ follow model \eqref{model_sigma1}, with $f_t$ being a continuous, piecewise-linear signal and assume that the random sequence $\{\epsilon_t\}_{t=1,2,\ldots, T}$ is independent and identically distributed (i.i.d.) from the normal distribution with mean zero and variance one and also that (A1) and (A3) hold. 
Let $N$ and $r_j, j=1, 2, \ldots N$ be the number and location of the change-points, while $\hat{N}$ and $\hat{r}_j, j=1, 2, \ldots \hat{N}$ are their estimates, sorted in increasing order. 
In addition, $\Delta^f_j$, \sla{as defined in \eqref{eq: def magnitude of change}} is the magnitude of \sla{each change in the slope of} 
$f_t$, \sla{and} $\underline{f}_T$ and $\delta_T$ \sla{are defined in \eqref{eq: def minimum magnitude of change} and \eqref{eq: def distance between cpts}, respectively}. 
Then there exist positive constants $C_1, C_2, C_3, C_4$ independent of $T$ such that for $C_1 \sqrt{\log T} \leq \zeta_{T} < C_2 \delta_T^{3/2}\underline{f}_T$ and for sufficiently large $T$, we obtain
\begin{equation} \label{main result slope}
\mathbb{P}\Biggl( \hat{N} = N, \max_{j=1, 2, \ldots, N} \biggl( \left| \hat{r}_j - r_j \right| \left( \Delta^{f}_j \right)^{2/3} \biggl) \leq C_{3} (\log T)^{1/3} \Biggl) \geq 1-\frac{C_{4}}{T}.
\end{equation}
\end{theorem}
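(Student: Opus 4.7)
The plan is to mirror the structure of the proof of Theorem~\ref{consistency_theorem}, replacing the CUSUM contrast with the linear contrast from \eqref{phi_definition}, and carefully accounting for the cubic (rather than linear) decay of the deterministic contrast around a slope-change location $r_j$, which is responsible for the $(\log T)^{1/3}$ rate and the exponent $2/3$ on $\Delta^f_j$. The argument decomposes naturally into three parts: (i) uniform control of the noise inner-products, (ii) demonstrating that Assumption (A1) combined with the data-adaptive expansion scheme of DAIS eventually produces an interval in which a true change-point is isolated with both end-points at distance at least $\delta_T/(2n)$, and (iii) converting the resulting detection into the location-accuracy statement.

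For part (i), I would first note that the vector $\boldsymbol{\phi^b_{s,e}}$ in \eqref{phi_definition} is normalized so that $\|\boldsymbol{\phi^b_{s,e}}\|_2 = 1$; hence under the Gaussian noise assumption, $\langle \boldsymbol{\epsilon}, \boldsymbol{\phi^b_{s,e}} \rangle \sim \mathcal{N}(0,1)$ for each fixed triple $(s,b,e)$. A union bound over the $O(T^3)$ triples with $s \leq b < e$, combined with the Gaussian tail bound $\mathbb{P}(|Z|>x)\leq 2e^{-x^2/2}$, produces an event $A_T$ with $\mathbb{P}(A_T) \geq 1-C_4/T$ on which $\max_{s,b,e} |\langle \boldsymbol{\epsilon}, \boldsymbol{\phi^b_{s,e}}\rangle| \leq C_1\sqrt{\log T}$. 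On $A_T$, the lower bound $\zeta_T \geq C_1\sqrt{\log T}$ immediately rules out detection in any subinterval free of change-points, since there the observed contrast coincides with the noise contrast.

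For part (ii), I would proceed by induction on the number of undetected change-points inside the current call $[s,e]$. Assumption (A1) places $d_{s,e}$ either inside some $B_{n;j}$ as in \eqref{eq: interval_lergest_diff} or near the boundary change-points, so the alternating left/right expansions of DAIS with step $\lambda_T \leq \delta_T/(2n)$ must eventually produce an interval $I_{j^\ast}=[s_{j^\ast},e_{j^\ast}]$ that contains a single change-point $r_j$ with $\min\{r_j - s_{j^\ast},\, e_{j^\ast}-r_j\} \geq \delta_T/(2n)$. Following the calculations of \cite{baranowski2019narrowest} on such an isolated, well-balanced interval, one obtains a lower bound of the form $\max_b |\langle \boldsymbol{f}, \boldsymbol{\phi^b_{s_{j^\ast},e_{j^\ast}}}\rangle| \geq c\,\delta_T^{3/2}\underline{f}_T$, which, together with the noise control from (i) and the upper bound on $\zeta_T$ furnished by (A3), forces detection. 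The inequality \eqref{delta upper bound} is what guarantees that the resulting split intervals $[s,s_{j^\ast}]$ and $[e_{j^\ast},e]$ still contain their remaining change-points at distance at least $\delta_T/(2n)$ from their end-points, so the induction closes.

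For part (iii), which I expect to be the main obstacle, I would analyze the map $b\mapsto \langle \boldsymbol{f}, \boldsymbol{\phi^b_{s_{j^\ast},e_{j^\ast}}}\rangle$ near $r_j$. Because $f_t$ is continuous and only has a kink at $r_j$, a Taylor-type expansion shows that the squared deterministic contrast satisfies $\langle\boldsymbol{f},\boldsymbol{\phi^{r_j}}\rangle^2 - \langle\boldsymbol{f},\boldsymbol{\phi^b}\rangle^2 \asymp (\Delta^f_j)^2 |b-r_j|^3 / L$, with $L = e_{j^\ast} - s_{j^\ast}$; this cubic decay contrasts with the quadratic decay of the CUSUM case and is the source of the $1/3$ exponent on $\log T$. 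Since the noise can perturb the squared contrast by at most a term of order $\sqrt{\log T}\cdot\max_b|\langle\boldsymbol{f},\boldsymbol{\phi^b}\rangle|$, balancing the two sides of the inequality $\langle\boldsymbol{X},\boldsymbol{\phi^{\hat{r}_j}}\rangle^2 \geq \langle\boldsymbol{X},\boldsymbol{\phi^{r_j}}\rangle^2$ and solving for the displacement produces $|\hat{r}_j-r_j|(\Delta^f_j)^{2/3}\leq C_3(\log T)^{1/3}$. Intersecting the events from (i)--(iii) yields \eqref{main result slope} on $A_T$, completing the proof.
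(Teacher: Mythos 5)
Your parts (i) and (ii) track the paper's proof in outline (its Step 1 noise event, the isolation-plus-detection argument of Steps 3--5 with endpoints at distance at least $\delta_T/2n$, and the termination Step 6), but part (iii) --- which you rightly flag as the main obstacle --- has two concrete defects, and it is exactly here that the paper does something you are missing. First, your claimed deterministic decay $\langle\boldsymbol{f},\boldsymbol{\phi^{r_j}}\rangle^2 - \langle\boldsymbol{f},\boldsymbol{\phi^{b}}\rangle^2 \asymp (\Delta^f_j)^2\,|b-r_j|^3/L$ is not correct: the exact statement (Lemma \ref{Lemma_for_main_thm_slope2}, i.e.\ Lemma 7 of the NOT supplement) is $\frac{1}{63}\min\{|b-r_j|,\eta\}^3\,(\Delta^f_j)^2$, with \emph{no} division by the interval length $L$. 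Carrying your spurious $1/L$ through the balancing inequality inflates the displacement bound by a factor $L^{1/3}$, and since $L$ can be of order $\delta_T$ or larger, this already destroys the $(\log T)^{1/3}$ rate.

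Second, and more fundamentally, your bound on the stochastic perturbation of the squared contrast, of order $\sqrt{\log T}\cdot\max_b|\langle\boldsymbol{f},\boldsymbol{\phi^b}\rangle|$, is too crude to close the argument. The paper introduces a \emph{second} high-probability event ($B_T^{\ast}$ in \eqref{A_T_slope}), absent from your part (i), which controls the self-normalized quantity $\bigl|\bigl\langle\boldsymbol{\phi^{b}}\langle\boldsymbol{f},\boldsymbol{\phi^{b}}\rangle - \boldsymbol{\phi^{r_j}}\langle\boldsymbol{f},\boldsymbol{\phi^{r_j}}\rangle,\boldsymbol{\epsilon}\bigr\rangle\bigr| \big/ \bigl\|\boldsymbol{\phi^{b}}\langle\boldsymbol{f},\boldsymbol{\phi^{b}}\rangle - \boldsymbol{\phi^{r_j}}\langle\boldsymbol{f},\boldsymbol{\phi^{r_j}}\rangle\bigr\|_2$ uniformly by $\sqrt{8\log T}$. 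Combined with the identity of Lemma \ref{Lemma_for_main_thm_slope2}, that this norm squared \emph{equals} the deterministic gap $\Lambda$, the cross term is bounded by $2\sqrt{\Lambda}\sqrt{8\log T}$, so the balance reads $\Lambda > 8\log T + \sqrt{2}\sqrt{\Lambda}\sqrt{8\log T}$, which is solved by $\Lambda \gtrsim \log T$ and hence yields $\min\{|b-r_j|,\eta\}^3(\Delta^f_j)^2 \gtrsim \log T$, i.e.\ the claimed rate. With your cruder bound, since $\max_b|\langle\boldsymbol{f},\boldsymbol{\phi^b}\rangle| \asymp \eta^{3/2}\Delta^f_j$, solving gives only $|\hat r_j - r_j| \lesssim \eta^{1/2}(\log T)^{1/6}/(\Delta^f_j)^{1/3}$, and Assumption (A3) (which forces $\eta^{3}(\Delta^f_j)^{2}\gtrsim \log T$) shows this is generically much larger than $(\log T)^{1/3}/(\Delta^f_j)^{2/3}$: the self-normalization by $\sqrt{\Lambda}$ is precisely the idea your proposal lacks. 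A further, smaller omission: the cubic-decay formula only controls $\min\{|b-r_j|,\eta\}$, so you must separately show that the endpoints of the detection interval are at distance greater than $C_3(\log T)^{1/3}/(\Delta^f_j)^{2/3}$ from $r_j$; the paper does this by contradiction with $C^{b_j}_{s,e}(\boldsymbol{X})>\zeta_T$, using the upper bound of Lemma \ref{Lemma_for_main_thm_slope1} --- this is also where the specific value of $C_1$ comes from.
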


The proof of Theorem \ref{consistency_theorem_slope} can be found in the supplementary material.
Again, the lower bound of the probability is $1-\mathcal{O}(1/T)$ and the difficulty of the problem, $\delta_T ^ {3/2} \underline{f}_T$ is analogous to $\sqrt{\delta_T}\underline{f}_T$, which appears in (A2) for the case of piecewise-constant signals.
For $\underline{f}_T \sim T^{1/2}$, the change-point detection accuracy is $O\left( T^{-1/3} \left( \log T \right)^{1/3} \right)$, which differs from the optimal rate $O(T^{-1/3})$ derived by \cite{Raimondo1998} only by the logarithmic factor.
This means that the change-point detection rate in DAIS is almost optimal.

\subsection{Probability of guaranteed isolation of the change-point at detection} \label{discussion prob}

\sla{In this section, we work under Assumptions (A2) and (A3), for piecewise-constant and continuous, piecewise-linear signals, respectively.}
We return to the discussion of Section \ref{discussion}, that concerns the connection between the location of the largest difference and the guarantee that the detection of the change-point closest to it occurs in an interval in which it is isolated.
The probability of this happening is given by
\begin{equation}\label{eq: prob of isolation}
    \Prob \left( d_{s,e} \in \left[1, r_1 + \frac{\Tilde{\delta}_1}{2} - \frac{3\delta_T}{4n} \right] 
    \cup 
    \left[r_N - \frac{\Tilde{\delta}_{N-1}}{2} + \frac{3\delta_T}{4n}, \sla{d_{s,e}^{\max}} \right] 
    \cup 
    \left\{ \bigcup_{j=2}^{N-1} B_{n;j} \right\} \right)
\end{equation}
where $B_{n;j}$ is defined in \eqref{eq: interval_lergest_diff} \sla{and the union over $j$ is the empty set for $N\leq2$}.
Under the assumption of $\epsilon_t$ being i.i.d. random variables following the $\mathcal{N}(0,1)$ distribution, and in the case that $f_t$ is piecewise-constant, we note that in order to calculate the probability that the largest difference is exactly at the location of a change-point, we need to calculate the following:
\begin{equation} \label{probability}
    \Prob\Biggl(\bigcup_{j=1}^{N} \bigcap_{t=1}^{T-1}|X_{r_{j}+1} - X_{r_{j}}| \geq |X_{t+1} - X_t| \Biggr).
\end{equation}
It is easy to see that $\eqref{eq: prob of isolation} \geq \eqref{probability}$, \sla{as \eqref{probability} is the probability of $d_{s,e}$ being exactly at the location of any change-point, while \eqref{eq: prob of isolation} in a small area around any change-point}.
As a first step, we are interested in the probability that the largest difference occurs at the location of a specific change-point, namely $r_J$, for $J \in \{1,\ldots,N\}$. 
We will work on
\begin{align} \label{reduced prob discussion}
    \Prob\Biggl(|X_{r_{J}+1} - X_{r_{J}}| \geq |X_{t+1} - X_t| \Biggr),
\end{align}
which is the probability that the observations \sla{around} the change-point $r_J$ have a larger absolute difference than a pair of observations in the signal where no change occurs. 
\sla{The focus is put on $t \notin \{r_J-1, r_J, r_J+1\}$ such that $|X_{r_{J}+1} - X_{r_{J}}|$ and $|X_{t+1} - X_t|$ are independent.}
Using \eqref{model}, the fact that $\epsilon_t\sim\mathcal{N}(0,1)$, and that $f_{t'+1}-f_{t'} = 0$ for $t' \neq r_J$, \eqref{reduced prob discussion} can be written as
\begin{align} \label{reduced prob discussion 2}
    \Prob\Biggl(|X_{r_{J}+1} - X_{r_{J}}| \geq |X_{t+1} - X_t| \Biggr) & =  \Prob\Biggl(|f_{r_{J}+1} - f_{r_{J}} + \sigma Z_2| \geq | \sigma \Tilde{Z}_2| \Biggr) \\
    & =  \Prob\Biggl(\left|\sla{\frac{f_{r_{J}+1} - f_{r_{J}}}{\sigma}} + Z_2 \right| \geq |\Tilde{Z}_2 | \Biggr) \nonumber
\end{align}
where 
$Z_2, \Tilde{Z}_2\sim\mathcal{N}(0,2)$ are independent. 
In the above equation, $|\sla{ \left( f_{r_{J}+1} - f_{r_{J}} \right) / \sigma} + Z_2|$ follows the Folded Normal (FN) distribution with mean $\sla{ \left( f_{r_{J}+1} - f_{r_{J}} \right) / \sigma}$ and variance 2.
Similarly, $|\Tilde{Z}_2|$, follows the same distribution with mean 0 and variance 2, which reduces to the simpler case of the Half Normal (HN) distribution with variance 2.
So, \eqref{reduced prob discussion 2} can be equivalently written as $\Prob(Y_1 \geq Y_2)$
for $Y_1 \sim FN(\sla{ \left( f_{r_{J}+1} - f_{r_{J}} \right) / \sigma}, 2)$ and $Y_2 \sim HN(0, 2)$. 
Denoting by $f_{Y_1}$ the probability density function of $Y_1$ and $F_{Y_2}$ the cumulative distribution function of $Y_2$, \eqref{reduced prob discussion 2} can be written as
\begin{align*}
    \Prob(Y_1 \geq Y_2)
    & = \int_0^{\infty} \Prob(Y_2 \leq y) f_{Y_1}(y) dy
    = \int_0^{\infty} F_{Y_2}(y) f_{Y_1}(y) dy \\
    & = \ \mathbb{E}_{Y_1}\Bigl[F_{Y_2}(Y_1)\Bigr] 
    = \ \mathbb{E}_{Y_1} \Bigl[\textrm{erf}\Bigl(\frac{Y_1}{2}\Bigr)\Bigr]
\end{align*}
where $\textrm{erf}(z)=\frac{2}{\sqrt{\pi}}\int_0^z e^{-t^2} \rm{d}t$ is the Gauss error function. 
Using \cite{ng1969table} (equation (38) on p.10), we can calculate the above
as
\begin{equation}\label{eq: prob_isolation_final}
    \Prob(Y_1 \geq Y_2) = \frac{1}{2} \biggl[ 1 + \textrm{erf} \left( \frac{\sla{ f_{r_{J}+1} - f_{r_{J}} }}{2\sqrt{2} \ \sla{\sigma}}\right)^2 \biggr],
\end{equation} 
so we have an explicit expression for \eqref{reduced prob discussion 2}.
It is easy to see now that this probability goes to 1 as $\sla{ \left( f_{r_{J}+1} - f_{r_{J}} \right)}$ gets larger.
However, calculating or bounding from below the probability in \eqref{probability}, and subsequently in \eqref{eq: prob of isolation}, are not as straightforward to handle tasks due to the fact that $|X_{t}-X_{t-1}|$ and $|X_{t-1}-X_{t-2}|$ are dependent. 
We provide results based on simulations.

We want to note that when $N=1$, the interval in Assumption (A1) becomes $[1,\sla{d_{s,e}^{\max}}]$, which implies that isolation of the change-point in the interval in which detection occurs is guaranteed for all possible values {of the location} of the largest difference.
Also, in a data sequence with $N\geq 2$ change-points, we expect the order that they are detected to be with decreasing magnitude of change.
This means that the change-point with the smallest magnitude of change will be detected last, in an interval in which it is certainly isolated, so the second smallest magnitude of change in the data sequence is the one that needs to be `large enough' to ensure that all change-points will be isolated in an interval where detection can occur. 
\sla{For this reason, the following simulation study considers data sequences with two change-points.}

\sla{Before investigating \eqref{eq: prob of isolation} further, it is worth emphasizing that for any location of the largest difference, isolation of the change-point is guaranteed for at least one interval if $\lambda_T = 1$.
However, as explained in Section~\ref{discussion}, in order to guarantee that detection will occur, the endpoints of the interval have to be at a distance of at least $\frac{\delta_T}{2n}$ from the change-point, and Assumptions (A2) or (A3) need to be satisfied.}

\sla{
For the simulation study, we focus on piecewise-constant signals. 
More specifically, we consider a data sequence with  $\sigma = 1$ and two change-points at locations $r_1 = 10 + \lfloor \log T \rfloor$ and $r_2 = T - 10 - \lfloor \log T \rfloor$, with equal magnitudes of change, which are chosen to be equal to $\sigma$.
These values ensure that Assumption (A2) is satisfied and are also aligned with the conclusion of the consistency theorem, Theorem~\ref{consistency_theorem}, which states that $\delta_T$ has to be at least of order $O(\log T)$ in order to be able to match the estimated change-points with the true ones.
We use $n=3/2$, which is the most unfavorable choice, as it minimizes the total length of the union of the intervals in \eqref{eq: prob of isolation}, which becomes $\left[1, \frac{T}{2} - \delta_T/2 \right] \cup \left[ \frac{T}{2} + \delta_T/2, T-1 \right]$.
In Table~\ref{table:simulated_prob}, the proportion of times that $d_{1,T}$ lies in this union of intervals is reported as $T$ increases, using 10,000 repetitions for each value of $T$.
The results show that the probability of the largest difference occurring at a point that guarantees isolation in the interval where the change-point will be detected increases with the length of the signal. 
This is expected in this example, as the length of the interval in which $d_{1,T}$ should not lie is equal to $\delta_T$, and therefore, its proportion to the length, $T-1$, of the whole differenced sequence goes to 0.
}


\sla{
\begin{table}[htbp]
\caption{Proportion of times that the location of the largest difference was in an interval where detection while the change-points are isolated is guaranteed.}
\label{table:simulated_prob}
\centering
\begin{tabular}{rccrccrr}
  \hline
    T &&& $\delta_T$ &&& proportion \\ 
  \hline
  60 &&& 14 &&& 0.762 \\
80 &&& 14 &&& 0.818 \\
100 &&& 14 &&& 0.850 \\
150 &&& 15 &&& 0.905 \\
200 &&& 15 &&& 0.928 \\
250 &&& 15 &&& 0.942 \\
500 &&& 16 &&& 0.966 \\
750 &&& 16 &&& 0.976 \\
1000 &&& 16 &&& 0.982 \\
1500 &&& 17 &&& 0.990 \\
   \hline
\end{tabular}
\end{table}
}

\section{Computational complexity and practicalities} \label{Computational complexity and practicalities}

\subsection{Computational complexity}\label{computational complexity}

We will now explain why DAIS needs to check at most $\lceil \frac{T} {\lambda_T} \rceil + 2N + 1$ intervals before the algorithm is terminated, where, as explained before, $N$ is the total number of change-points of the signal, $T$ is the length of the data sequence and $\lambda_T$ is the chosen expansion parameter.
If there are no change-points, the algorithm will check at most $\lceil \frac{T} {\lambda_T} \rceil + 1$ left- and right-expanding intervals around the location of the largest difference, as explained in Section \ref{methodology}.
In the case that the signal has change-points, since at every time that detection occurs the already used observations are discarded and the algorithm is restarted on two disjoint subintervals, then the maximum total number of expansions required increases by two (one for each of the newly created subintervals). 
Therefore, the maximum number of iterations depends on the number of change-points and DAIS needs to check at most $K' = \lceil \frac{T} {\lambda_T} \rceil + 2N + 1$ intervals. 
Since we take $\lambda_T < \delta_T$, then $K' > \lceil T/\delta_T\rceil + 2N + 1$.
Combining this with the definition of $\delta_T$, which implies that $N\leq T/\delta_T$, it can be concluded that the lower bound of $K'$ is $\mathcal{O}(T/\delta_T)$. 
As explained by \cite{anastasiou2022detecting}, the lower bound for the maximum number of intervals that need to be checked in ID is $\mathcal{O}(T/\delta_T)$ and in WBS and NOT it is $\mathcal{O}(T^2/\delta_T^2)$ up to a logarithmic factor, which means that DAIS is faster than the latter two algorithms and is of the same order as ID. 

Now, focusing more on the algorithms ID and DAIS, we need to highlight that
the data-adaptive nature of DAIS leads to computational advancements. 
More specifically, 
due to the methodology of ID, when the data sequence under consideration has a small number of true change-points, the number of intervals that need to be checked will be close to the maximum possible for ID, which is $2\lceil \frac{T} {\lambda_T} \rceil$.
In contrast, DAIS needs to check only $K'$ intervals, which will be much smaller compared to ID.
For example, when the signal has no change-points, using the same expansion parameter $\lambda_T$, DAIS checks almost half as many intervals as ID, which leads to substantial computational gain.
This can be seen in the simulation results of Signals {(S12)} and {(S13)} in Table~\ref{supp:table:justnoise, long signal}, which have 0 and 1 change-points, respectively, when compared with ID\_th, which uses the same $\lambda_T$ as DAIS.
When many change-points are present, 
using again the definition of $\delta_T$ and the inequality $\delta_T \geq 2n\lambda_T$, for which the explanation is in Section \ref{discussion}, we get
    $N \leq T/(2n\lambda_T) < T/(2\lambda_T) - 1.$
So, $K'$ will always be smaller than the maximum number of intervals that are checked by ID.
However, it needs to be noted that the data-adaptivity comes at an additional, but small, computational cost, since, at each step of the algorithm, the location of the largest difference, in absolute value, needs to be re-calculated.

\subsection{Parameter selection} \label{parameters}
\textbf{Choice of the threshold $\zeta_T$}:
In Theorems \ref{consistency_theorem} and \ref{consistency_theorem_slope}, the rate of the lower bound of the threshold $\zeta_T$ is $\mathcal{O}(\sqrt{\log T})$ and so we use
\begin{equation} \label{def_threshold}
    \zeta_T = C\sqrt{\log T}.
\end{equation}
For the choice of the positive threshold constant $C$, we ran an extensive simulation study using various signal structures and Gaussian noise. 
The best behavior occurred for $C=1.7$ in the case of piecewise-constant signals and $C=2.1$ in the case of piecewise-linear signals. 
These are the constants that were used for the simulations in Section \ref{simulations}. 
\sla{However, our algorithm performs well for a range of values for $C$.
Simulation results for $C\in\{1.5,1.6,1.7,1.8,1.9\}$ in the case of piecewise-constant and $C\in\{1.9,2,2.1,2.2,2.3\}$ in the case of continuous, piecewise-linear signals are provided in the supplementary material in \ref{sec: impact_params} in Tables \ref{tab: thr_const_const} and \ref{tab: thr_const_lin}.}
\cite{anastasiou2022detecting} proves in Corollary 1 that, in the case of piecewise-constant signals, as $T \rightarrow \infty$, the threshold can be taken to be at most $\sqrt{3\log T}$; our choice of 1.7 for the threshold constant, $C$ as in \eqref{def_threshold}, does not violate this result.
\newline
\\
\textbf{Choice of the expanding parameter $\lambda_{T}$}:
As can be seen by the proofs of Theorems \ref{consistency_theorem} and \ref{consistency_theorem_slope}, in a given signal, detection will occur for any $\lambda_T \leq \delta_T/2n$, where \slb{$n\geq 3/2$ is a constant}.
Using $\lambda_T=1$ ensures that each change-point is isolated for as many intervals as possible, but this increases the computational time.
In practice, we use $\lambda_T=3$, which is small enough to have very good accuracy in the detection of the change-points, but not too small which could magnify the computational cost.
\sla{Our algorithm is quite robust to changes in the value of $\lambda_T$; for a small simulation study in the case that $\lambda_T \in \{1,3, 5, 10, 15, 20\}$, see Table~\ref{tab: increasing_lambda}.
If the value of $\lambda_T$ is large, then there is a risk of underestimating the number of change-points in cases where the distance between change-points is smaller than $\lambda_T$.}
\newline

\section{Simulations}\label{simulations}
This section compares the performance of DAIS with state-of-the-art competitors.
We only focus on algorithms that estimate both the number and the locations of the change-points in the given univariate signal in an offline manner.
The competitors used in the simulation study can be found in Table \ref{table:competitors}, along with the relevant \textsf{R} packages.
For ID, WBS and SeedBS we report the results for both the Information Criterion (IC) and the thresholding stopping rules. The notation is ID\_ic, WBS\_ic, SeedBS\_ic and ID\_th, WBS\_th, SeedBS\_th, respectively. {\sla{Regarding the threshold constant, the default value $C = 1.15\sqrt{2}$, as in the {\textsf{breakfast}} {\textsf{R}} package (\citep{breakfast_package}), was taken in the piecewise-constant case, while $C = 1.4\sqrt{2}$ was used for ID in the continuous, piecewise-linear case)}}. 
{\sla{We highlight, that even though for SeedBS we used code (that is partly based on the {\textsf{wbs}} {\textsf{R}} package (\citep{wbs_package})) from the relevant GitHub repository, we decided, for a fair comparison, to keep the well-calibrated threshold constant of the {\textsf{breakfast}} {\textsf{R}} package. This choice led to a substantially better performance for SeedBS compared to results that were relied on the constant used in the relevant GitHub code.}} \sla{Furthermore, for the NOT method of \cite{baranowski2019narrowest} results are presented based on the Schwarz information
criterion, while for the WBS2 method of \cite{fryzlewicz2020detecting} results are provided when the algorithm is combined with the Steepest Drop to Low Levels (SDLL) model selection procedure, introduced also in \cite{fryzlewicz2020detecting}.}
In order to allow the competitors achieve their best performance, in some cases the inputs of the relevant functions, as developed in \textsf{R}, were adjusted.
The competitors ID, WBS, WBS2, PELT and MARS were applied using the default values based on the relevant {\textsf{R}} packages provided in Table \ref{table:competitors}. For DP\_univar, the values proposed in the examples of the corresponding \textsf{R} function are used.
For MOSUM, the function multiscale.localPrune was used, with the default values for all the parameters apart from the minimal mutual distance of change points that can be expected (the d.min input argument) which was set equal to 3 instead of 10. This enhances MOSUM's performance compared to the default d.min = 10 value because in our simulation study, we use signals with $\delta_T < 10$.
The maximum number of features allowed, $K_{\rm max}$, is required for the function that employs the Information Criterion model selection procedure within the {\textsf{breakfast}} {\textsf{R}} package. Therefore, for the NOT, WBS\_ic and ID\_ic methods we needed to choose $K_{\rm max}$. The default value in the relevant {\textsf{R}} function is $K_{\rm max} = \min\{100, T/\log(T)\}$. In signals, where the true number of change-points surpasses this default value, we set $K_{\rm max} = \lceil T/\delta_T\rceil$, where $\delta_T$ is the minimum distance between consecutive change-points in the signal tested.
For MSCP, the value of the minimal window considered is chosen to be $\min\{\lceil\delta_T/2\rceil, 20\}$, where 20 is the default value. 
\sla{
SMUCE requires Monte Carlo simulations for calculating $q$, the vector of critical values at level $\alpha$, which is the asymptotic probability of overestimating the number of change points. 
The time for these simulations, which can be quite high, is not included in the values reported below and we use the default choice, which is $\alpha = 0.5$.}
\sla{\cite{Liehrmann22112024} proposes an implementation for MsFPOP and MsPELT, which are extensions of FPOP and PELT to the multiscale penalty proposed by \cite{Verzelen_optimal}.
As the results for the number and the locations of the estimated change-points are the same (they only differ at their computational times), we only report the results of MsFPOP.
The calibration constants are set to the values chosen by simulations in \cite{Liehrmann22112024} such that the percentage of the false positives is less than 0.05.}
The estimation of the standard deviation for CPOP is set to be done using the MAD method as described in Section \ref{theory}. 
TF, SeedBS \sla{and MsFPOP} are employed based on the implementations found in \url{https://github.com/hadley/l1tf}, \url{https://github.com/kovacssolt/SeedBinSeg/blob/master/SeedBS.R} \sla{and \url{https://github.com/aLiehrmann/MsFPOP}}, respectively.
Change-point detection using local\_poly involves performing cross validation for every time series, for the optimal values to be chosen. The degree of the polynomial is set to 1, so that the fitted signals are piecewise-linear.
Code on the simulations, as well as on how to implement DAIS, can be found in \url{https://github.com/Sophia-Loizidou/DAIS}.
For the simulations we have used a modification of the DAIS algorithm described in Section \ref{methodology}, in order to make it more accurate. 
More specifically, DAIS restarts from the detected change-point instead of the start- and end-points of the interval where the detection occurred. This leads to an improvement in the accuracy of the algorithm without affecting the speed.
This is also our recommendation on how to use the algorithm in practice.

\begin{table}[htbp]
\caption{Methods used in the simulations} \label{table:competitors}
\centering
        \begin{tabular}{|l|l|l|l|}
        
            \hline
            Type of signal & Method Notation & Reference & \textsf{R} package\\
            \hline
            & ID & \cite{anastasiou2022detecting} & {\textsf{breakfast}} (\citep{breakfast_package})\\
            & PELT & \cite{doi:10.1080/01621459.2012.737745} & {\textsf{changepoint}} (\citep{pelt_package})\\
            & WBS & \cite{fryzlewicz2014wild} & {\textsf{breakfast}} (\citep{breakfast_package})\\
            & WBS2 & \cite{fryzlewicz2020detecting} & {\textsf{breakfast}} (\citep{breakfast_package})\\
            & NOT & \cite{baranowski2019narrowest} & {\textsf{breakfast}} (\citep{breakfast_package})\\
            Piecewise-constant & MOSUM & \cite{10.3150/16-BEJ887} & {\textsf{mosum}} (\citep{JSSv097i08}) \\
            & MSCP & \cite{10.1214/22-EJS2101} & {\textsf{mscp}} (\citep{mscp_package})\\
            & SeedBS & \cite{seeded_bs} & - \\
            & DP\_univar & \cite{10.1214/20-EJS1710} & {\textsf{changepoints}} (\citep{changepoints_package})\\
            & \sla{SMUCE} & \cite{frick_multiscale_2014} & \sla{{\textsf{stepR}}} (\citep{stepr_package})\\
            & \sla{MsFPOP} & \cite{Liehrmann22112024} & \sla{-} \\
            \hline
            & ID & \cite{anastasiou2022detecting} & {\textsf{breakfast}} (\citep{breakfast_package})\\
            & CPOP & \cite{doi:10.1080/10618600.2018.1512868} & {\textsf{cpop}} (\citep{cpop_package}) \\
            Continuous, & NOT & \cite{baranowski2019narrowest} & {\textsf{breakfast}} (\citep{breakfast_package})\\
            piecewise-linear & MARS & \cite{10.1214/aos/1176347963} & {\textsf{earth}} (\citep{earth_package})\\
            & TF & \cite{doi:10.1137/070690274} & -\\
            & TS & \cite{maeng2023detecting} & {\textsf{trendsegmentR}} (\citep{trendSegmentR_package})\\
            & local\_poly & \cite{yu_localising_2022} & {\textsf{changepoints}} (\citep{changepoints_package})\\
            \hline
        \end{tabular}
\end{table}

As a measure of the accuracy of the estimated number of change-points, we present the difference between the number of change-points detected and the true number of change-points ($\hat{N} - N$). 
As a measure of accuracy of the location of the detected change-points, we give Monte-Carlo estimates of the mean squared error, 
\begin{equation*}
    \textrm{MSE} = T^{-1} \sum_{t=1}^{T} \mathbb{E} \left(\hat{f_t} - f_t \right) ^2,
\end{equation*}
where $T$ is the total length of the sequence and $\hat{f_t}$ is the ordinary least squares approximation of $f_t$ between two successive change-points.
We also report the scaled Hausdorff distance, defined as
\begin{equation*}
    d_H = n_s^{-1} \max {\Big\{\max_j \min_k \left| r_j - \hat{r}_k \right|, \max_k \min_j \left| r_j - \hat{r}_k \right| \Big\}}
\end{equation*}
where $n_s = \max_{j=1,\ldots,N} \Tilde{\delta}_j$ is the length of the largest segment between successive change-points. 
The Hausdorff distance, $d_H$, is given for all signals except for signals that do not contain any change-points; in such cases, $d_H$ is uninformative. 
Finally, the average computational time for each method is also reported. 
The top performing algorithm, in terms of accuracy in the number of change-points detected, is presented in bold, along with all methods that performed within $5\%$.
\sla{The lowest MSE and $d_H$ values are also in bold, as well as all values that are within 10\% of the lowest ones.}
The signals used can be found in Appendix \ref{simulations_supplement} and further simulation results are in the supplementary material.

The signals included in this section contain more than one change-point. For signals with no or one change-point, please refer to Tables~\ref{supp:table:justnoise, long signal} and \ref{table:justnoise_wave} in the supplementary material. 
Due to their difficult structure, we first put the focus of the discussion on Signals (S1) and (S2).

Signal (S1) (Table \ref{table:small_dist, small_dist2}) is a particularly difficult structure for most algorithms because it includes a pair of consecutive change-points that shift the data sequence to opposite directions, which masks the existence of the change-points. \sla{Signals with this type of difficult settings make the appreciation of the isolation aspect more apparent; for a relevant discussion of the importance of localization in such signals see also Section 2 in \cite{fryzlewicz2014wild}. Therefore, detection is harder when the change-points are close to each other, `work against each other', and their isolation, while of great importance, is for some algorithms unlikely or, for others, occurs only in a limited number of intervals. The best performing methods in (S1), in terms of accuracy for the estimated number of change-points, are DAIS and ID\_th; this is unsurprising based on the fact that these two methods are expected to certainly achieve isolation of the change-points prior to their detection.}

\sla{
Signal (S2) contains two change-points at a distance of just 5, at locations $r_1=30$ and $r_2=35$. The magnitudes of the changes regarding $r_1$ and $r_2$ are 2.3 and 5.7, respectively. Due to the large magnitude of change for $r_2$, DAIS starts checking for a change-point around $r_2$, which can be detected while it is still isolated in a small interval (due to the distance with $r_1$ being small). Then, $r_1$, which has a smaller magnitude of change, is also detected while isolated. In contrast to this, algorithms that are also based on thresholding and check the data sequence in a sequential way (while attempting to obtain change-point isolation), either through keeping one of the end-points fixed (for example, ID) or through a moving sum approach (for example, MOSUM) may struggle with such a signal. This is because, in such algorithms, $r_1$ will appear first while scanning the data, and if not detected while still isolated, then there is the risk of $r_2$ being first detected (due to its larger magnitude of change) before $r_1$, in an interval that also contains $r_1$. Discarding the data points already checked (as, for example, in ID), or carrying on the moving sum process further away from $r_1$ (as, for example, in MOSUM), will then mean that $r_1$ can not get detected in any following steps of such algorithms. Moving now away from thresholding based approaches, we highlight that WBS2 (which is based on the SDLL model selection procedure), as well as many optimization based algorithms, more specifically, ID\_ic, WBS\_ic, NOT, SeedBS\_ic, and MsFPOP, have a very good behavior in (S2).
}

Regarding the results for Signals {(S3)} and {(S4)} as shown in Table \ref{table:stairs, mix, many_cpts_mix}, many algorithms, including our proposed method DAIS, perform very well. However, ID\_ic, WBS, SeedBS, and NOT tend to slightly overestimate the true number of change-points. In contrast, the MSCP algorithm seems to underestimate the number of the change-points in those two signals. 
DP\_univar \sla{and MsFPOP} exhibit heavy underestimation in Signal {(S3)}, while they heavily overestimate the true number of change-points in {(S4)}. \sla{SMUCE underestimates in Signal {(S3)} and performs very well in {(S4)}}.
For Signals {(S5)}, {(S6)} and {(S7)} shown in Tables~\ref{table:stairs, mix, many_cpts_mix} and \ref{table:many_cpts, many_cpts_long}, a great number of methods struggle to detect the change-points, due to the limited spacings between them.
Signal {(S5)} 
includes change-points of different magnitudes and with slightly different distances between them. DAIS, ID, WBS\_th, WBS2, PELT, and MsFPOP exhibit very good performance in the signal both in terms of accuracy (with respect to the number and the locations of the estimated change-points) and speed. {\sla{Signals (S6) and (S7) consist of a large number of regularly occurring change-points. DAIS, ID\_th, WBS2, SeedBS\_th, and MsFPOP exhibit a very good behaviour in such complex structures. Notice that algorithms which were tested with both the thresholding and IC stopping rules, such as ID, WBS, and SeedBS, tend to perform well in general but neither is 
\sla{consistently accurate} in terms of the choice of the stopping rule in the various scenarios tested that involve large deviations on the number of change-points as well as their distance. For example, notice the dependency on the model selection method (either IC or thresholding) when the performances related to ID, SeedBS, and WBS for IC and thresholding in signal (S2), are compared to their performances in Signals (S6) and (S7). Such model selection based behaviors are expected as the thresholding approach is more appropriate for a large number of change-points, potentially close to each other, while the information criterion is better suited for a moderate number of change-points with larger spacings. This difference in accuracy between the threshold- and SIC-based stopping rules is what motivated in \cite{anastasiou2022detecting} the introduction of a hybrid version of ID that combines these two stopping rules.}}
Regarding the MOSUM algorithm, it performs well for signals with a small number of change-points; however it struggles in Signals {(S5)}, {(S6)} and (S7), \sla{which are more complex structures involving a large number of change-points with limited spacings}.
WBS2 is a very well performing algorithm in terms of accuracy, but is more computationally expensive compared to DAIS, with DAIS being at least one order of magnitude faster than WBS2 in the simulations.
Based on the simulations, we conclude that overall, DAIS performs very well in various different piecewise-constant signal scenarios regarding the number of change-points, the distance between them, as well as the magnitudes of the changes, when most competitors struggle to accurately detect the changes in some of the cases tested.

We now proceed with a discussion of the performance of DAIS under the continuous, piecewise-linear framework. The simulation results are shown in
Tables \ref{table:wave1} and \ref{table:wave2, wave3}. We conclude that DAIS performs at least as accurately as the algorithms ID and CPOP, and substantially better than other competitors available in the literature, regarding accuracy in terms of both the estimated number and the estimated locations of the change-points. Furthermore, the average computational time for DAIS remains very low in all signals tested; in fact, it is the lowest among the times for the top performing methods.

\begin{table}[htbp]
\caption{Distribution of $\hat{N} - N$ over 100 simulated data sequences of the Signals (S1) and \sla{(S2)}. 
    The average MSE, $d_H$ and computational times are also given.} \label{table:small_dist, small_dist2}
\centering
        \begin{tabular}{|l|c|c|c|c|c|c|c|c|c|c|c|}
            \hline
            &&\multicolumn{6}{|c|}{} &&& \\ 
            &&\multicolumn{6}{|c|}{$\hat{N} - N$} & & & \\
            Method & Signal & $-2$ & $-1$ & 0 & 1 & 2 & $\geq 3$ & MSE & $d_H$ & Time (s) \\
            \hline
            \textbf{DAIS} && 16 &     0 &    \textbf{80} &     2 &     2 &     0 & \textbf{0.014} & 0.089 & 0.004 \\ 
            \textbf{ID\_th} & &  10 &    0 &   \textbf{79} &    8 &    3 &    0 & \textbf{0.013} & 0.081 & 0.007 \\  
            ID\_ic & &  33 &    0 &   64 &    2 &    1 &    0 & 0.018 & 0.179 & 0.013 \\ 
            WBS\_th & &   8 &    7 &   63 &   16 &    5 &    1 & 0.015 & 0.105 & 0.010 \\
            WBS\_ic && 30 &    0 &   68 &    2 &    0 &    0 & 0.016 & 0.166 & 0.032 \\ 
            WBS2 &  & 10 &    2 &   75 &    7 &    1 &    5 & 0.015 & 0.091 & 0.563 \\ 
            PELT && 77 &    0 &   23 &    0 &    0 &    0 & 0.026 & 0.397 & 0.001 \\ 
            NOT & (S1) & 30 &    0 &   68 &    2 &    0 &    0 & 0.016 & 0.167 & 0.037 \\ 
            MOSUM && 15 &    0 &   74 &    6 &    4 &    1 & \textbf{0.014} & 0.108 & 0.011 \\
            MSCP & &75 &   23 &    2 &    0 &    0 &    0 & 0.030 & 0.394 & 2.600 \\ 
            SeedBS\_th && 11 &   18 &   55 &   13 &    2 &    1 & 0.017 & 0.103 & 0.022 \\
            SeedBS\_ic & & 32 &    0 &   66 &    2 &    0 &    0 & 0.017 & 0.176 & 0.022 \\
            DP\_univar & & 0 &    0 &    6 &    3 &   10 &   81 & 0.045 & 0.334 & 0.138 \\ 
            \sla{SMUCE} & & \sla{4} & \sla{36} & \sla{53} & \sla{7} & \sla{0} & \sla{0} & \sla{0.020} & \sla{\textbf{0.057}} & \sla{0.005} \\
            \sla{MsFPOP} & & \sla{29} & \sla{0} & \sla{70} & \sla{1} & \sla{0} & \sla{0} & \sla{0.016} & \sla{0.157} & \sla{0.004} \\
            \hline
            \sla{\textbf{DAIS}} &  &  \sla{0} &    \sla{7} &   \sla{\textbf{86}} &    \sla{4} &    \sla{3} &    \sla{0} & \sla{\textbf{0.063}} & \sla{0.039} & \sla{0.001} \\ 
            \sla{ID\_th} &&    \sla{0} &   \sla{37} &   \sla{49} &   \sla{10} &    \sla{2} &    \sla{2} & \sla{0.125} & \sla{0.065} & \sla{0.001} \\ 
            \sla{\textbf{ID\_ic}} &  &  \sla{0} &    \sla{9} &   \sla{\textbf{83}} &    \sla{6} &    \sla{1} &    \sla{1} & \sla{0.068} & \sla{0.026} & \sla{0.004} \\ 
            \sla{WBS\_th} & &    \sla{0} &    \sla{1} &   \sla{64} &   \sla{17} &    \sla{8} &   \sla{10} & \sla{0.075} & \sla{0.154} & \sla{0.002} \\ 
            \sla{\textbf{WBS\_ic}} & &   \sla{0} &    \sla{4} &   \sla{\textbf{86}} &    \sla{7} &    \sla{2} &    \sla{1} & \sla{\textbf{0.061}} & \sla{0.025} & \sla{0.005} \\ 
            \sla{WBS2} &  & \sla{0} &   \sla{24} &   \sla{67} &    \sla{3} &    \sla{3} &    \sla{3} & \sla{0.087} & \sla{0.048} & \sla{0.055} \\ 
            \sla{PELT} & &   \sla{0} &   \sla{18} &   \sla{81} &    \sla{1} &    \sla{0} &    \sla{0} & \sla{0.068} & \sla{\textbf{0.011}} & \sla{0.001} \\ 
            \sla{\textbf{NOT}} & \sla{(S2)} &   \sla{0} &    \sla{4} &   \sla{\textbf{87}} &    \sla{6} &    \sla{2} &    \sla{1} & \sla{\textbf{0.062}} & \sla{0.024} & \sla{0.013} \\ 
            \sla{MOSUM} & &   \sla{0} &   \sla{83} &   \sla{14} &    \sla{3} &    \sla{0} &    \sla{0} & \sla{0.185} & \sla{0.050} & \sla{0.002} \\ 
            \sla{MSCP} &   & \sla{0} &   \sla{100} &    \sla{0} &    \sla{0} &    \sla{0} &    \sla{0} & \sla{0.197} & \sla{0.069} & \sla{0.066} \\ 
            \sla{SeedBS\_th} & & \sla{0} &    \sla{5} &   \sla{75} &   \sla{12} &    \sla{7} &    \sla{1} & \sla{\textbf{0.064}} & \sla{0.073} & \sla{0.026} \\ 
            \sla{\textbf{SeedBS\_ic}} &&     \sla{0} &    \sla{6} &   \sla{\textbf{87}} &    \sla{6} &    \sla{1} &    \sla{0} & \sla{\textbf{0.060}} & \sla{0.021} & \sla{0.026} \\ 
            \sla{DP\_univar} &  &  \sla{0} &   \sla{15} &   \sla{59} &   \sla{13} &   \sla{11} &    \sla{2} & \sla{0.183} & \sla{0.175} & \sla{0.001} \\ 
            \sla{SMUCE} & &   \sla{0} &   \sla{36} &   \sla{59} &    \sla{5} &    \sla{0} &    \sla{0} & \sla{0.111} & \sla{0.033} & \sla{0.001} \\ 
            \sla{\textbf{MsFPOP}} & &   \sla{0} &   \sla{12} &   \sla{\textbf{86}} &    \sla{2} &    \sla{0} &   \sla{0} & \sla{\textbf{0.066}} & \sla{0.015} & \sla{0.001} \\  
            \hline
\end{tabular}
\end{table}
\begin{table}[!ht]
\caption{Distribution of $\hat{N} - N$ over 100 simulated data sequences of the Signals {(S3)}, {(S4)} and {(S5)}. 
    The average MSE, $d_H$ and computational times are also given.} \label{table:stairs, mix, many_cpts_mix}
\centering
        \begin{tabular}{|l|c|c|c|c|c|c|c|c|c|c|c|}
            \hline
            &&\multicolumn{7}{|c|}{} &&& \\ 
            &&\multicolumn{7}{|c|}{$\hat{N} - N$} & & & \\
            Method & Signal & $\leq 3$ & $-2$ & $-1$ & 0 & 1 & 2 & $\geq 3$ & MSE & $d_H$ & Time (s) \\
            \hline
            \textbf{DAIS} && 0 & 0 & 2 & \textbf{95} & 3 & 0 & 0 & 0.023 & 0.014 & 0.002 \\
            \textbf{ID\_th} & &   0 &    0 &    1 &   \textbf{94} &    5 &    0 &    0 & 0.022 & \textbf{0.013} & 0.003 \\ 
            ID\_ic &  &   0 &    0 &    0 &   78 &   17 &    4 &    1 & 0.023 & 0.015 & 0.006 \\ 
            WBS\_th &  &   0 &    0 &    1 &   86 &   12 &    1 &    0 & 0.025 & 0.016 & 0.003 \\ 
            WBS\_ic && 0 &    0 &    0 &   62 &   26 &    6 &    6 & 0.024 & 0.017 & 0.005 \\ 
            {\textbf{WBS2}} & & 0 &    0 &    2 &   {\textbf{92}} &    5 &    1 &    0 & 0.025 & 0.015 & 0.039 \\ 
            \textbf{PELT} && 0 & 0 & 7 & \textbf{93} & 0 & 0 & 0 & 0.022 & 0.014 & 0.001 \\ 
            NOT & {(S3)} & 0 &    0 &    0 &   81 &   12 &    2 &    5 & 0.022 & 0.014 & 0.053 \\ 
            \textbf{MOSUM} && 0 & 0 & 2 & \textbf{97} & 1 & 0 & 0 & \textbf{0.018} & \textbf{0.012} & 0.004 \\ 
            MSCP && 100 &    0 &    0 &    0 &    0 &    0 &    0 & 0.949 & 0.182 & 0.072 \\ 
            SeedBS\_th && 0 &    3 &   10 &   79 &    8 &    0 &    0 & 0.030 & 0.020 & 0.016 \\ 
            SeedBS\_ic && 0 &    0 &    0 &   70 &   19 &    6 &    5 & 0.025 & 0.016 & 0.016 \\
            DP\_univar && 100 & 0 & 0 & 0 & 0 & 0 & 0 & 0.240 & 0.067 & 0.001\\
            \sla{SMUCE}  &  & \sla{47} & \sla{26} & \sla{8} & \sla{19} & \sla{0} & \sla{0} & \sla{0} & \sla{0.115} & \sla{0.044} & \sla{0.001} \\
            \sla{MsFPOP} & & \sla{100} & \sla{0} & \sla{0} & \sla{0} & \sla{0} & \sla{0} & \sla{0} & \sla{0.344} & \sla{0.090} & \sla{0.001} \\
            \hline
            \textbf{DAIS} & & 0 & 0 & 0 & \textbf{96} & 4 & 0 & 0 & 1.699 & 0.012 & 0.002 \\
             ID\_th &  &  0 &    0 &    0 &   92 &    8 &    0 &    0 & 1.710 & 0.013 & 0.002 \\ 
            ID\_ic &  &  0 &    0 &    0 &   88 &   10 &    1 &    1 & 1.760 & 0.014 & 0.007 \\ 
            WBS\_th & &   0 &    0 &    0 &   79 &   19 &    2 &    0 & \textbf{1.490} & 0.018 & 0.004 \\ 
            WBS\_ic & & 0 &    0 &    0 &   94 &    6 &    0 &    0 & \textbf{1.470} & 0.009 & 0.010 \\ 
            WBS2 & & 0 &    0 &    0 &   94 &    5 &    0 &    1 & \textbf{1.480} & 0.012 & 0.118 \\ 
            \textbf{PELT} & & 0 & 0 & 0 & \textbf{100} & 0 & 0 & 0 & \textbf{1.475} & \textbf{0.008} & 0.001 \\ 
            \textbf{NOT} & {(S4)} & 0 &    0 &    0 &   \textbf{97} &    3 &    0 &    0 & \textbf{1.550} & 0.009 & 0.018 \\ 
            MOSUM & & 0 & 0 & 0 & 90 & 8 & 2 & 0 & \textbf{1.513} & 0.014 & 0.007 \\ 
            MSCP & & 49 &   33 &   16 &    2 &    0 &    0 &    0 & 10.770 & 0.109 & 0.289 \\ 
            SeedBS\_th & & 0 &    0 &    1 &   83 &   14 &    2 &    0 & 1.630 & 0.018 & 0.017 \\ 
            \textbf{SeedBS\_ic} & & 0 &    0 &    0 &   \textbf{96} &    4 &    0 &    0 & \textbf{1.600} & 0.010 & 0.017 \\ 
            DP\_univar & & 0 & 0 & 0 & 0 & 0 & 0 & 100 & 4.544 & 0.112 & 0.004\\
            \sla{\textbf{SMUCE}} &  & \sla{0} & \sla{0} & \sla{2} & \sla{\textbf{98}} & \sla{0} & \sla{0} & \sla{0} & \sla{\textbf{1.560}} & \sla{\textbf{0.008}} & \sla{0.002} \\
            \sla{MsFPOP} & & \sla{0} & \sla{0} & \sla{0} & \sla{0} & \sla{0} & \sla{0} & \sla{100} & \sla{11.80} & \sla{0.153} & \sla{0.001} \\
            \hline
            \textbf{DAIS} & & 0 &     0 &     0 &    \textbf{98} &     2 &     0 &     0 & 0.253 & \textbf{0.004} & 0.001 \\ 
              \textbf{ID\_th} &  &  0 &    0 &    0 &   \textbf{99} &    1 &    0 &    0 & \textbf{0.246} & 0.005 & 0.003 \\    
            ID\_ic &  &  0 &    0 &    0 &   91 &    9 &    0 &    0 & 0.259 & 0.007 & 0.004 \\ 
            WBS\_th &   & 0 &    0 &    0 &   94 &    6 &    0 &    0 & \textbf{0.239} & 0.005 & 0.001 \\
            WBS\_ic & & 0 &    0 &    0 &   71 &   19 &    9 &    1 & 0.270 & 0.012 & 0.003 \\
            \textbf{WBS2} & & 0 &    0 &    0 &   \textbf{99} &    1 &    0 &    0 & \textbf{0.235} & \textbf{0.004} & 0.024 \\ 
            PELT & & 4 &     4 &     0 &    92 &     0 &     0 &     0 & 0.380 & 0.018 & 0.001  \\ 
            NOT & {(S5)} & 0 &    0 &    0 &   72 &   17 &    8 &    3 & 0.271 & 0.012 & 0.004 \\ 
            MOSUM & & 100 &     0 &     0 &     0 &     0 &     0 &     0 & 6.460 & 0.893 & 0.001 \\ 
            MSCP & & 100 &     0 &     0 &     0 &     0 &     0 &     0 & 6.270 & 0.446 & 0.026 \\ 
            SeedBS\_th & & 0 &    1 &   14 &   80 &    5 &    0 &    0 & 0.348 & 0.016 & 0.019 \\
            SeedBS\_ic & & 0 &    0 &    0 &   73 &   20 &    5 &    2 & 0.265 & 0.011 & 0.019 \\ 
            DP\_univar & & 100 &     0 &     0 &     0 &     0 &     0 &     0 & 5.790 & 0.154 & 0.001  \\ 
            \sla{SMUCE} & & \sla{14} & \sla{10} & \sla{19} & \sla{57} & \sla{0} & \sla{0} & \sla{0} & \sla{1.110} & \sla{0.034} & \sla{0.001} \\
            \sla{\textbf{MsFPOP}} & &  \sla{0} & \sla{0} & \sla{0} & \sla{\textbf{100}} & \sla{0} & \sla{0} & \sla{0} & \sla{\textbf{0.228}} & \sla{\textbf{0.004}} & \sla{0.001} \\
            \hline
            \end{tabular}
\end{table}

\begin{table}[tbp]
\caption{Distribution of $\hat{N} - N$ over 100 simulated data sequences of the Signals {(S6)} and {(S7)}. 
    The average MSE, $d_H$ and computational times are also given.} \label{table:many_cpts, many_cpts_long}
\centering
        \begin{tabular}{ |l|c|c|c|c|c|c|c|c|}
            \hline
            &&\multicolumn{3}{|c|}{} &&& \\ 
            &&\multicolumn{3}{|c|}{$\hat{N} - N$} & & & \\
            Method& Signal & $-99$& $[-98,-11]$ & $[-10,10]$ & MSE & $d_H$ & Time (s) \\
            \hline
            \textbf{DAIS} && 0 & 5 & \textbf{95}  & 0.435 & 0.011 & 0.016 \\
            \textbf{ID\_th} & &   0 &    3 &   \textbf{97} & 0.403 & 0.009 & 0.017 \\   
            ID\_ic & &  98 &    0 &    2  & 3.930 & 0.970 & 0.053 \\ 
            WBS\_th &  &  0 &   73 &   27 &  1.230 & 0.012 & 0.007 \\
             WBS\_ic && 99 &    0 &    1  & 3.960 & 0.980 & 0.036 \\ 
             \textbf{WBS2} & &  0 &    0 &  \textbf{100} &  \textbf{0.280} & \textbf{0.003} & 0.336 \\ 
            PELT && 2 & 93 & 5 & 2.490 & 0.251 & 0.007 \\
            NOT & {(S6)} & 100 &    0 &    0 &  4.000 & 0.990 & 0.034 \\ 
            MOSUM && 100 & 0 & 0  & 4.000 & 0.990 & 0.003 \\ 
            MSCP && 0 &  100 &   0 & 3.930 & 0.057 & 1.400 \\ 
            SeedBS\_th && 0 &   26 &   74 &  0.829 & 0.011 & 0.020 \\
            SeedBS\_ic && 100 &    0 &    0 &    4.000 & 0.990 & 0.020 \\ 
            DP\_univar && 0 & 100 & 0 & 3.690 & 0.026 & 0.048 \\ 
            \sla{SMUCE} &  &  \sla{0} & \sla{100} & \sla{0} & \sla{3.570} & \sla{0.030} & \sla{0.002} \\
            \sla{\textbf{MsFPOP}} &  & \sla{0} & \sla{0} & \sla{\textbf{100}} & \sla{0.346} & \sla{0.008} & \sla{0.001} \\
            \hline
            \textbf{DAIS} && 0 & 0 & \textbf{100} & 0.234 & 0.003 & 0.013 \\
            \textbf{ID\_th} &  &  0 &    0 &  \textbf{100} & 0.245 & 0.003 & 0.021 \\ 
            ID\_ic & & 100 &    0 &    0 & 6.250 & 0.992 & 0.061 \\ 
            WBS\_th & &   0 &   97 &    3 & 2.060 & 0.011 & 0.015 \\ 
            WBS\_ic && 100 &    0 &    0 & 6.250 & 0.992 & 0.025 \\
            \textbf{WBS2} & & 0 &    0 &  \textbf{100} & \textbf{0.199} & \textbf{0.001} & 0.307 \\ 
            PELT && 2 & 55 & 43 & 1.670 & 0.082 & 0.001 \\ 
            NOT & {(S7)} & 100 &    0 &    0 & 6.250 & 0.992 & 0.031 \\ 
            MOSUM && 100 &0 &0 & 6.250 & 0.992 & 0.003 \\ 
            MSCP &&  99 &  1 &    0 & 6.220 & 0.160 & 1.090 \\ 
            \textbf{SeedBS\_th} &&  0 &    0 &  \textbf{100} & 0.402 & 0.007 & 0.023 \\ 
            SeedBS\_ic && 100 &    0 &    0 & 6.250 & 0.992 & 0.023 \\ 
            DP\_univar && 0 & 100 &0 & 5.710 & 0.025 & 0.030 \\ 
            \sla{SMUCE} & &  \sla{0} & \sla{100} & \sla{0} & \sla{5.430} & \sla{0.020} & \sla{0.002} \\
            \sla{\textbf{MsFPOP}} &  & \sla{0} & \sla{0} & \sla{\textbf{100}} & \sla{\textbf{0.209}} & \sla{0.002} & \sla{0.001} \\
            \hline
        \end{tabular}
\end{table}

\begin{table}[tbp]
\centering
\caption{Distribution of $\hat{N} - N$ over 100 simulated data sequences of the Signal {(S9)}. 
    The average MSE, $d_H$ and computational times are also given.} \label{table:wave1}
        \begin{tabular}{|l|l|c|c|c|c|c|c|c|c|c|}
            \hline
            &&\multicolumn{6}{|c|}{} &&& \\ 
            &&\multicolumn{6}{|c|}{$\hat{N} - N$} &&& \\
            Method & Signal & $\leq -2$ & $-1$ & 0 & 1 & $2$ & $\geq3$ & MSE & $d_H$ & Time (s) \\
            \hline
            \textbf{DAIS} && 0 & 0 &  \textbf{99} &    1 &    0 &    0 & 0.030 & 0.085 & 0.020 \\ 
            \textbf{ID\_th} &&  0 &    0 &   \textbf{95} &    5 &    0 &    0 & 0.030 & 0.086 & 0.024 \\ 
             \textbf{ID\_ic} && 0 & 0 &   93 &    7 &    0 &    0 & 0.034 & 0.099 & 0.093 \\ 
            \textbf{CPOP} && 0 & 0 &  \textbf{99} &    1 &    0 &    0 & \textbf{0.013} & \textbf{0.050} & 5.550 \\ 
            \textbf{NOT} & {(S9)} & 0 &    0 &  \textbf{100} &    0 &    0 &    0 & 0.015 & \textbf{0.055} & 0.720 \\ 
            MARS && 0 & 0 &   5 &   39 &   47 &    9 & 0.025 & 0.194 & 0.008 \\ 
            TF && 0 &  0 &  0 &    0 &    0 &  100 & 0.018 & 0.440 & 0.768 \\ 
            \textbf{TS} && 0 & 1 &   \textbf{99} &    0 &    0 &    0 & 0.096 & 0.186 & 1.020 \\ 
            local\_poly && 0 & 0 &   1 &    5 &   12 &   82 & 0.054 & 0.503 & 125.400 \\ 
            \hline
             \end{tabular}
\end{table}

\begin{table}[h]
\centering
\caption{Distribution of $\hat{N} - N$ over 100 simulated data sequences of the Signals {(S10)} and {(S11)}. 
    The average MSE, $d_H$ and computational times are also given.} \label{table:wave2, wave3}
        \begin{tabular}{|l|l|c|c|c|c|c|c|c|c|c|c|}
            \hline
            && \multicolumn{7}{|c|}{} &&& \\ 
            && \multicolumn{7}{|c|}{$\hat{N} - N$} &&& \\
            Method & Signal & $\leq -15$ & $(-15,-2]$ & $-1$ & 0 & 1 & $[2,15)$ & $\geq15$ & MSE & $d_H$ & Time (s) \\
            \hline
            \textbf{DAIS} && 0 &    0 &    0 &  \textbf{100} &    0 &    0 &    0 & 0.267 & 0.296 & 0.018 \\ 
            \textbf{ID\_th} & &   0 &    0 &    1 &   \textbf{97} &    2 &    0 &    0 & 0.233 & 0.273 & 0.046 \\ 
            ID\_ic && 0 &    0 &    0 &   90 &   10 &    0 &    0 & 0.264 & 0.302 & 0.478 \\ 
            CPOP && 0 &    0 &    0 &   91 &    9 &    0 &    0 & 0.381 & \textbf{0.247} & 0.439 \\ 
            NOT & {(S10)} & 0 &   57 &   16 &   17 &   10 &    0 &    0 & 0.808 & 0.871 & 1.030 \\ 
            MARS && 100 &    0 &    0 &    0 &    0 &    0 &    0 & 4.700 & 98.50 & 0.003 \\ 
            TF && 0 &    0 &    0 &    0 &    0 &    0 &  100 & \textbf{0.205} & 0.395 & 0.645 \\ 
            TS && 0 &   33 &   37 &   30 &    0 &    0 &    0 & 0.854 & 0.580 & 1.430 \\ 
            local\_poly && 100 &    0 &    0 &    0 &    0 &    0 &    0 & 4.740 & 0.827 & 157.80 \\ 
            \hline
            \textbf{DAIS} && 0 &    0 &    0 &  \textbf{100} &    0 &    0 &    0 & \textbf{0.039} & \textbf{0.196} & 0.016 \\ 
            \textbf{ID\_th} & & 0 &    0 &    0 &  \textbf{100} &    0 &    0 &    0 & \textbf{0.037} & \textbf{0.216} & 0.036 \\ 
             \textbf{ID\_ic} && 0 &    0 &    0 &   \textbf{98} &    2 &    0 &    0 & \textbf{0.041} & 0.243 & 0.433 \\ 
            \textbf{CPOP} &&  0 &    0 &    0 &   \textbf{99} &    1 &    0 &    0 & 0.252 & 0.291 & 0.062 \\ 
            NOT & {(S11)} & 100 &    0 &    0 &    0 &    0 &    0 &    0 & 1.060 & 119.000 & 0.815 \\ 
            MARS && 100 &    0 &    0 &    0 &    0 &    0 &    0 & 1.060 & 118.000 & 0.002 \\
            TF && 0 &    0 &    0 &    0 &    0 &    1 &   99 & 0.191 & 0.307 & 0.300 \\ 
            TS &&  100 &    0 &    0 &    0 &    0 &    0 &    0 & 1.060 & 119.000 & 0.480 \\ 
            local\_poly && 100 &    0 &    0 &    0 &    0 &    0 &    0 & 1.060 & 76.700 & 16.900 \\ 
            \hline
            \end{tabular}
\end{table}


    \section{DAIS extensions} \label{sec: DAIS extensions}

    \subsection{Temporal dependence} \label{sec: temporal_dependence}
    We first consider relaxing the assumption of independence for the error terms $\epsilon_t$. 
    Temporal dependence is therefore introduced in the observed data sequence.
    In such cases, methods that reduce the autocorrelation in the time series can be applied.
    This can be done using a subsampling technique, which is briefly described below. 
    
    For a chosen integer $s$ we subsample the observed time series $X_t$ by choosing every $s^\text{th}$ observation. 
    This creates $s$ different data sequences,
    for which the autocorrelation is reduced compared to the original data sequence. 
    If, for example, $s=5$ and \linebreak $T=1000$, the new data sequences are of the form
    $\{X_1, X_{6}, \ldots, X_{996}\}$, $\{X_2, X_{7}, \ldots, X_{997}\}$, $\{X_3, X_{8}, \ldots, X_{998}\}$, $\{X_4, X_{9}, \ldots, X_{999}\}$, $\{X_5, X_{10}, \ldots, X_{1000}\}$.
    DAIS is applied to each one of these smaller data sequences and we obtain $s$ sets of detected change-points. 
    We can then apply a majority voting rule, which discards any change-points that are not detected in at least $\eta$ time series, where $\eta\leq s$ is a positive integer.
    The change-points are then transformed to represent locations in the original time series.
    The purpose of taking every $s^\text{th}$ observation is to not include the observations which have the highest correlation between them. 
    That is, the observations just before or just after each data point.
    Thus, for larger values of $s$, we manage to reduce the autocorrelation in the resulting disjoint data sequences, but we obtain a larger number of smaller data sequences, which has a negative impact on the detection accuracy.
    More details on this can be found in \cite{ccid}.
    A different way to reduce autocorrelation is using a pre-averaging technique, such as the one described in Section~\ref{sec: heavy_tailed}, which involves averaging the data sequence before applying the algorithm in order to reduce the correlation between consecutive observations.

    \subsection{Heavy-tailed noise} \label{sec: heavy_tailed}

    In this subsection we relax the assumption that the error term $\epsilon_t$ is Gaussian and we instead consider a heavy-tailed distribution for $\epsilon_t$.
    To improve the performance of DAIS in such cases, the data sequence can be pre-processed
    by averaging the time series to bring the noise closer to Gaussianity, as described in Section 4.5 of \cite{anastasiou2022detecting}.
    Using this technique, we take advantage of the Central Limit Theorem, the noise of the observations with large $\epsilon_t$ is reduced in absolute value, and we obtain a smaller data sequence with less extreme values for all the data points.
    The idea is that for a chosen integer $s$, we define $Q = \lceil T/s \rceil$ and set 
    \begin{equation*}
        X^\ast_q = \frac{1}{s} \sum_{t=(q-1)s+1}^{qs} X_t, \text{ for } \sla{q=1,2,\ldots,Q-1} \quad \text{and} \quad X^\ast_Q = \frac{1}{T-(Q-1)s} \sum_{t=(Q-1)s+1}^T X_t.
    \end{equation*}
    DAIS can then be applied on $X^\ast_t$ to obtain change-points $\hat{r}^\ast_1, \ldots, \hat{r}^\ast_{\hat{N}}$. 
    The change-points of the original time series $X_t$ can be obtained using the transformation $\hat{r}_i = (\hat{r}^\ast_i - 1)s + \lfloor \frac{s+1}{2} \rfloor$, for $i=1,\ldots,\hat{N}$. 
    The choice of the value of $s$ is important.
    On the one hand, taking larger values brings the data closer to Gaussianity. 
    On the other hand, these values of $s$ return smaller data sequences that have been pre-processed more and so we lose on the accuracy of both the location and the number of change-points detected.
    In practice we recommend $s=5$.
    
    The performance of DAIS in the case of heavy-tailed data can be found in Table~\ref{tab: heavy_tailed}.
    For the pre-averaging technique we use $s=5$ and the heavy-tailed noise is chosen to be the Student's t-distribution with degrees of freedom being equal to either 5 or 7.
    Signals (S1) and {(S8)} are piecewise-constant and {(S9)} is piecewise-linear and their description can be found in Appendix \ref{simulations_supplement}.
Even though the accuracy of DAIS is reduced compared to Section~\ref{simulations}, the results indicate a good level of robustness for DAIS at the presence of heavy-tailed noise. The more heavy-tailed the noise, the more the overdetection, which is expected, as spikes caused by the heavy-tailed distribution of $\epsilon_t$ can sometimes be mistakenly detected as change-points.
However, DAIS performs relatively well for Signals (S1) and (S8) even for the case of 5 degrees of freedom for the Student's t-distribution.
\begin{table}[htbp]
\caption{Distribution of $\hat{N} - N$ over 100 simulated data sequences of the Signals (S1), {(S8)} and {(S9)} with noise following the Student's t-distribution with $d=5,7$ degrees of freedom. The average MSE, $d_H$ and computational times including the pre-averaging are also given.} \label{tab: heavy_tailed}
\centering
        \begin{tabular}{ |c|c|c|c|c|c|c|c|c|c|c|c|c|c|}
            \hline
            &&\multicolumn{7}{|c|}{} &&& \\ 
            &&\multicolumn{7}{|c|}{$\hat{N} - N$} & & & \\
            $d$ & Signal & $\leq -3$& $-2$ & $-1$ & 0 & 1 & 2 & $\geq 3$ & MSE & $d_H$ & Time (ms) \\
            \hline
            & (S1)  & 0 &  1 & 0 & 87 & 6 & 6 & 0 & 0.025 & 0.038 & 1.0 \\
            5 & {(S8)} & 0 & 0 & 0 & 70 & 14 & 12 & 4 & 0.016 & 0.081 & 1.9 \\
            & {(S9)} & 0 & 0 & 0 & 37 & 22 & 21 & 20 & 2 & 21.1 & 10 \\
            \hline
            & (S1)  & 0 &  0 & 0 & 92 & 4 & 3 & 1 & 0.022 & 0.023 & 1.0 \\
            7 & {(S8)} & 0 & 0 & 0 & 83 & 6 & 8 & 3 & 0.015 & 0.060 & 0.8 \\
            & {(S9)} & 0 & 0 & 0 & 82 & 8 & 10 & 0 & 2.01 & 21 & 10 \\
            \hline
            \end{tabular}
\end{table}

    \subsection{Multivariate models} \label{sec: multivariate_extension}

    The DAIS algorithm can be extended to multivariate or high dimensional models. The model considered is the following:
    \begin{equation}\label{eq: high_dim_model}
        \boldsymbol{X_t} = \boldsymbol{f_t} + \boldsymbol{\epsilon_t}, t=1,...,T,
    \end{equation}
    where $\boldsymbol{X_t} \in \R^{d\times 1}$ are the observed data and $\boldsymbol{f_t} \in \R^{d\times 1}$ is the d-dimensional deterministic signal with structural changes at certain points.
    In the case of piecewise-constant signals, the structure of $\boldsymbol{f_t}$ is given by 
    $\boldsymbol{f_t} = \boldsymbol{\mu_j}$ for $t \in \{r_{j-1} + 1, \ldots, r_j \}$ and $\boldsymbol{f_{r_j}} \neq \boldsymbol{f_{r_j+1}}$ where $\boldsymbol{\mu_j} \in \R^{d\times 1}$ for $j = 1,2,\ldots,N + 1$.
    For continuous, piecewise-linear signals, $\boldsymbol{f_t} = \boldsymbol{\mu_{j,1}} + \boldsymbol{\mu_{j,2}} t$ for $t \in \{r_{j-1} + 1, \ldots, r_j \}$ and $\boldsymbol{f_{r_j-1}} + \boldsymbol{f_{r_j+1}} \neq 2\boldsymbol{f_{r_j}}$ where $\boldsymbol{\mu_{j,1}}, \boldsymbol{\mu_{j,2}} \in \R^{d\times 1}$.
    To ensure continuity, we need the additional constraint of $\boldsymbol{\mu_{k,1}} + \boldsymbol{\mu_{k,2}} r_k = \boldsymbol{\mu_{k+1,1}} + \boldsymbol{\mu_{k+1,2}} r_k$ for $k\in\{1,2,\ldots,N\}$.
    As with the univariate case, the algorithm can also be applied to more general signal structures.

    The calculation of the largest difference, that was used as a starting point for the univariate signal, needs to be adapted for the multivariate case.
    This can be done using mean-dominant norms $L:\R^d \rightarrow \R$, whose definition can be found in \cite{Carlstein1988}.
    Some examples are
    \begin{equation} \label{mean_dominant}
        L_2(\boldsymbol{X_t}) := \sqrt{\frac{1}{d} \sum_{i=1}^d X_{t,i}^2},
        \quad
        L_\infty(\boldsymbol{X_t}) := \sup_{i=1, \ldots, d} \lvert X_{t,i} \rvert.
    \end{equation}
    The location of the largest difference of the interval $[s,e]$, $1\leq s <e \leq T$ can be calculated as
    \begin{equation} \label{def_largest_diff_multi}
    d^{\text{multi}}_{s,e} = \left\{\begin{array}{ll}
    \textrm{argmax}_{t \in \{s, s+1, \dots, e-1\}}\left\{L( \lvert \boldsymbol{X_{t+1}} - \boldsymbol{X_t} \rvert ) \right\}, &\boldsymbol{f_t} \textrm{ piecewise-constant}, \\
    \textrm{argmax}_{t \in \{s, s+1, \dots, e-2\}} \left \{L( \lvert \boldsymbol{X_{t+2}} - 2\boldsymbol{X_{t+1}} + \boldsymbol{X_t} \rvert ) \right\}, &\boldsymbol{f_t} \textrm{ piecewise-linear.}
    \end{array}\right.
\end{equation}
    We define $\boldsymbol{y_t}$ to be the $\R^{d\times1}$ vector with entries the value of the chosen contrast function, this being \eqref{CUSUM} for piecewise-constant signals and \eqref{phi_definition} for piecewise-linear ones. 
    At each step of the algorithm, the goal is to decide whether the point that maximizes the mean-dominant norm $L(\boldsymbol{y_t})$ is a change-point. 
    The aggregation scheme for the contrast function, as just described, was also proposed by \cite{anastasiou_generalized_2023}.
    Algorithm \ref{alg:MDAIS} provides a pseudocode for DAIS in the case of multivariate signals, which we call MDAIS.
    For a discussion on the choice of the mean-dominant norm, we refer the reader to \cite{ccid, anastasiou_generalized_2023, cho_change-point_2016, cho_multiple-change-point_2015}.

    Theoretical results for this variant of the algorithm about consistency for the number and locations of the estimated change-points can be proven. 
    In the rest of this section, we provide the consistency theorem that concerns the case of a piecewise-constant $\boldsymbol{f_t}$ and the outline of its proof can be found in Appendix~\ref{appendix: proof_multivariate}. 
    However, since the focus of this paper is proposing the novel data-adaptive methodology, we do not delve further in this. 
    Following a similar methodology, theoretical results can also be proven for $\boldsymbol{f_t}$ being continuous, piecewise-linear.
    For the statement of the theorem, 
    we require the following assumption, which is equivalent to Assumption (A2) (used for the case of univariate $f_t$):
    \begin{enumerate}
        \item[(A4)]  The quantities $\delta_T$ and $\underline{f}_T$ are connected by $\sqrt{\delta_T} \underline{f}_T \geq \underline{C}_M \sqrt{\log \left( T d^{1/4} \right)}$, for a large enough constant $\underline{C}_M$.
    \end{enumerate}
    Similarly to the univariate case, the number of change-points $N$ is allowed to grow with $T$ and $d$.
    Note that the threshold constant, $\zeta_{T, d}$ depends on both the length of the data sequence and its dimension.
    \sla{Before providing the statement of the theorem, we generalize the contrast function chosen for the case of piecewise-constant signals, which is the absolute value of the CUSUM statistic as given in \eqref{CUSUM}, to 
    \begin{equation} \label{CUSUM_multiv}
        \left| \tilde{X}_{s,e}^{b,j} \right| = \left| \sqrt{\frac{e-b}{\ell (b-s+1)}}\sum_{t=s}^{b}X_{t,j} - \sqrt{\frac{b-s+1}{\ell (e-b)}}\sum_{t=b+1}^{e}X_{t,j} \right|
    \end{equation}
    for $\ell = e-s+1$, $b\in\{s,\ldots,e-1\}$ and $j \in \{1,\ldots,d\}$.}
    
    \begin{theorem}\label{thm: consistency_multivariate}
        Let $\{ \boldsymbol{X_t} \}_{t=1,\ldots,T}$ follow model \eqref{eq: high_dim_model} with $\boldsymbol{f_t}$ piecewise-constant and let $\boldsymbol{\epsilon_t} \sim \mathcal{N}_d (\boldsymbol{0}, \Sigma)$, where $\Sigma \in \R^{d\times d}$ is positive definite and also that Assumptions (A1) and (A4) hold.
        Let $N$ and $r_j, j=1, 2, \ldots N$ be the number and location of the change-points, while $\hat{N}$ and $\hat{r}_j, j=1, 2, \ldots \hat{N}$ their estimates, sorted in increasing order. 
        In addition, $\boldsymbol{\Delta_j} 
        := \lvert \boldsymbol{f_{r_j + 1}} - \boldsymbol{f_{r_j}} \rvert$ for $j=1, 2, \ldots N$, 
        $\underline{f}_T := \inf_{j=1,\ldots,N} \{ L \left( \boldsymbol{\Delta_j} \right) \}$ and \sla{$\delta_T$ is as in \eqref{eq: def distance between cpts}.}
        Then there exist positive constants $C_1, C_2, C_3, C_4$ independent of $T$ and $d$, such that for $C_1 \sqrt{\log \left( T d^{1/4} \right)} \leq \zeta_{T, d} < C_2 \sqrt{\delta_T}\underline{f}_T$ and for sufficiently large $T$, we obtain
        \begin{equation} \label{main result mean multivariate}
        \mathbb{P}\Biggl( \hat{N} = N, \max_{j=1, 2, \ldots, N} \biggl\{ \left| \hat{r}_j - r_j \right| \alpha_j^{2} \biggl\} 
        \leq C_{3} \log \left( T d^{1/4} \right) \Biggl) \geq 1-\frac{C_{4}}{T},
        \end{equation}
        where, for $\left| \tilde{X}_{s,e}^{b,j}\right|$ as in \eqref{CUSUM_multiv} and $[s_j, e_j]$ the interval where $\hat{r}_j$ is obtained, we denote by $q_j := {\rm{argmax}}_{k=1,\ldots, d} \left| \tilde{X}^{\hat{r}_j, k}_{s_j, e_j} \right|$ and 
        \begin{equation*}
            \alpha_j = \left\{\begin{array}{cc}
                 \lvert f_{r_{j+1}, q_j} - f_{r_j, q_j} \rvert, & \textrm{for } L(\cdot) = L_\infty(\cdot) \\
                L^2 \left( \boldsymbol{\Delta}_j \right) , & \textrm{for } L(\cdot) = L_2(\cdot)
            \end{array} \right..
        \end{equation*}
    \end{theorem}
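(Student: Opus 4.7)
The structure follows Theorem~\ref{consistency_theorem}, with the CUSUM statistic replaced by its mean-dominant-norm aggregation across coordinates. First, I would define a ``good'' event $A_T$ on which the stochastic part of the contrast is uniformly controlled: writing $\tilde{X}_{s,e}^{b,j} = \tilde{f}_{s,e}^{b,j} + \tilde{\epsilon}_{s,e}^{b,j}$, the noise piece $\tilde{\epsilon}_{s,e}^{b,j}$ is centered Gaussian with variance bounded by $\max_j \Sigma_{jj}$, and a union bound combined with standard Gaussian tail estimates gives
\begin{equation*}
    \Prob\Bigl( A_T \Bigr) := \Prob\Bigl( \sup_{1\leq s<b<e\leq T} L\bigl( |\tilde{\boldsymbol{\epsilon}}_{s,e}^{b}| \bigr) \leq C_1 \sqrt{\log\bigl(Td^{1/4}\bigr)} \Bigr) \geq 1 - C_4/T.
\end{equation*}
The refinement from $\log(Td)$ to $\log(Td^{1/4})$ is where the mean-dominant structure of $L$ is exploited: for $L_2$ a chi-squared-type bound on $L_2^2(\tilde{\boldsymbol{\epsilon}}_{s,e}^{b})$ yields the $d^{1/4}$ factor through Hanson--Wright, while for $L_\infty$ a finer maximal inequality using properties of $\Sigma$ is required rather than a naive coordinate-wise union bound.

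\textbf{Detection and no false positives.} Conditional on $A_T$, I would argue inductively that every true $r_j$ is detected and no false positive occurs. Under Assumption~(A1), whenever DAIS reaches an interval $[s,e]$ containing a change-point, the expansion mechanism produces a subinterval $I_k = [s_k, e_k]$ in which some $r_j$ is isolated, with both endpoints at distance at least $\delta_T/(2n)$ from $r_j$. In that subinterval, the deterministic piece satisfies $|\tilde{f}_{s_k,e_k}^{r_j,q}| \gtrsim \sqrt{\delta_T}\,\underline{f}_T$ at the maximizing coordinate $q$, so by Assumption~(A4) and mean-dominance of $L$, $L(\boldsymbol{y}_{r_j}) \geq C'\sqrt{\delta_T}\,\underline{f}_T - C_1\sqrt{\log(Td^{1/4})} > \zeta_{T,d}$, triggering detection. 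Conversely, on any interval $[s,e]$ containing no change-point, $\tilde{\boldsymbol{f}}_{s,e}^{b}=\boldsymbol{0}$ for all $b$, so on $A_T$ the aggregated contrast stays below $\zeta_{T,d}$, ruling out spurious detections. Together these give $\hat{N} = N$.

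\textbf{Localization rate.} For the accuracy bound, let $\hat{b}$ be the maximizer of $L(\boldsymbol{y}_t)$ in the detection interval. A now-standard calculation (compare the argument following equation~\eqref{proof_of_detection} in the univariate proof) shows that the squared deterministic CUSUM at coordinate $q_j$ decays quadratically in $|\hat{b}-r_j|$, with slope governed by $\alpha_j$. Combining this with the uniform noise control on $A_T$ and solving for $|\hat{b}-r_j|$ gives $|\hat{b}-r_j|\,\alpha_j^2 \leq C_3 \log(Td^{1/4})$. The two cases of $L$ produce the two definitions of $\alpha_j$: for $L=L_\infty$ only the maximizing coordinate contributes and $\alpha_j = |f_{r_j+1,q_j}-f_{r_j,q_j}|$, while for $L=L_2$ the aggregated decrease brings in $L^2(\boldsymbol{\Delta}_j)$.

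\textbf{Main obstacle.} The principal technical difficulty is establishing the sharp $\log(Td^{1/4})$ concentration on $A_T$ under the chosen aggregation; the $d^{1/4}$ factor is tighter than what a straightforward coordinate-wise union bound provides, so a refined maximal inequality (chi-squared for $L_2$, covariance-aware for $L_\infty$) is needed. A secondary issue is that for $L_\infty$ the maximizing coordinate $q_j$ may differ across subintervals, so care is required when translating interval-level inequalities into the coordinate-level lower bounds used in the detection and localization steps; this is handled by working with the coordinate that attains the supremum in the final detection interval $[s_j,e_j]$, which is precisely the $q_j$ appearing in the statement.
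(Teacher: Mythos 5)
Your overall architecture --- a uniform ``good'' event for the aggregated CUSUM noise, detection on intervals where (A1) guarantees isolation, a contradiction argument giving the localization rate with the two cases of $\alpha_j$, then recursion and termination --- matches the paper's outline in Appendix~\ref{appendix: proof_multivariate}. However, there is one genuine gap. Your localization step proposes to obtain $|\hat{r}_j - r_j|\,\alpha_j^2 \leq C_3 \log(Td^{1/4})$ by ``combining [the deterministic decay] with the uniform noise control on $A_T$''. That cannot work with $A_T$ alone: the contradiction argument you point to (Step 3 of the univariate proof) compares $\bigl(\tilde{X}^{r_j,q_j}\bigr)^2$ with $\bigl(\tilde{X}^{b^{\ast},q_j}\bigr)^2$ and, after expansion, produces the cross term $2\bigl\langle \boldsymbol{\psi}^{b^{\ast}}\langle \boldsymbol{f},\boldsymbol{\psi}^{b^{\ast}}\rangle - \boldsymbol{\psi}^{r_j}\langle \boldsymbol{f},\boldsymbol{\psi}^{r_j}\rangle, \boldsymbol{\epsilon}\bigr\rangle$, which must be bounded by $2\sqrt{\Lambda}\sqrt{8\log(Td^{1/4})}$, where $\Lambda$ is the deterministic gap. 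A sup-norm bound on the CUSUM noise (your $A_T$) does not control this term, because the vector paired with $\boldsymbol{\epsilon}$ has Euclidean norm $\sqrt{\Lambda}$, which can be as large as order $\sqrt{\delta_T}\,\underline{f}_T \gg \sqrt{\log(Td^{1/4})}$. One needs a second uniform bound on the family of normalized Gaussian inner products, per coordinate $k$ and per change-point $r_j$ --- exactly the paper's event $B_T$ in \eqref{set_B_T}, built from the quantities $A_{s,e}^{b}(k,r_j)$ in \eqref{A_seb}. Your plan never defines this event, so as written the localization step fails; it must be added (and its failure probability folded into $C_4$). A minor related slip: the deterministic gap in squared CUSUMs grows \emph{linearly} in $|b-r_j|$ (times $\alpha_j^2$), not ``quadratically''.

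Separately, your ``main obstacle'' paragraph mischaracterizes where the difficulty lies. The $d^{1/4}$ factor does not require Hanson--Wright for $L_2$ nor a covariance-aware maximal inequality for $L_\infty$; it is precisely what the naive coordinate-wise union bound delivers. The paper's Step 1 uses Carlstein's mean-dominance property $L(\boldsymbol{x}) \leq L_\infty(\boldsymbol{x})$ on $(\R^d)^{+}$ to reduce any mean-dominant norm to the coordinate-wise maximum, and then applies Bonferroni over the $O(T^3 d)$ choices of $(s,b,e,k)$: since $8\log(Td^{1/4}) = 8\log T + 2\log d$, each standard Gaussian tail is $O(T^{-4}d^{-1})$ and the union is $O(1/T)$. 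The spatial covariance $\Sigma$ plays no role at this point (union bounds ignore dependence), so no refined inequality is needed. Your closing observation --- that for $L_\infty$ one must work with the coordinate $q_j$ attaining the maximum in the final detection interval --- is correct and is indeed how the paper handles the two cases of $\alpha_j$.
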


The outline of the proof of Theorem~\ref{thm: consistency_multivariate} is given in Appendix~\ref{appendix: proof_multivariate}.
Note that this extension of our algorithm is robust to spatial dependence. Theoretical results related to this statement, as well as simulation results can be found in \cite{anastasiou_generalized_2023}, where the extension of ID (\cite{anastasiou2022detecting}), namely the MID algorithm, is introduced.

\begin{algorithm}[htbp]
\caption{MDAIS}\label{alg:MDAIS}
\begin{algorithmic}
\State \textbf{function} MDAIS$\left( s, e, \lambda_{T}, \zeta_T, L(\cdot) \right)$
\If{$e-s<1$}
    \State STOP
\Else
    \State Set $d^{\text{multi}}_{s,e}$ as in \eqref{def_largest_diff_multi}
    \State For $j \in \{1,2,\dots,K\}$ let $I_j=[s_j, e_j]$ as in \eqref{intervals}
    \State $i=1$
    \State \textbf{(Main part)}
    \State $b_{i} = {\rm{argmax}}_{t\in [s_{i}, e_{i})} L \left( C_{s_{i}, e_{i}}^t\left(\boldsymbol{X}\right) \right)$
    \If{$L \left( C_{s_{i}, e_{i}}^{b_i}\left(\boldsymbol{X}\right) \right) > \zeta_T$}
        \State add $b_{i}$ to the list of estimated change-points
        \State MDAIS($s, s_{i}, \lambda_{T}, \zeta_T$)
        \State MDAIS($e_{i}, e, \lambda_{T}, \zeta_T$)
    \Else
        \State $i=i+1$
        \If{$i \leq K$}
            \State Go back to \textbf{(Main part)} and repeat
        \Else
        \State STOP
        \EndIf
    \EndIf
\EndIf
\State \textbf{end function}
\end{algorithmic}
\end{algorithm}

\section{Real data}\label{Real data}

\subsection{Crime data}
In this section, DAIS is applied to real data and the underlying signal is assumed to be piecewise-constant. 
The chosen dataset includes daily crime reports in Montgomery County, Maryland and can be found in \url{https://catalog.data.gov/dataset/crime}. 
We use daily observations starting from 22/01/2020, up to 31/08/2024. 
Since the data involve counts of crimes, they are positive integer-valued and thus a transformation is required to bring them closer to Gaussian data with constant variance. 
This is done using the Anscombe transform (\cite{Anscombe1948}) $\alpha: \mathbb{N} \rightarrow \mathbb{R}$, with $\alpha (x) = 2\sqrt{x+3/8}$. 

The top plot of Figure \ref{fig:crime_comparisons} is a plot of the real data (in black), along with the estimated underlying signal, $\hat{f}_t$, according to the change-points detected by DAIS (plotted in red).  We now attempt to provide a possible explanation about the change-points estimated by DAIS that express the most important movements in the data. 
The first change-point occurs on the $17^{\textrm{th}}$ of March 2020 and corresponds to the first days of positive cases of COVID-19 when the first rules were imposed to limit the spread of the virus among the community in the state of Maryland.
As expected, since people were forced to stay at home, the number of crimes reported dropped significantly. 
The second change-point is on the $28^{\textrm{th}}$ of April 2020, when the official authorities started lifting the restrictions and the number of crimes increased.
Some small decrease around the $11^{\textrm{th}}$ of December 2020 can be explained by people's behavioral change around the holidays. 
The number of crimes returns close to the previous levels, on the $28^{\textrm{th}}$ of May 2021 and a further increase is observed at the end of the summer, $27^{\textrm{th}}$ of August 2021. 
This could possibly be due to people going on holidays out of town, leaving their homes more vulnerable to break-ins. 
As has also been observed for 2020, there is again a reduced number of reported crimes around the Christmas holidays of 2021, starting from the $23^{\textrm{th}}$ of December 2021 until the $17^{\textrm{th}}$ of January 2022. 
It is notable that the next change-point detected is almost a year later, as there is a sudden decrease on the $23^{\textrm{rd}}$ of December 2022, with an increase on the $26^{\textrm{th}}$ of the same month. 
An increase in the daily number of crimes reported is also detected on the $28^{\text{th}}$ of September 2023. 
As in the previous years, there is a drop on the $23^{\textrm{rd}}$ of December 2023 and an increase to a slightly higher average than before on the $23^{\textrm{rd}}$ of January 2024. 
Further increase to the number of daily crimes reported is observed on the $26^{\textrm{th}}$ of February and the $10^{\textrm{th}}$ of April 2024, with a decrease on the $4^{\textrm{th}}$ of June 2024.
It is clear that the average number of crimes has increased in the last years.
It is important to note that there is a clear trend as change-points were detected by DAIS on 11/12/2020, 23/12/2021, 23/12/2022 and 23/12/2023. 
All four are near the Christmas holidays and a decrease in the daily number of crimes reported is noticed, followed by the trend increasing on the 17/01/2022, 26/12/2022 and 22/01/2024.

The fits obtained by three competitors, ID, MOSUM, and NOT, are shown in the bottom plot of Figure~\ref{fig:crime_comparisons}, plotted with blue, yellow and green lines, respectively. 
All methods obtain different signals on the same data, with ID and DAIS providing similar fits and NOT having minor differences with them.
ID and NOT detect 11 and 10 out of the 15 change-points that DAIS detects, respectively.
They both do not detect the change-points in May 2021 and the last 2 change-points detected by DAIS, in April and June of 2024.
ID also does not detect the change-point just before Christmas 2020, while NOT does not detect the 2 change-points around Christmas 2022.
MOSUM detects only 5 change-points, with the only change-point around the Christmas period being the one in 2020, which ID missed.
By detecting change-points around the same period every year, we could argue that DAIS is detecting the seasonality around the Christmas holidays, which none of the other competitors managed to fully capture.

\begin{figure}[htbp]
\centering
\includegraphics[scale=0.7]{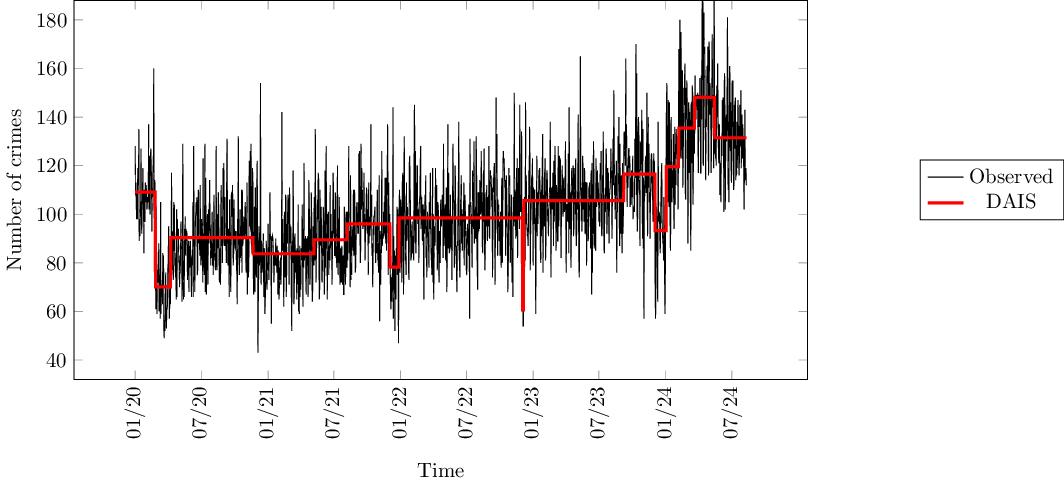} 
\includegraphics[scale=0.7]{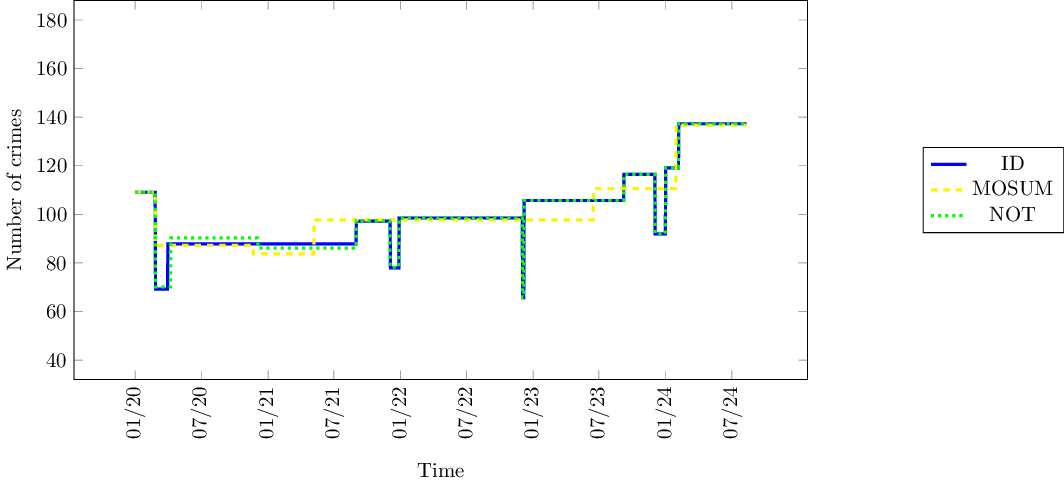}
\caption{\label{fig:crime_comparisons} The observed time series data of daily crime reports in Montgomery County and fitted piecewise-constant mean signal obtained by applying DAIS (red), ID (blue), MOSUM (yellow) and NOT (green) to the data. The x-axis labels correspond to the month and year in an MM/YY format.}
\end{figure}

\subsection{Euro to British pound exchange rate}
The DAIS algorithm for changes in the slope is applied to the Euro to British pound exchange rate. 
The data are the weekly close prices adjusted for splits for the period between 07/04/2014 and 12/09/2024, \slc{obtained from} \url{https://finance.yahoo.com/quote/EURGBP%3DX/}. 
The results can be seen in Figure \ref{fig:EURGBP}. 
The first plot is the observed signal, while the other plots are the estimated piecewise-linear signals from DAIS, ID and CPOP.
The change-points which are discussed in the next paragraph are indicated by dashed lines in the plot of the observed signal.

All three fits are very similar, with ID detecting the least number of change-points (27) and thus having a smoother signal, while DAIS and CPOP detect 42 and 60, respectively. 
The first important change-point, detected only by DAIS, on the week of $5^\textrm{th}$ of January of 2015 is the same week as the Charlie Hebdo shooting, the terrorist attack that occurred at the offices of the satirical weekly newspaper in Paris.
After this, the exchange rate decreased, indicating that the attack had negative effects to the strength of the Euro.
The next important change-point is an increase in the value of the exchange rate starting on the week of $9^\textrm{th}$ of November 2015.
This change-point is detected by both DAIS and CPOP, while ID detects one just one week later.
The change might have been caused by Prime Minister David Cameron's speech on the $10^\textrm{th}$ of November, during which he repeated his commitment to holding a referendum for Brexit before the end of 2017.
A few days later, on the $13^\textrm{th}$, the terrorist attacks in Paris took place. 
Both events had an impact on the exchange rate.
The next change-point, from which a new increase in the observed value begins, is detected in the week of $13^\textrm{th}$ of June 2016, which is the week just before the referendum was held, on the $23^\textrm{rd}$ of June.
It is worth noting that the drop in the week of the $16^\text{th}$ of March of 2020, that is detected by DAIS and CPOP, coincides with the day that the Prime Minister Boris Johnson announced the government would be implementing measures intended to halt the spread of the COVID-19 virus.
The big drop around June of 2022, with change-points detected by all methods in two consecutive weeks, on the $6^\text{th}$ and $13^\text{th}$, falls around the time when fuel prices soared, according to \url{https://www.theguardian.com/theguardian/2022/jun/09}.

\begin{figure}[tbp]
\centering
\includegraphics[scale = 0.55]{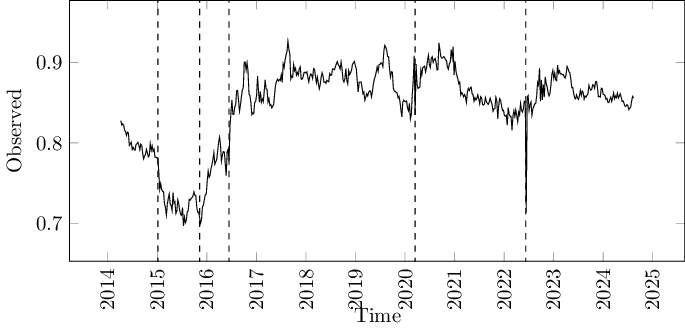}
\includegraphics[scale = 0.55]{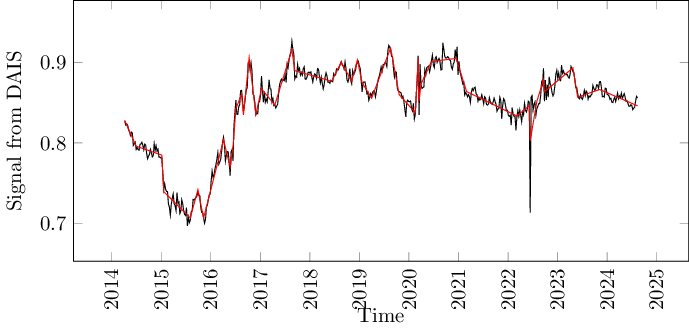}
\includegraphics[scale = 0.55]{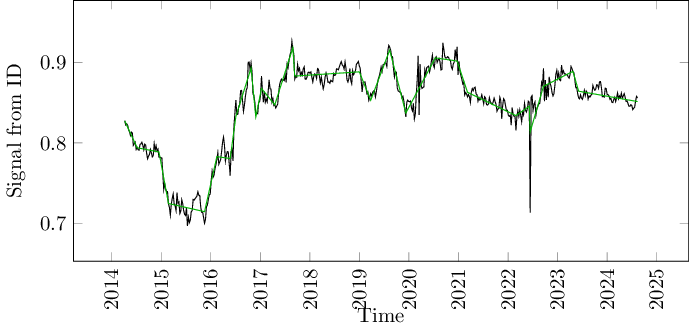}
\includegraphics[scale = 0.55]{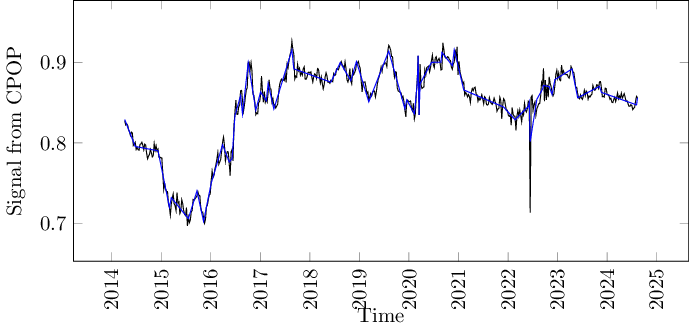}
\caption{\label{fig:EURGBP} The observed time series data of Euro to British pound sterling exchange rate and fitted piecewise-linear signal obtained by applying DAIS (red), ID (green) and CPOP (blue). The x-axis labels correspond to the year.}
\end{figure}

\section{Conclusions}\label{Conclusions}

In this paper, DAIS is introduced, which is a new data-adaptive method for detecting structural changes in a given data sequence. 
The first step of the method is to identify the location of the largest difference in the sequence, as defined in \eqref{def_largest_diff} for the cases of changes in the mean and changes in the slope, in an effort to start searching from an area around the most apparent change-point in the case that there is one. 
This step is what makes our method data-adaptive and differentiates it from the competitors.
Expanding intervals around the location of the largest difference, in absolute value, are used to achieve isolation of the change-points, which enhances the detection power. 
Theoretical results regarding the consistency of the number of change-points detected and the accuracy of their estimated locations are provided.
The simulation results presented
indicate that DAIS is at least as accurate as the state-of-the-art competitors, \sla{exhibiting uniformly good behavior across different signal structures}.
Its data-adaptive nature is advantageous regarding the detection of the true change-points in difficult structures.
DAIS uses intervals around the location of the largest difference and so the change-points are, with high probability, near the midpoint of these intervals, which results in the power of the contrast function being maximized. 
\sla{Such an interval expansion approach around the location of the true change-points, offers also an advantage in terms of the computational complexity of the algorithm, especially in cases with few change-points.}
For the proof of the theoretical properties of DAIS we require that the location of the largest difference is at a location where detection, while the change-point is isolated in an interval, is guaranteed. 
\sla{However, in practice, the change-points can still potentially get detected even when this assumption fails.
Detection could occur in an interval in which the change-point is isolated, but whose end-points are not far enough from the true location of the change-point to guarantee detection.
It could also occur in an interval in which the change-point is not isolated.
In such intervals, as described in both cases, the detection power is reduced, but detection can still occur.
}

\sla{Finally, we consider some extensions of DAIS to more difficult structures. 
This includes heavy tailed noise and temporal dependence.
The case of multivariate signals is also discussed and further analyzed, with the algorithm MDAIS being proposed. Under some assumptions that are equivalent to the ones for the univariate signals, a consistency result is provided for the case of possible changes in the mean vector of an underlying $d-$dimensional signal.
}

\


\begin{appendix}

\section{Signals used in the simulation study} \label{simulations_supplement}
The signals used in the simulations of Section \ref{simulations} are reported below. 
For the case of changes in the mean (piecewise-constancy), the signals $f_t$ in \eqref{model} are the following:
\begin{itemize}
    \item[(S1)] $small\_dist$: sequence of length 1000 with 2 change-points at 485 and 515 with values between change-points 0, 1, 0. The standard deviation is $\sigma=1$. 

     \item[\sla{(S2)}] \sla{$small\_dist2$ piecewise-constant signal of length 135 with 2 change-points at 30, 35 with values between change-points 0, 2.3, 8. The standard deviation is $\sigma=1$. }
     
    \item[{(S3)}] $stairs$: piecewise-constant signal of length 150 with 14 change-points at 10, 20, 30, 40, 50, 60, 70, 80, 90, 100, 110, 120, 130, 140 with values between change-points 1, 2, 3, $\ldots$, 15. The standard deviation is $\sigma=0.3$. 
    
    \item[{(S4)}] $mix$: piecewise-constant signal of length 301 with 9 change-points at 11, 21, 41, 61, 91, 121, 161, 201, 251 with values between change-points 7, -7, 6, -6, 5, -5, 4, -4, 3, -3. The standard deviation is $\sigma=4$. 
    
    \item[{(S5)}] $mix2$: piecewise-constant signal of length 75 with 11 change-points at 5, 12, 17, 25, 31, 38, 44, 50, 56, 61, 67 with values between the change-points 0, 5, 0, 6, 0, 4, 0, 5, 0, 6, 0, 4. The standard deviation is $\sigma = 1$.
    
    \item[{(S6)}] $many\_cpts$: piecewise-constant signal of length 700 with 99 change-points at 7, 14, $\ldots$, 693 with values between change-points 0, 4, 0, 4, $\ldots$, 0, 4. The standard deviation is $\sigma=1$. 

     \item[{(S7)}] $many\_cpts\_long$: piecewise constant signal of length 600 with 119 change-points at 5, 10, 15, $\ldots$, 595 with values between change-points 0, 5, 0, $\ldots$, 0, 5. The standard deviation is $\sigma = 1$.

     \item[{(S8)}] $simple\_signal$: sequence of length 1100 with 1 change-point at 550 with values between change-points 0, 2. The standard deviation is $\sigma=1$.

\end{itemize}

Under the case of continuous, piecewise-linearity, the signals $f_t$ used are:
\begin{itemize}
    \item[{(S9)}] $wave1$: continuous, piecewise-linear signal of length 1408 with 7 change-points at 256, 512, 768, 1024, 1152, 1280, 1344 with changes in slope $-1/64$, $2/64$, $-3/64$, $4/64$, $-5/64$, $6/64$, $-7/64$. The starting intercept is $f_1=1$ and slope $f_2-f_1=1/256$. The standard deviation is $\sigma=1$. 

    \item[{(S10)}] $wave2$: continuous, piecewise-linear signal of length 1500 with 99 change-points at 15, 30, $\ldots$, 1485 with changes in slope -1, 1, -1 $\ldots$, -1. The starting intercept is $f_1=-1/2$ and slope $f_2-f_1=1/40$. The standard deviation is $\sigma=1$.

    \item[{(S11)}] $wave3$: continuous, piecewise-linear signal of length 840 with 119 change-points at 7, 14, 21, $\ldots$, 833 with changes in slope -1, 1, -1 $\ldots$, -1. The starting intercept is $f_1=-1/2$ and slope $f_2-f_1=1/32$. The standard deviation is $\sigma=0.3$. 
\end{itemize}
\section{Proof of Theorem \ref{consistency_theorem}} \label{proofs}
Before proving Theorem \ref{consistency_theorem}, we introduce some more notation. 
For convenience, we denote \eqref{CUSUM} by
    $\Tilde{X}^b_{s,e} = \sqrt{\frac{e-b}{\ell(b-s+1)}}\sum^b_{t=s}X_t - \sqrt{\frac{b-s+1}{\ell(e-b)}}\sum^{e}_{t=b+1}X_t.$
Similarly, we denote the CUSUM of the unobserved signal as $ \Tilde{f}^b_{s,e} = \sqrt{\frac{e-b}{\ell(b-s+1)}}\sum^b_{t=s}f_t - \sqrt{\frac{b-s+1}{\ell(e-b)}}\sum^{e}_{t=b+1}f_t$. 
\sla{We also work under \eqref{eq: def minimum magnitude of change}, \eqref{eq: def magnitude of change} and \eqref{eq: def distance between cpts}.}
For the following proof, the contrast vector \linebreak $\boldsymbol{\psi_{s,e}^b} = \left( \psi_{s,e}^b(1), \psi_{s,e}^b(2), \dots, \psi_{s,e}^b(T_1) \right)$ is defined through the contrast function
\begin{align} \label{psi_definition}
    \psi_{s,e}^b(t)=\left\{\begin{array}{ll}
    \sqrt{\frac{e-b}{\ell(b-s+1)}}, &t =s, s+1, \dots,b, \\
    -\sqrt{\frac{b-s+1}{\ell(e-b)}}, &t =b+1, b+2, \dots,e, \\
    0, &\text{otherwise}
    \end{array}\right.,
\end{align}
where $s\leq b < e$ and $\ell=e-s+1$. 
Notice that for any vector $\boldsymbol{v} = \left(v_1, v_2, \dots, v_{T_1}\right)$, we have that $\langle \boldsymbol{v}, \boldsymbol{\psi_{s,e}^b} \rangle = \tilde{v}_{s,e}^b$. \\

For the proof of Theorem \ref{consistency_theorem}, we require the following Lemma.
\begin{lemma} \label{Lemma_for_main_thm}
Suppose $\boldsymbol{f} = \left( f_1, f_2, \dots, f_{T} \right)^T$ is a piecewise-constant vector. 
Pick any interval $[s,e] \subset [1,T]$ such that $[s,e-1]$ contains exactly one change-point $r$. 
Let $\rho = \left| r-b \right|, \Delta^f = \left| f_{r+1} - f_r \right|,
\eta_L = r-s+1$ and $\eta_R = e-r$. Then,
\begin{equation*}
    \Bigl\|\boldsymbol{\psi_{s,e}^b}\langle\boldsymbol{f},\boldsymbol{\psi_{s,e}^b}\rangle - \boldsymbol{\psi_{s,e}^{r}}\langle\boldsymbol{f},\boldsymbol{\psi_{s,e}^{r}}\rangle\Bigr\|_{2}^2 
    = \left( \tilde{f}_{s,e}^r \right)^2 - \left( \tilde{f}_{s,e}^b \right)^2.
\end{equation*}
In addition,
\begin{enumerate}
    \item for any $r \leq b < e$, $\left( \tilde{f}_{s,e}^r \right)^2 - \left( \tilde{f}_{s,e}^b \right)^2 = \left( \rho \eta_L / (\rho + \eta_L)\right)\left( \Delta^f \right)^2$;
    
    \item for any $s \leq b < r$, $\left( \tilde{f}_{s,e}^r \right)^2 - \left( \tilde{f}_{s,e}^b \right)^2 = \left( \rho \eta_R / (\rho + \eta_R)\right)\left( \Delta^f \right)^2$.
\end{enumerate}
\end{lemma}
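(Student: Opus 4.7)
\textbf{Proof plan for Lemma \ref{Lemma_for_main_thm}.}

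The argument splits naturally into two pieces: first the norm identity, then the explicit values in the two geometric cases. My plan is to prove the norm identity via an orthogonality / projection argument, and then reduce both cases to a short direct computation of $\tilde{f}_{s,e}^r$ and $\tilde{f}_{s,e}^b$.

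For the norm identity, I would first note the easy facts that $\|\boldsymbol{\psi_{s,e}^b}\|_2^2 = (b-s+1)\frac{e-b}{\ell(b-s+1)} + (e-b)\frac{b-s+1}{\ell(e-b)} = 1$ and that $\boldsymbol{\psi_{s,e}^b}$ has zero entry-sum on $[s,e]$ for every $b \in [s,e)$. The key observation is that since $\boldsymbol{f}$ has exactly one change-point in $[s,e-1]$ at $r$, the restriction of $\boldsymbol{f}$ to $[s,e]$ lies in the two-dimensional span of $\mathbf{1}_{[s,e]}$ and $\boldsymbol{\psi_{s,e}^{r}}$; more precisely $\boldsymbol{f}|_{[s,e]} = c_1 \mathbf{1}_{[s,e]} + c_2 \boldsymbol{\psi_{s,e}^{r}}$ with $c_2 = \tilde f_{s,e}^r$. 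Taking inner products against $\boldsymbol{\psi_{s,e}^b}$ (which kills the $\mathbf{1}$-component) gives
\[
\tilde f_{s,e}^b = \tilde f_{s,e}^r \,\langle \boldsymbol{\psi_{s,e}^b}, \boldsymbol{\psi_{s,e}^{r}}\rangle.
\]
Writing $a_b = \tilde f_{s,e}^b$, $a_r = \tilde f_{s,e}^r$ and expanding,
\[
\|a_b\boldsymbol{\psi_{s,e}^b} - a_r\boldsymbol{\psi_{s,e}^{r}}\|_2^2 = a_b^2 + a_r^2 - 2 a_b a_r\,\langle \boldsymbol{\psi_{s,e}^b}, \boldsymbol{\psi_{s,e}^{r}}\rangle = a_b^2 + a_r^2 - 2 a_b^2 = a_r^2 - a_b^2,
\]
which is the first claim.

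For the explicit values, a direct substitution from \eqref{psi_definition} gives $\tilde f_{s,e}^r = \sqrt{\eta_L \eta_R/\ell}\,(\mu_1 - \mu_2)$ where $\mu_1, \mu_2$ are the values of $\boldsymbol{f}$ on $[s,r]$ and $[r+1,e]$ respectively. For case 1, i.e.\ $r \le b < e$, I split the CUSUM sum defining $\tilde f_{s,e}^b$ across the change-point (writing $b-s+1 = \eta_L + \rho$ and $e-b = \eta_R - \rho$), cancel the common $\mu_2(\eta_L+\rho)$ from the weighted sums, and obtain after factoring
\[
\tilde f_{s,e}^b = \frac{\eta_L\sqrt{\eta_R-\rho}}{\sqrt{\ell(\eta_L+\rho)}}\,(\mu_1-\mu_2).
\]
Then
\[
(\tilde f_{s,e}^r)^2 - (\tilde f_{s,e}^b)^2 = \frac{(\Delta^f)^2 \eta_L}{\ell(\eta_L+\rho)}\bigl[\eta_R(\eta_L+\rho)-\eta_L(\eta_R-\rho)\bigr] = \frac{\rho\,\eta_L}{\rho+\eta_L}(\Delta^f)^2,
\]
using $\ell = \eta_L+\eta_R$. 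Case 2 ($s \le b < r$) is handled by the symmetric computation, swapping the roles of $\eta_L$ and $\eta_R$, which yields the stated $\rho\eta_R/(\rho+\eta_R)\,(\Delta^f)^2$.

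There is no serious obstacle; the only thing to watch is the algebra in the case split, which must respect whether $b$ is to the left or the right of $r$. Thanks to the projection identity in part one, it would actually suffice to compute $(\tilde f_{s,e}^r)^2 - (\tilde f_{s,e}^b)^2$ directly, so the norm identity substantially shortens the work compared to expanding $\|\,\cdot\,\|_2^2$ coordinate-by-coordinate.
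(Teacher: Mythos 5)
Your proof is correct, and it takes a genuinely different route from the paper, whose entire ``proof'' of this lemma is the single line ``See Lemma 4 from \cite{baranowski2019narrowest}'' --- i.e.\ the paper defers to an external reference and contains no argument of its own. Your argument is self-contained and the key idea is a good one: observing that $\{\ell^{-1/2}\mathbf{1}_{[s,e]}, \boldsymbol{\psi_{s,e}^{r}}\}$ is an orthonormal pair spanning the vectors constant on $[s,r]$ and on $[r+1,e]$, so that $\boldsymbol{f}|_{[s,e]}$ lies in that span with coefficient $\tilde f_{s,e}^r$ on $\boldsymbol{\psi_{s,e}^{r}}$, whence $\tilde f_{s,e}^b = \tilde f_{s,e}^r\langle\boldsymbol{\psi_{s,e}^b},\boldsymbol{\psi_{s,e}^{r}}\rangle$. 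This identity does double duty: it collapses the norm expansion to $a_r^2 - a_b^2$ without ever computing $\langle\boldsymbol{\psi_{s,e}^b},\boldsymbol{\psi_{s,e}^{r}}\rangle$ explicitly, and it reduces both enumerated cases to evaluating the two scalars $\tilde f_{s,e}^r$ and $\tilde f_{s,e}^b$. I checked the algebra: in case 1, with $b-s+1=\eta_L+\rho$ and $e-b=\eta_R-\rho$, one indeed gets $\tilde f_{s,e}^b = \eta_L\sqrt{\eta_R-\rho}\,(\mu_1-\mu_2)/\sqrt{\ell(\eta_L+\rho)}$, and the bracket $\eta_R(\eta_L+\rho)-\eta_L(\eta_R-\rho)=\rho\ell$ (using $\ell=\eta_L+\eta_R$) yields exactly $\rho\eta_L/(\rho+\eta_L)\,(\Delta^f)^2$; case 2 follows by the symmetric computation. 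The hypotheses you invoke are precisely those of the lemma (the single change-point in $[s,e-1]$ is what places $\boldsymbol{f}|_{[s,e]}$ in the two-dimensional span), so the proposal is complete, and its concrete advantage over the paper's treatment is that a reader can verify the statement in place rather than chasing the cited lemma.
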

\begin{proof}
    See Lemma 4 from \cite{baranowski2019narrowest}.
\end{proof}

The proof of Theorem \ref{consistency_theorem} consists of 6 steps. 
Step 1 is to show that the observed $\lvert \tilde{X}_{s,e}^b \rvert$ is uniformly close to the unobserved $\lvert \tilde{f}_{s,e}^b \rvert$ for all $1\leq s \leq b < e \leq T$. 
This will allow us to extend some results that will be derived for the signal, $f_t$, to the data sequence, $X_t$, in which we are interested. 
In Step 2, we control the distance between $\lvert \tilde{X}_{s,e}^{b_1} \rvert - \lvert \tilde{X}_{s,e}^{b_2} \rvert$ and $\lvert \tilde{f}_{s,e}^{b_1} \rvert - \lvert \tilde{f}_{s,e}^{b_2} \rvert$ for all possible combinations of $s, e, b_1, b_2$, where $1 \leq s < e \leq T$ and $b_1, b_2 \in [s, e)$. 
In Step 3, we show that it suffices to restrict the proof to an interval with a single change-point because each change-point will be isolated in an interval where detection will occur with high probability, as discussed in Section \ref{discussion}. 
In this step, we also show that the estimated change-point $\hat{r}_j$ will be close to the actual change-point $r_j$. 
Since after detection DAIS restarts in intervals with end- (or start-) point the start- (or end-) point of the interval where the detection occurred, in Step 4 we prove that there is no change-point, besides the detected one, in the intervals that are skipped, with probability 1. 
In Step 5, we show that the new intervals used after detection allow for the detection of all the remaining change-points. 
Finally, in Step 6 we show that when there is no change-point in the interval being checked, the algorithm will not have any false detections and will terminate.
\vspace{0.1in}
\\
\begin{proof}
We will prove the more specific result
\begin{equation}
\label{mainresult_theorem}
\mathbb{P}\Biggl( \hat{N} = N, \max_{j=1, 2, \ldots, N} \biggl( \left| \hat{r}_j - r_j \right| \left( \Delta^{f}_j \right)^{2} \biggl) \leq C_{3} \log T \Biggl) \geq 1 - \frac{1}{6\sqrt{\pi}{T}},
\end{equation}
which implies result \eqref{main result mean}.
\vspace{0.1in}
\\
\textbf{Step 1:} Allow us to denote by
\begin{equation}
\label{A_T}
A_{T} = \left\lbrace \max_{s,b,e: 1\leq s \leq b < e \leq T}\left|\tilde{X}_{s,e}^b - \tilde{f}_{s,e}^b\right|\leq \sqrt{8\log T} \right\rbrace.
\end{equation}
We will show that $\Prob\left(A_{T} \right) \geq 1-1/(12\sqrt{\pi}{T})$. 
From \eqref{model_sigma1} and \eqref{CUSUM}, simple steps yield $\tilde{X}_{s,e}^b - \tilde{f}_{s,e}^b = \tilde{\epsilon}_{s,e}^b$, where $\tilde{\epsilon}_{s,e}^b \sim \mathcal{N}(0,1)$. 
Thus, for $Z \sim \mathcal{N}(0,1)$, using the Bonferroni inequality we get that 
\begin{align}
\nonumber \Prob\Bigl((A_{T})^{c}\Bigr) &= \Prob\left(\max_{s,b,e: 1\leq s \leq b < e \leq {T}}\left|\tilde{X}_{s,e}^b - \tilde{f}_{s,e}^b\right| > \sqrt{8\log {T}}\right)\\
\nonumber & \leq \sum_{1\leq s\leq b <e \leq {T}}\Prob\left(\left|\tilde{\epsilon}_{s,e}^b\right|>\sqrt{8\log {T}}\right) \leq \frac{{T}^3}{6}\Prob(|Z|>\sqrt{8\log {T}})\\
\nonumber & = \frac{{T}^3}{3}\Prob\left(Z>\sqrt{8 \log {T}}\right)\leq \frac{{T}^3}{3}\frac{\phi(\sqrt{8\log {T}})}{\sqrt{8\log {T}}} \leq \frac{1}{12\sqrt{\pi}{T}},
\end{align}
where $\phi(\cdot)$ is the probability density function of the standard normal distribution.
\vspace{0.1in}
\\
{\textbf{Step 2:}} For intervals $[s,e)$ that contain only one true change-point $r$, for $\boldsymbol{\psi_{s,e}^b}$ as defined in \eqref{psi_definition}, we denote by
\begin{equation}
\label{B_T}
B_{T} = \left\lbrace \max_{j=1,2,\ldots,N} \max_{\substack{r_{j-1}<s\leq r_j\\r_j < e \leq r_{j+1}\\s\leq b < e}}
\frac{\left|\left\langle\boldsymbol{\psi_{s,e}^b}\langle\boldsymbol{f},\boldsymbol{\psi_{s,e}^b}\rangle - \boldsymbol{\psi_{s,e}^{r}}\langle\boldsymbol{f},\boldsymbol{\psi_{s,e}^{r}}\rangle,\boldsymbol{\epsilon}\right\rangle\right|}{\|\boldsymbol{\psi_{s,e}^b}\langle\boldsymbol{f},\boldsymbol{\psi_{s,e}^b}\rangle - \boldsymbol{\psi_{s,e}^{r}}\langle\boldsymbol{f},\boldsymbol{\psi_{s,e}^{r}}\rangle\|_{2}}\leq \sqrt{8 \log T}\right\rbrace.
\end{equation}
Because 
\begin{equation*}
    \left| \left \langle \boldsymbol{\psi_{s,e}^b} \langle \boldsymbol{f}, \boldsymbol{\psi_{s,e}^b} \rangle - \boldsymbol{\psi_{s,e}^{r_j}} \langle \boldsymbol{f}, \boldsymbol{\psi_{s,e}^{r_j}} \rangle \right \rangle \right|/ \left(\| \boldsymbol{\psi_{s,e}^b} \langle \boldsymbol{f}, \boldsymbol{\psi_{s,e}^b} \rangle - \boldsymbol{\psi_{s,e}^{r_j}} \langle \boldsymbol{f}, \boldsymbol{\psi_{s,e}^{r_j}} \rangle\|_{2} \right)
\end{equation*} 
follows the standard normal distribution, we use a similar approach as in Step 1, to show that 
$\Prob\Bigl(\left(B_{T}\right)^{c}\Bigr) \leq {1}/{12\sqrt{\pi}{T}}$. 
Therefore, Step 1 and Step 2 lead to
\begin{equation}
\nonumber \Prob\left(A_{T} \cap B_{T} \right) \geq 1 - \frac{1}{6\sqrt{\pi} {T}}.
\end{equation}
{\textbf{Step 3:}} This is the main part of our proof. 
From now on, we assume that $A_T$ and $B_T$ both hold. 
The constants we use are
\begin{equation}
    C_1 = \sqrt{C_3} + \sqrt{8}, C_2 = \frac{1}{\sqrt{4n}} - \frac{2\sqrt{2}}{\underline{C}}, C_3 = 2(2\sqrt{2}+4)^2,
\end{equation}
where $\underline{C}$ satisfies Assumption (A2), $\sqrt{\delta_T}\underline{f}_T \geq \underline{C} \sqrt{\log T}$
and \slb{$n \geq 3/2$} as defined in Section \ref{discussion}.
For $j \in \{1, 2, \ldots, N\}$ define $I^L_j$ and $I^R_j$ as in \eqref{intervals_discussion}.
The location of the largest difference in an interval $[s,e]$, $1\leq s<e\leq T$, is defined as $d_{s,e} = \textrm{argmax}_{t \in \{s, s+1, \dots, e-1\}}\{\lvert X_{t+1} - X_{t} \rvert\}$ as in \eqref{def_largest_diff} and for $K^l = \lceil \frac{d_{s,e}-s+1}{\lambda_T} \rceil$, $K^r = \lceil \frac{e-d_{s,e}+1}{\lambda_T} \rceil$, $K^{\max} = \max\{K^l, K^r\}$, define
\begin{align} \label{end-points_proof}
    & c_{m}^l =
    \max\{d_{s,e}-m\lambda_T, s\}, \quad m =0, 1, \dots,K^{\max}
    , \nonumber \\
    & c_{k}^r = 
    \min\{d_{s,e}+k\lambda_T-1,e\}, \quad k =1,2, \dots,K^{\max}.
\end{align}
\slb{Since the length of the intervals in \eqref{intervals_discussion} is $\delta_T/2n$,
$\lambda_T \leq \delta_T/2n$ ensures that there exists at least one $m\in \{0,1,\ldots,K^{\max}\}$ and at least one $k\in \{1,2\ldots,K^{\max}\}$} such that $c_m^l \in I^L_j$ and $c_k^r \in I^R_j$ for all $j \in \{1, 2, \ldots, N\}$ and for all possible locations of the largest difference.
\vspace{0.1in}
\\
At the beginning of DAIS, $s=1, e=T$ and the first change-point that will get detected depends on the value of $d_{1,T}$. 
As already explained in Section \ref{discussion}, for $j\in\{1,\ldots,N-1\}$, the largest difference $d_{s,e}$ will be at most at a distance $\delta_{1,T}^j$ from the nearest change-point $r_j$ or $r_{j+1}$, where $\delta_{1,T}^j \leq \frac{\Tilde{\delta}_j}{2} - \frac{3\delta_T}{4n}$ for $\Tilde{\delta}_j = r_{j+1} - r_j$. 
The first point to get detected will be the point that is closest to the largest difference $d_{1,T}$. 
The interval where the detection of this change-point occurs, cannot contain more than one change-points, as was explained in Section \ref{discussion} and is proved at \eqref{isolation}.
\vspace{0.1in}
\\
We will show that there exists an interval $[c_m^l, c_k^r]$, for 
\sla{$m\in \{0,1,\ldots,K^{\max}\}$ and $k\in \{1,2\ldots,K^{\max}\}$},
such that the first change-point to get detected, $r_J$, is isolated, assuming that $d_{1,T} > r_J$. 
\sla{A similar} approach works for $d_{1,T} \leq r_J$. 
\sla{The differences arise from the fact that the expansions are always performed starting from the right, which implies that if the change-point is to the right of $d_{1,T}$, an uneven number of expansions will be performed before the end-point of the interval lies in \eqref{intervals_discussion}.}
Without loss of generality, we suppose that
$J \in \left\lbrace 1,2,\ldots, N-1 \right\rbrace$.
The result in the case that $J=N$ is easier to prove and so it is skipped, \sla{since, as can be seen by \eqref{eq: assumption_largest_diff}, the change-point will certainly be isolated when detection occurs if $d_{1,T} > r_N$}.
Note that since the closest change-point to the largest difference is $r_J$, it must hold that $\delta_{1,T}^J = d_{1,T}-r_J$.
We are now considering the largest possible interval that may be checked before detection occurs in the sense that the largest number of expansions will be performed.
Detection is guaranteed when checking this interval, but it may occur in any interval smaller than this.
Showing that the change-point will be isolated in the largest interval means that it will \sla{also} be isolated \sla{in any of the smaller intervals}.
Since we are considering $d_{1,T} > r_J$, the worst case scenario, \sla{(meaning when DAIS checks the largest interval)} occurs for $m=k$, 
$c_m^l \in I^L_J$ and $c_k^r - r_J > \delta_T / 2n$, but it does not necessarily hold that $c_k^r \in I^R_J$. So, $r_J - c_m^l \leq \delta_T/n$ and $c_k^r - d_{1,T} + 1= d_{1,T} - c_m^l$.
It follows that
\begin{align}\label{isolation}
    c_k^r - c_m^l 
    & = c_k^r - d_{1,T} + d_{1,T} - c_m^l 
    = 2(d_{1,T} - c_m^l)-1 
    < 2(d_{1,T} - r_J + r_J - c_m^l) \nonumber\\
    & < 2 \Bigl(\delta_{1,T}^J + \frac{\delta_T}{n}\Bigr) 
    \leq \Tilde{\delta}_J + \frac{\delta_T}{2n},
\end{align}
\sla{where we used \eqref{delta upper bound} in the last step.}
Now, since $r_J - c_m^l \leq \delta_T/n$, there can be at most one change-point in the interval $[c_m^l, r_J]$. 
Considering the interval $[r_J, c_k^r]$, using \eqref{isolation} and $r_J - c_m^l > \delta_T/2n$, it holds that $c_k^r - r_J = c_k^r - c_m^l + c_m^l - r_J < \Tilde{\delta}_J$, so $c_k^r < r_{J+1}$ and $r_J$ is isolated in $[r_J, c_k^r]$.
The two results combined prove that the change-point is isolated in the interval $[c_m^l, c_k^r]$.
\vspace{0.1in}
\\
Now, we consider $J\in\{1,\ldots,N\}$.
We will show that for $\tilde{b} = \textrm{argmax}_{c_m^l\leq t < c_k^r} \lvert \tilde{X}^t_{c_m^l, c_k^r} \rvert$, it holds that $\Big| \tilde{X}^{\tilde{b}}_{c_m^l, c_k^r} \Big| > \zeta_T$. 
Using \eqref{A_T}, we have that
\begin{equation} \label{initial_bound}
    \Big| \tilde{X}^{\tilde{b}}_{c_m^l, c_k^r} \Big| \geq \Big| \tilde{X}^{r_J}_{c_m^l, c_k^r} \Big| \geq \Big| \tilde{f}^{r_J}_{c_m^l, c_k^r} \Big| - \sqrt{8\log T}.
\end{equation}
But, 
\begin{align} \label{unobserved_cusum}
    \Big| \tilde{f}^{r_J}_{c_m^l, c_k^r} \Big| &= \Bigg| \sqrt{\frac{c_k^r - r_J} {(c_k^r-c_m^l+1) (r_J - c_m^l + 1)}} (r_J - c_m^l + 1) f_{r_J} \\
    & \hspace{3cm} - \sqrt{\frac{r_J - c_m^l + 1} {(c_k^r-c_m^l+1) (c_k^r - r_J)}} (c_k^r - r_J) f_{r_J+1} \Bigg| \nonumber\\
    & = \Bigg| \sqrt{\frac{(c_k^r - r_J) (r_J - c_m^l + 1)} {(c_k^r-c_m^l+1)}} f_{r_J} - \sqrt{\frac{(r_J - c_m^l + 1) (c_k^r - r_J)} {(c_k^r-c_m^l+1)}} f_{r_J+1} \Bigg| \nonumber \\
    & = \sqrt{\frac{(c_k^r - r_J) (r_J - c_m^l + 1)} {c_k^r-c_m^l+1}} \Delta^f_J \nonumber
    \geq \sqrt{\frac{(c_k^r - r_J) (r_J - c_m^l + 1)} {2\max\{c_k^r-r_J,r_J-c_m^l+1\}}}  \Delta^f_J \\
    & = \sqrt{\frac{\min\{c_k^r-r_J,r_J-c_m^l+1\}}{2}} \Delta^f_J.
\end{align}
If we show that 
\begin{equation} \label{bound_for_min}
    \min\{c_k^r-r_J,r_J-c_m^l+1\}\geq \frac{\delta_T}{2n}
\end{equation}
then, using Assumption (A2) and the results in \eqref{initial_bound} \eqref{unobserved_cusum}, \eqref{bound_for_min}, we have that
\begin{align} \label{proof_of_detection}
    \Big| \tilde{X}^{\tilde{b}}_{c_m^l, c_k^r} \Big| & \geq \sqrt{\frac{\delta_T}{4n}}\Delta^f_J - \sqrt{8\log T}
    \geq \sqrt{\frac{\delta_T}{4n}}\underline{f_T} - \sqrt{8\log T} \nonumber\\
    & = \biggl( \frac{1}{\sqrt{4n}} - \frac{2\sqrt{2\log T}}{\sqrt{\delta_T}\underline{f}_T} \bigg) \sqrt{\delta_T} \underline{f}_T
    \geq \biggl( \frac{1}{\sqrt{4n}} - \frac{2\sqrt{2}}{\underline{C}} \bigg) \sqrt{\delta_T} \underline{f}_T \nonumber\\
    & = C_2 \sqrt{\delta_T} \underline{f}_T > \zeta_T
\end{align}
and thus, the change-point will get detected.
\vspace{0.1in}
\\
But \eqref{bound_for_min} holds as, for $m=k$, it holds that $c_k^r - d_{1,T} + 1 = d_{1,T} - c_m^l$ and so
\begin{itemize}
    \item If $d_{1,T} \leq r_J$, then $c_k^r \in I^R_J$ and so $c_k^r-r_J > \delta_T/2n$ and $r_J - c_m^l + 1 = r_J - d_{1,T} + d_{1,T} - c_m^l + 1 > c_k^r - d_{1,T} \geq c_k^r-r_J > \delta_T/2n$.

    \item If $d_{1,T} > r_J$, then $c_m^l \in I^L_J$ and so $r_J-c_m^l+1 > \delta_T/2n$ and $c_k^r - r_J = c_k^r - d_{1,T} + d_{1,T} - r_J > d_{1,T} - c_m^l - 1 \geq r_J - c_m^l \geq \delta_T/2n$.
\end{itemize}
\vspace{0.1in}
Therefore, we have proved that there will be an interval of the form $[c_{\tilde{m}}^l, c_{\tilde{k}}^r]$, such that the interval contains $r_J$ and no other change-point and $\max_{c_{\tilde{m}}^l \leq t < c_{\tilde{k}}^r} \Big| \tilde{X}^t_{c_{\tilde{m}}^l, c_{\tilde{k}}^r} \Big| > \zeta_T$. 
For $k^{\ast} \in \{1, 2, \ldots K^{\max}\}$ and $m^{\ast} \in \{k^{\ast}-1, k^{\ast}\}$, denote by $c_{m^{\ast}}^l \geq c_{\tilde{m}}^l$ and $c_{k^{\ast}}^r \leq c_{\tilde{k}}^r$ the first left- and right-expanding points, respectively, that this happens and let $b_J = \textrm{argmax}_{c_{m^{\ast}}^l \leq t < c_{k^{\ast}}^r} \Big| \tilde{X}^t_{c_{m^{\ast}}^l, c_{k^{\ast}}^r} \Big|$, with $\Big| \tilde{X}^{b_J}_{c_{m^{\ast}}^l, c_{k^{\ast}}^r} \Big| > \zeta_T$. 
Note that $b_J$ cannot be an estimation of $r_j, j\neq J$, as $r_J$ is isolated in the interval where it is detected. 
Our aim now is to find $\gamma_T > 0$, such that for any $b^{\ast} \in \{c_{m^{\ast}}^l, c_{m^{\ast}}^l + 1, \ldots, c_{k^{\ast}}^r-1\}$ with $\lvert b^{\ast} - r_J\rvert \Bigl( \Delta_J^f \Bigl)^2 > \gamma_T$, we have that
\begin{equation}\label{contradiction}
    \Bigl( \tilde{X}^{r_J}_{c_{m^{\ast}}^l, c_{k^{\ast}}^r} \Bigl)^2 > \Bigl( \tilde{X}^{b^{\ast}}_{c_{m^{\ast}}^l, c_{k^{\ast}}^r} \Bigl)^2.
\end{equation}
Proving \eqref{contradiction} and using the definition of $b_J$, we can conclude that \linebreak $\lvert b_J - r_J\rvert \Bigl( \Delta_J^f \Bigl)^2 \leq \gamma_T$. 
Now, using \eqref{model_sigma1}, it can be shown, for $\boldsymbol{\psi_{s,e}^{b}}$ as defined in \eqref{psi_definition}, that \eqref{contradiction} is equivalent to
\begin{align} \label{equivalent_contradiction}
    \Bigl( \tilde{f}^{r_J}_{c_{m^{\ast}}^l, c_{k^{\ast}}^r} \Bigl)^2 - \Bigl( \tilde{f}^{b^{\ast}}_{c_{m^{\ast}}^l, c_{k^{\ast}}^r} \Bigl)^2 & > 
    \Bigl( \tilde{\epsilon}^{b^{\ast}}_{c_{m^{\ast}}^l, c_{k^{\ast}}^r} \Bigl)^2 - \Bigl( \tilde{\epsilon}^{r_J}_{c_{m^{\ast}}^l, c_{k^{\ast}}^r} \Bigl)^2 \nonumber\\ 
    &+ 2\Big \langle \boldsymbol{\psi_{c_{m^{\ast}}^l, c_{k^{\ast}}^r}^{b^{\ast}}} \langle \boldsymbol{f}, \boldsymbol{\psi_{c_{m^{\ast}}^l, c_{k^{\ast}}^r}^{b^{\ast}}} \rangle - \boldsymbol{\psi_{c_{m^{\ast}}^l, c_{k^{\ast}}^r}^{r_J} } \langle \boldsymbol{f}, \boldsymbol{\psi_{c_{m^{\ast}}^l, c_{k^{\ast}}^r}^{r_J} } \rangle, \boldsymbol{\epsilon} \Big \rangle.
\end{align}
Without loss of generality, assume that $b^{\ast} \in [r_J, c_{k^{\ast}}^r)$ and a similar approach holds when $b^{\ast} \in [c_{m^{\ast}}^l, r_J)$. 
Using Lemma \ref{Lemma_for_main_thm}, we have that for the left-hand side of \eqref{equivalent_contradiction},
\begin{equation} \label{Lambda_definition}
    \Bigl( \tilde{f}^{r_J}_{c_{m^{\ast}}^l, c_{k^{\ast}}^r} \Bigl)^2 - \Bigl( \tilde{f}^{b^{\ast}}_{c_{m^{\ast}}^l, c_{k^{\ast}}^r} \Bigl)^2 = \frac{\lvert r_J - b^{\ast}\rvert (r_J-c_{m^{\ast}}^l+1)}{\lvert r_J - b^{\ast}\rvert + (r_J-c_{m^{\ast}}^l+1)} \Bigl( \Delta_J^f \Bigl)^2 \colon = \Lambda.
\end{equation}
Also, for the right-hand side
\begin{equation} \label{eq_epsilon}
    \Bigl( \tilde{\epsilon}^{b^{\ast}}_{c_{m^{\ast}}^l, c_{k^{\ast}}^r} \Bigl)^2 - \Bigl( \tilde{\epsilon}^{r_J}_{c_{m^{\ast}}^l, c_{k^{\ast}}^r} \Bigl)^2 \leq \max_{s,e,b:s\leq b<e} \Bigl(\tilde{\epsilon}^b_{s,e}\Bigl)^2 \leq 8\log T
\end{equation}
and using Lemma \ref{Lemma_for_main_thm} and using \eqref{B_T},
\begin{align} \label{eq_extra_stuff}
    2\Big \langle \boldsymbol{\psi_{c_{m^{\ast}}^l, c_{k^{\ast}}^r}^{b^{\ast}}} & \langle \boldsymbol{f}, \boldsymbol{\psi_{c_{m^{\ast}}^l, c_{k^{\ast}}^r}^{b^{\ast}}} \rangle - \boldsymbol{\psi_{c_{m^{\ast}}^l, c_{k^{\ast}}^r}^{r_J} } \langle \boldsymbol{f}, \boldsymbol{\psi_{c_{m^{\ast}}^l, c_{k^{\ast}}^r}^{r_J} } \rangle, \boldsymbol{\epsilon} \Big \rangle \nonumber\\
    &\leq 2 \big\| \boldsymbol{\psi_{c_{m^{\ast}}^l, c_{k^{\ast}}^r}^{b^{\ast}}} \langle \boldsymbol{f}, \boldsymbol{\psi_{c_{m^{\ast}}^l, c_{k^{\ast}}^r}^{b^{\ast}}} \rangle - \boldsymbol{\psi_{c_{m^{\ast}}^l, c_{k^{\ast}}^r}^{r_J}} \langle \boldsymbol{f}, \boldsymbol{\psi_{c_{m^{\ast}}^l, c_{k^{\ast}}^r}^{r_J}} \rangle \big\|_2 \sqrt{8\log T} \nonumber\\
    & = 2\sqrt{\Lambda}\sqrt{8\log T} .
\end{align}
Using \eqref{Lambda_definition}, \eqref{eq_epsilon} and \eqref{eq_extra_stuff}, we can conclude that \eqref{equivalent_contradiction} is satisfied if $\Lambda > 8\log T + \sqrt{2} \sqrt{\Lambda}\sqrt{8\log T}$ is satisfied, which has solution
\begin{equation*}
    \Lambda > \left(2\sqrt{2} +4\right)^2 \log T.
\end{equation*}
From \eqref{Lambda_definition} and since
\begin{align*}
    \frac{\lvert r_J - b^{\ast}\rvert (r_J-c_{m^{\ast}}^l+1)}{\lvert r_J - b^{\ast}\rvert + r_J-c_{m^{\ast}}^l+1} 
    & \geq \frac{\lvert r_J - b^{\ast}\rvert (r_J-c_{m^{\ast}}^l+1)} {2 \max\{\lvert r_J - b^{\ast}\rvert, r_J-c_{m^{\ast}}^l+1\}} \\
    & = \frac{\min\{\lvert r_J - b^{\ast}\rvert, r_J-c_{m^{\ast}}^l+1\}}{2},
\end{align*}
we can conclude that
\begin{equation}\label{min_bound2}
    \min\{\lvert r_J - b^{\ast}\rvert, r_J-c_{m^{\ast}}^l+1\} > \frac{2\left(2\sqrt{2} +4\right)^2 \log T}{\Bigl( \Delta_J^f \Bigl)^2} = C_3 \frac{\log T}{\Bigl( \Delta_J^f \Bigl)^2}
\end{equation}
implies \eqref{contradiction}. Now, if
\begin{equation} \label{min_bound3}
    \min\{r_J-c_{m^{\ast}}^l+1, c_{k^{\ast}}^r - r_J\} > C_3 \frac{\log T}{\Bigl( \Delta_J^f \Bigl)^2},
\end{equation}
then \eqref{min_bound2} is restricted to $\lvert r_J - b^{\ast}\rvert \Bigl( \Delta_J^f \Bigl)^2 > C_3 \log T$ and this implies \eqref{contradiction}. So, we conclude that necessarily
\begin{equation} \label{result_step3}
    \lvert r_J - b_J\rvert \Bigl( \Delta_J^f \Bigl)^2 \leq C_3 \log T.
\end{equation}
But \eqref{min_bound3} must be true since if we assume that $\min\{r_J-c_{m^{\ast}}^l+1, c_{k^{\ast}}^r - r_J\} \leq C_3 \log T/\Bigl( \Delta_J^f \Bigl)^2$, then we have that
\begin{align*}
    \Bigl | \tilde{X}^{b_J}_{c_{m^{\ast}}^l, c_{k^{\ast}}^r} \Bigl | 
    & \leq \Bigl | \tilde{f}^{b_J}_{c_{m^{\ast}}^l, c_{k^{\ast}}^r} \Bigl | + \sqrt{8\log T} 
    \leq \Bigl | \tilde{f}^{r_J}_{c_{m^{\ast}}^l, c_{k^{\ast}}^r} \Bigl | + \sqrt{8\log T} \\
    & = \sqrt{\frac{(c_{k^{\ast}}^r - r_J) (r_J - c_{m^{\ast}}^l + 1)} {c_{k^{\ast}}^r-c_{m^{\ast}}^l+1}} \Delta^f_J + \sqrt{8\log T} \\
    & \leq \sqrt{\min\{c_{k^{\ast}}^r - r_J, r_J - c_{m^{\ast}}^l + 1\}} \Delta^f_J + \sqrt{8\log T} \\
    & \leq \bigl( \sqrt{C_3} + 2\sqrt{2} \bigl) \sqrt{\log T}
    = C_1 \sqrt{\log T}
    \leq \zeta_T,
\end{align*}
which contradicts $\Bigl | \tilde{X}^{b_J}_{c_{m^{\ast}}^l, c_{k^{\ast}}^r} \Bigl | > \zeta_T$.
\vspace{0.1in}
\\
Thus, we have proved that for $\lambda_T \leq \delta_T/2n$, working under the assumption that both $A_T$ and $B_T$ hold, there will be an interval $[c_{m^{\ast}}^l, c_{k^{\ast}}^r]$ with $\Bigl | \tilde{X}^{b_J}_{c_{m^{\ast}}^l, c_{k^{\ast}}^r} \Bigl |  > \zeta_T$, where $b_J = \textrm{argmax}_{c_{m^{\ast}}^l \leq t < c_{k^{\ast}}^r} \Big| \tilde{X}^t_{c_{m^{\ast}}^l, c_{k^{\ast}}^r} \Big|$ is the estimated location for the change-point $r_J$ that satisfies \eqref{result_step3}.
\vspace{0.1in}
\\
{\textbf{Step 4:}} After the detection of the change-point $r_J$ at the estimated location $b_J$ in the interval $[c_{m^{\ast}}^l, c_{k^{\ast}}^r]$, the process is repeated in the disjoint intervals $[1, c_{m^{\ast}}^l]$ and $[c_{k^{\ast}}^r, T]$, which contain $r_1, r_2, \ldots, r_{J-1}$ and $r_{J+1}, r_{J+2}, \ldots, r_N$ respectively. 
This means that we do not check if there are any more change-points in the interval $[c_{m^{\ast}}^l, c_{k^{\ast}}^r]$, besides the already detected $r_J$. 
Since $c_{m^{\ast}}^l \geq c_{\Tilde{m}}^l$ and $c_{k^{\ast}}^r \leq c_{\Tilde{k}}^r$, it holds that $[c_{m^{\ast}}^l, c_{k^{\ast}}^r] \subset [c_{\Tilde{m}}^l, c_{\Tilde{k}}^r]$ and the argument of \eqref{isolation} can be used to conclude that $r_J$ is isolated in $[c_{m^{\ast}}^l, c_{k^{\ast}}^r]$.
\vspace{0.1in}
\\
\textbf{Step 5:} After detecting $r_J$, the algorithm will first check the interval $[1, c_{m^{\ast}}^l]$. So, unless $r_J = r_1$ and $[1, c_{m^{\ast}}^l]$ contains no other change-points, the next change-point to get detected will be one of $r_1, r_2, \ldots, r_{J-1}$. 
The location of the largest difference in the interval $[1, c_{m^{\ast}}^l]$, $d_{1, c_{m^{\ast}}^l}$, will again determine which change-point will be detected next. 
If $d_{1, c_{m^{\ast}}^l}$ is at a position that the next detected change-point, as explained at the beginning of Step 3, will be one of $r_1, r_2, \ldots, r_{J-2}$, then we can prove exactly as we did for $r_J$ that there will eventually be an interval with end-points far enough from the change-point that allows detection while the change-point is isolated. 
\vspace{0.1in}
\\
Now, we need to discuss the case that the next change-point to get detected is $r_{J-1}$.
This is the closest change-point to the already detected $r_J$ so the end-point of the interval where DAIS is reapplied to depends on the interval where the detection of $r_J$ occurred. 
As before, for $k_{J-1} \in \{1,\ldots,K^{\max}\}$ and $m_{J-1} \in \{k_{J-1} - 1, k_{J-1}\}$
we will show that $r_{J-1}$ gets detected in $[c_{m_{J-1}^{\ast}}^l, c_{k_{J-1}^{\ast}}^r]$, where $c_{m_{J-1}^{\ast}}^l \geq c_{m_{J-1}}^l$ and $c_{k_{J-1}^{\ast}}^r \leq c_{k_{J-1}}^r \leq c_{m^{\ast}}^l$ and its detection is at location 
\begin{equation*}
    b_{J-1} = \textrm{argmax}_ {c_{m_{J-1}^{\ast}}^l \leq t < c_{k_{J-1}^{\ast}}^r} \Bigl| \tilde{X}^t _ {c_{m_{J-1}^{\ast}}^l, c_{k_{J-1}^{\ast}}^r} \Bigl|,
\end{equation*} 
which satisfies $\bigl| r_{J-1} - b_{J-1}\bigl| \Bigl( \Delta_{J-1}^f \Bigl)^2 \leq C_3 \log T$. 
Firstly, $r_{J-1}$ is isolated in the interval $[c_{m_{J-1}}^l, c_{k_{J-1}}^r]$ using the same argument as in \eqref{isolation}. 
Following similar steps as in \eqref{unobserved_cusum}, we have that for $\tilde{b}_{J-1} = \textrm{argmax}_ {c_{m_{J-1}}^l \leq t < c_{k_{J-1}}^r} \Bigl| \tilde{X}^t _ {c_{m_{J-1}}^l, c_{k_{J-1}}^r} \Bigl|$,
\begin{align} \label{cusum_step5}
    \Bigl| \tilde{X}^{\tilde{b}_{J-1}} _ {c_{m_{J-1}}^l, c_{k_{J-1}}^r} \Bigl|
    & \geq \Bigl| \tilde{f}^{\tilde{r}_{J-1}} _ {c_{m_{J-1}}^l, c_{k_{J-1}}^r} \Bigl| - \sqrt{8\log T} \nonumber\\
    & \geq \sqrt{\frac{\min\{c_{k_{J-1}}^r-r_{J-1},r_{J-1}-c_{m_{J-1}}^l+1\}}{2}} \Delta^f_{J-1}.
\end{align}
Before we show that $\min\{c_{k_{J-1}}^r-r_{J-1},r_{J-1}-c_{m_{J-1}}^l+1\} \geq \delta_T/2n$, we need show that $c_{m^{\ast}}^l$ satisfies $c_{m^{\ast}}^l - r_{J-1}\geq \delta_T/2n$ since $c_{m^{\ast}}^l \geq c_{k_{J-1}}^r$.
It holds that
\begin{equation*}
    c_{m^{\ast}}^l - r_{J-1} 
    = c_{m^{\ast}}^l - r_J + r_J - r_{J-1} 
    \geq c_{\Tilde{m}}^l - r_J + \Tilde{\delta}_{J-1}.
\end{equation*}
We are again concentrating on the worst-case scenario, meaning that both end-points are at least at a distance $\delta_T/2n$ from the change-point.
More precisely, either $c_{\Tilde{m}}^l \in I^L_J$ or $c_{\Tilde{k}}^r \in I^R_J$, and the other end-point could be further than $\delta_T/n$ from the change-point.
If $c_{\Tilde{m}}^l \in I^L_J$, then, since $n \geq 3/2$, the above inequality becomes 
\begin{equation*}
    c_{m^{\ast}}^l - r_{J-1} 
    \geq c_{\Tilde{m}}^l - r_J + \Tilde{\delta}_{J-1}
    > -\frac{\delta_T}{n} + 1 + \delta_T 
    > \frac{n-1}{n} \delta_T 
    > \frac{\delta_T}{2n}.
\end{equation*}
If instead $c_{\Tilde{k}}^r \in I^R_J$, then using \eqref{delta upper bound} and the fact that $d_{1,T} - c_{\Tilde{m}}^l < c_{\Tilde{k}}^r - d_{1,T}$ due to the way the expansions occur,
\begin{align*}
    r_J - c_{\Tilde{m}}^l 
    & = r_J - d_{1,T} + d_{1,T} - c_{\Tilde{m}}^l
    < \delta_{1,T}^{J-1} + c_{\Tilde{k}}^r - d_{1,T}\\
    & \leq \delta_{1,T}^{J-1} + c_{\Tilde{k}}^r - r_J + r_J - d_{1,T} 
    \leq 2 \delta_{1,T}^{J-1} + \frac{\delta_T}{n}\\
    & \leq 2\Bigl(\frac{\Tilde{\delta}_{J-1}}{2} - \frac{3\delta_T}{4n}\Bigr) + \frac{\delta_T}{n} 
    = \Tilde{\delta}_{J-1} - \frac{\delta_T}{2n}
\end{align*}
and so
\begin{equation*}
    c_{m^{\ast}}^l - r_{J-1} 
    \geq c_{\Tilde{m}}^l - r_J + \Tilde{\delta}_{J-1}
    \geq - \Tilde{\delta}_{J-1} + \frac{\delta_T}{2n} + \Tilde{\delta}_{J-1}
    = \frac{\delta_T}{2n}.
\end{equation*}
This means that the right end-point of the interval could eventually satisfy \linebreak $c_{k_{J-1}}^r - r_{J-1} \geq \delta_T/2n$ for some $k_{J-1}$ since the largest it can become is $c_{k_{J-1}}^r = c_{m^{\ast}}^l$. 
As in Step 3, the detection could happen before this is satisfied.
For the left end-point of the interval, the same holds as in Step 3.
In any case, $\min\{c_{k_{J-1}}^r-r_{J-1},r_{J-1}-c_{m_{J-1}}^l+1\} \geq \delta_T/2n$ holds and so, from \eqref{cusum_step5} we have that
\begin{align*}
    \Bigl| \tilde{X}^{\tilde{b}_{J-1}} _ {c_{m_{J-1}}^l, c_{k_{J-1}}^r} \Bigl|
    & \geq \sqrt{\frac{\delta_T}{4n}}\Delta^f_{J-1} - \sqrt{8\log T}
    \geq \sqrt{\frac{\delta_T}{4n}}\underline{f}_Te - \sqrt{8\log T} \\
    & \geq \biggl( \frac{1}{\sqrt{4n}} - \frac{2\sqrt{2}}{\underline{C}} \bigg) \sqrt{\delta_T} \underline{f}_T
    = C_2 \sqrt{\delta_T} \underline{f}_T > \zeta_T.
\end{align*}
Therefore, we have shown that there exists an interval of the form $[c_{\tilde{m}_{J-1}}^l, c_{\tilde{k}_{J-1}}^r]$ with $\max_{c_{\tilde{m}_{J-1}}^l \leq b < c_{\tilde{k}_{J-1}}^r} \Bigl| \tilde{X}^t _ {c_{\tilde{m}_{J-1}}^l, c_{\tilde{k}_{J-1}}^r} \Bigl| > \zeta_T$. 
\vspace{0.1in}
\\
Now, denote $c_{m_{J-1}^{\ast}}^l, c_{k_{J-1}^{\ast}}^r$ the first points where this occurs and $b_{J-1}$ as defined above with $\Bigl| \tilde{X}^{b_J-1} _ {c_{m_{J-1}^{\ast}}^l, c_{k_{J-1}^{\ast}}^r} \Bigl| > \zeta_T$. 
We can show that $\bigl| r_{J-1} - b_{J-1}\bigl| \Bigl( \Delta_{J-1}^f \Bigl)^2 \leq C_3 \log T$, following exactly the same process as in Step 3.
\vspace{0.1in}
\\
After detecting $r_{J-1}$ in the interval $[c_{m_{J-1}^{\ast}}^l, c_{k_{J-1}^{\ast}}^r]$, DAIS will restart on intervals $[1, c_{m_{J-1}^{\ast}}^l]$ and $[c_{k_{J-1}^{\ast}}^r, c^l_{m^\ast}]$. 
DAIS will also check $[c_{k^{\ast}}^r, T]$, as described in Section \ref{methodology}.
Step 5 can be applied to all intervals, as long as there is a change-point. We can conclude that all change-points will get detected, one by one, and their estimated locations will satisfy $\bigl| r_j - b_j\bigl| \Bigl( \Delta_j^f \Bigl)^2 \leq C_3 \log T$, $\forall j \in \{1, 2, \ldots, N\}$. 
There will not be any double detection issues as each interval contains no previously detected change-points.
\vspace{0.1in}
\\
\textbf{Step 6:} After detecting all the change-points at locations $b_1, b_2, \ldots, b_N$ using the intervals $[c_{m_j^{\ast}}^l, c_{k_j^{\ast}}^r]$ for $j\in \{1, \ldots, N\}$, the algorithm will check all intervals of the form $[c_{k_{j-1}^{\ast}}^r, c_{m_j^{\ast}}^l]$ and $[c_{k_j^{\ast}}^r, c_{m_{j+1}^{\ast}}^l]$, with $c_{m_0^{\ast}}^l = 1$ and $c_{k_{N+1}^{\ast}}^r = T$. 
At most $N+1$ intervals of this form, containing no change-points, will be checked. Denoting by $[s^{\ast}, e^{\ast}]$ any of those intervals, we can show that DAIS will not detect any change-point as for $b \in \{s^{\ast}, s^{\ast}+1, \ldots, e^{\ast} - 1\}$,
\begin{equation*}
    \Bigl| \tilde{X}^{b}_{s^{\ast}, e^{\ast}} \Bigl| 
    \leq \Bigl| \tilde{f}^{b}_{s^{\ast}, e^{\ast}} \Bigl|  + \sqrt{8 \log T}
    = \sqrt{8 \log T}
    < C_1 \sqrt{\log T}
    \leq \zeta_T.
\end{equation*}
The algorithm will terminate after not detecting any change-points in all intervals.
\\
\end{proof}

\section{Outline of the proof of Theorem \ref{thm: consistency_multivariate}} \label{appendix: proof_multivariate}

Before providing the outline of the proof, we need some notation. For $1 \leq s \leq b < e \leq T$, we denote by $\boldsymbol{\tilde{X}_{s,e}^{b}} := (\tilde{X}_{s,e}^{b,1}, \tilde{X}_{s,e}^{b,2}, \ldots, \tilde{X}_{s,e}^{b,d})$, where $\tilde{X}_{s,e}^{b,j} = \sqrt{\frac{e-b}{\ell (b-s+1)}}\sum_{t=s}^{b} X_{t,j} - \sqrt{\frac{b-s+1}{\ell (e-b)}}\sum_{t=b+1}^{e} X_{t,j}$, which is equivalent to \eqref{CUSUM}, $\ell = e-s+1$ and $X_{t,j}$ is the $j\textsuperscript{th}$ component of $\boldsymbol{X_t}$, for $j\in\{1, \ldots, d\}$. Furthermore, $\boldsymbol{\tilde{f}_{s,e}^{b}} := (\tilde{f}_{s,e}^{b,1}, \tilde{f}_{s,e}^{b,2}, \ldots, \tilde{f}_{s,e}^{b,d})$, where $\tilde{f}_{s,e}^{b,j} = \sqrt{\frac{e-b}{n(b-s+1)}}\sum_{t=s}^{b}f_{t,j} - \sqrt{\frac{b-s+1}{n(e-b)}}\sum_{t=b+1}^{e}f_{t,j}$ is the value of the CUSUM statistic at the location $b$, when we work in the interval from $s$ up to $e$, for the $j^{th}$ underlying component signal. Also, for $L(\cdot)$ any mean-dominant norm as those in \eqref{mean_dominant}, and for $|\boldsymbol{x}|$ being the vector of the absolute value of the elements of an $\boldsymbol{x} \in \mathbb{R}^{d}$, $C^{b}_{s,e}  = L\left(\left|\boldsymbol{\tilde{X}_{s,e}^{b}}\right|\right)$ and $D_{s,e}^{b} = L\left(\left|\boldsymbol{\tilde{f}_{s,e}^{b}}\right|\right).$
Proving Theorem~\ref{thm: consistency_multivariate} involves the following steps: \\

\noindent \textbf{Step 1:} We show that the values of the mean-dominant norm of the contrast function of the observed and unobserved signals, $L \left( \boldsymbol{\tilde{X}_{s,e}^{b}} \right) $ and $L \left( \boldsymbol{\tilde{f}_{s,e}^{b}} \right)$, are uniformly close, for all $1\leq s \leq b < e \leq T$.
Using the mean-dominance property as given in p.190 of \cite{Carlstein1988}, it holds that
$0 \leq L_1(\boldsymbol{x}) \leq L(\boldsymbol{x}) \leq L_{\infty}(\boldsymbol{x}), \forall \boldsymbol{x}\in (\mathbb{R}^d)^{+}$
and so, for $A_T^*  =\bigg\{\max_{1\leq s \leq b < e\leq T} L\left(\left|\boldsymbol{\tilde{X}_{s,e}^{b}} - \boldsymbol{\tilde{f}_{s,e}^{b}}\right|\right) \leq \sqrt{8\operatorname{log}\left(Td^{\frac{1}{4}}\right)}\bigg\}$ and $A_T  =\bigg\{\max_{1 \leq s \leq b < e \leq T}\left|C^{b}_{s,e}-D_{s,e}^{b}\right|\leq \sqrt{8\operatorname{log}\left(Td^{\frac{1}{4}}\right)}\bigg\}$, it holds that $A^\ast_T \subseteq A_T$.
From the definition of our model in \eqref{eq: high_dim_model}, we have that $\boldsymbol{\epsilon_t} \sim \mathcal{N}_d(\boldsymbol{0}, \Sigma), \forall t \in \left\lbrace 1,\ldots, T\right\rbrace$ and, using the Bonferroni inequality, we can show, following similar steps as those in \cite{anastasiou_generalized_2023}, that $\Prob \left( \left( A^\ast_T \right) \right) \leq 1/(12 \sqrt{\pi} T)$, which implies $\Prob \left( A_T \right) \geq \Prob \left( A^\ast_T \right) \geq 1/(12 \sqrt{\pi} T)$. \\

\noindent \textbf{Step 2:} 
We control the distance between $L \left( \boldsymbol{\tilde{X}_{s,e}^{b_1}} \right) - L \left( \boldsymbol{\tilde{X}_{s,e}^{b_2}} \right)$ and $L \left( \boldsymbol{\tilde{f}_{s,e}^{b_1}} \right) - L \left( \boldsymbol{\tilde{f}_{s,e}^{b_2}} \right)$, for all possible combinations of $s,e,b_1,b_2$, where $1\leq s < e \leq T$.
For $\boldsymbol{f_k} = \left(f_{1,k}, \ldots, f_{T,k}\right)$ and for $[s,e]$ being any interval that contains only one true change-point, namely $r_j$, we denote
\begin{equation}
\label{A_seb}
    A_{s,e}^{b}(k,r_j) := \frac{\left\langle\boldsymbol{\psi_{s, e}^{b}}\left\langle \boldsymbol{f_{k}}, \boldsymbol{\psi_{s, e}^{b}}\right\rangle-\boldsymbol{\psi_{s,e}^{r_{j}}}\left\langle\boldsymbol{f_{k}},\boldsymbol{\psi_{s,e}^{r_{j}}}\right\rangle,\boldsymbol{\epsilon}\right\rangle}{\left\|\boldsymbol{\psi_{s, e}^{b}}\left\langle \boldsymbol{f_{k}}, \boldsymbol{\psi_{s, e}^{b}}\right\rangle-\boldsymbol{\psi_{s,e}^{r_{j}}}\left\langle\boldsymbol{f_{k},\boldsymbol{\psi_{s,e}^{r_{j}}}}\right\rangle\right\|_{2}},
\end{equation}
where $\psi_{s, e}^{b}$ is defined in \eqref{psi_definition}. 
In \eqref{A_seb} above, $k \in \left\lbrace 1, \ldots, d\right\rbrace$, while $\|\cdot\|_2$ is the Euclidean norm. Due to $\epsilon_{t,k}, t = 1,\ldots, T$ following the standard normal distribution for all $k \in \left\lbrace 1,\ldots, d\right\rbrace$ (with the univariate error terms being independent over time, but possibly spatially dependent), then straightforward calculations lead to $A_{s,e}^{b}(k,r_j) \sim \mathcal{N}(0,1), \forall k \in \left\lbrace 1,\ldots,d\right\rbrace$.
For \begin{equation}
\label{set_B_T}
B_{T}=\left\{\max_{\substack{j=1, \ldots, N\\k=1,\ldots,d}}\max_{\substack{
r_{j-1}<s \leq r_{j}\\
r_{j} < e \leq r_{j+1} \\
s \leq b <e}}\left|A_{s,e}^{b}(k,r_j)\right| \leq \sqrt{8\operatorname{log}\left(Td^{\frac{1}{4}}\right)}\right\},
\end{equation} 
we can show that $\Prob\left(B_T\right) \geq 1 - 1/(12\sqrt{\pi}T)$ using a similar procedure as in Step 1.
From Steps 1 and 2, we conclude that $\Prob\left(A_T^* \cap B_T\right) \geq 1 - \frac{1}{6\sqrt{\pi}T}.$\\

\noindent \textbf{Step 3:} 
Working under the assumption that both $A^\ast_T$ (and thus also $A_T$) and $B_T$ hold, we show that detection in an interval where the change-point is isolated will occur and so the proof can be restricted to an interval with a single change-point and we also prove that the estimated and true change-points are close to each other. The values used in this step for the positive constants $C_1, C_2,$ and $C_3$ of Theorem~\ref{thm: consistency_multivariate}, are $C_1 = \sqrt{C_3} + \sqrt{8}, C_2 = \frac{1}{\sqrt{4n}} - \frac{2\sqrt{2}}{\underline{C}_M}, C_3 = 2(2\sqrt{2}+4)^2$.
The order with which the change-points are detected depends on the location of the largest difference at each step of the algorithm and so, without loss of generality, we assume that the first change-point to get detected is $r_J$ for $J\in\{1,\ldots,N\}$.
For $c_m^l, c_k^r$ as defined in \eqref{end-points_proof}, $c_m^l \in I_J^L$ or $c_k^r \in I_J^R$ hold (or both), with $I_J^L, I_J^R$ as defined in \eqref{intervals_discussion}, and showing that the change-point will be isolated in $[c_m^l, c_k^r]$ is the same as for Theorem\ref{consistency_theorem} Step 3.
It can be shown that for $\tilde{b} = {\rm argmax}_{c_m^l \leq t < c_k^r}C_{c_m^l, c_k^r}^{t}$, $C_{c_m^l,c_k^r}^{\tilde{b}} > \zeta_{T,d}$, so $r_J$ can be detected in the interval.

Now, for $k^\ast \in \{1,\ldots,K^{\max}\}$ and $m^\ast \in \{k^\ast-1, k^\ast\}$, we denote by $c_{m^\ast}^l \geq c_m^l$ and $c_{k^\ast}^r \leq c_{k}^r$ the first left- and right-expanding points, respectively, where detection of $r_J$ occurs.
Showing that the estimated and true change-points are close to each other requires working on the two mean-dominant norms of \eqref{mean_dominant} separately.
Denoting by $q_J:= {\rm argmax}_{j=1,\ldots,d} \left| \tilde{X}_{c_{m^\ast}^l,c_{k^\ast}^r}^{b_J,j}\right|$ and $\alpha_J = \begin{cases}
    \lvert f_{r_{J+1}, q_J} - f_{r_J, q_J} \rvert, \;{\rm when}\; L(\cdot) = L_\infty(\cdot) \\
    L_2(\boldsymbol{\Delta_J}), \;{\rm when}\; L(\cdot) = L_2(\cdot)
\end{cases}$, we show that $\left|b_J - r_J \right|\alpha_J^2 \leq C_3 \log\left(Td^{\frac{1}{4}}\right).$
This can be done using a contradiction argument, based on the fact that for any $b^* \in \left\lbrace c_{m^\ast}^l,\ldots,c_{k^\ast}^r - 1 \right\rbrace$ with $\left|b^*-r_J \right| \alpha_J^2 > \log \left( Td^{\frac{1}{4}} \right)$, it holds that
$\left( \tilde{X}_{c_{m^\ast}^l, c_{k^\ast}^r} ^{r_J,q_J}\right)^2 > \left(\tilde{X}_{c_{m^\ast}^l, c_{k^\ast}^r}^{b^*,q_J}\right)^2$, which, using the definition of $b_J$, cannot be true. 
The detailed proof of this is skipped, but we highlight that it basically relies on a combination of the strategy followed in \cite{anastasiou_generalized_2023} that the consistency of the MID algorithm was proved, and the approach followed in our Theorem \ref{consistency_theorem}, Step 3.\\

\noindent \textbf{Step 4:} We show that when the algorithm re-starts, no change-points are left undetected and that the new intervals allow for the detection of the remaining change-points. 
The first part is immediate, as from Step 3 we know that the change-point was isolated in the interval in which detection occurred.
Since the change-point was detected in the interval $[c_{m^\ast}^l, c_{k^\ast}^r]$, MDAIS restarts on $[1, c_{m^\ast}^l]$ and $[c_{k^\ast}^r, T]$.
The proof that detection of $r_j$ occurs is the same as in Step 3 for all $j \in \{1, \ldots, J-2, J+2, \ldots, N \}$.
We focus on $j = J-1$ and similar steps hold for $j = J+1$.
For $k_{J-1} \in \{1,\ldots,K^{\max}\}$ and $m_{J-1} \in \{k_{J-1} - 1, k_{J-1}\}$, we will show that $r_{J-1}$ gets detected in $[c_{m_{J-1}^{\ast}}^l, c_{k_{J-1}^{\ast}}^r]$, where $c_{m_{J-1}^{\ast}}^l \geq c_{m_{J-1}}^l$ and $c_{k_{J-1}^{\ast}}^r \leq c_{k_{J-1}}^r \leq c_{m^{\ast}}^l$.
The upper bound of the last inequality is the reason why $r_{J-1}$ should be considered separately, as the right end-point of the interval where detection will occur is upper bounded by the left end-point of the interval in which $r_J$ was detected.
The detection of $r_{J-1}$ is at $b_{J-1} = {\rm argmax}_{c_{m^\ast_{J-1}}^l \leq t < c_{k^\ast_{J-1}}^r}C_{c_{m^\ast_{J-1}}^l, c_{k^\ast_{J-1}}^r}^t$.
Following similar steps as in Step 3, we have that $C_{c_{m^\ast_{J-1}}^l, c_{k^\ast_{J-1}}^r}^{b_{J-1}} >\zeta_{T,d}$.
Now, with $$\alpha_{J-1} := \begin{cases} 
\lvert f_{r_{J}, q_{J-1}} - f_{r_{J-1}, q_{J-1}} \rvert, \;{\rm when}\; L(\cdot) = L_{\infty}(\cdot)\\L_2(\boldsymbol{\Delta_{J-1}}), \;{\rm when}\; L(\cdot) = L_{2}(\cdot)\end{cases},$$ for $q_{J-1}:= {\rm argmax}_{j=1,2,\ldots,d}\left|\tilde{X}_{c_{m^\ast_{J-1}}^l, c_{k^\ast_{J-1}}^r}^{b_{J-1},j}\right|$, it is straightforward to show, following the same process as Step 3 of the proof, that $\left|b_{J-1} - r_{J-1}\right|\alpha_{J-1}^2 \leq C_3\log \left(Td^{1/4}\right)$.\\

\noindent \textbf{Step 5:} Finally, when no change-points are left, the algorithm terminates.
After detecting all the change-points at locations $b_1, b_2, \ldots, b_N$ using the intervals $[c_{m_j^{\ast}}^l, c_{k_j^{\ast}}^r]$ for $j\in \{1, \ldots, N\}$, the algorithm will check all intervals of the form $[c_{k_{j-1}^{\ast}}^r, c_{m_j^{\ast}}^l]$ and $[c_{k_j^{\ast}}^r, c_{m_{j+1}^{\ast}}^l]$, with $c_{m_0^{\ast}}^l = 1$ and $c_{k_{N+1}^{\ast}}^r = T$. 
At most $N+1$ intervals of this form, containing no change-points will be checked. 
Denoting by $[s^{\ast}, e^{\ast}]$ any of those intervals, for $b \in \{s^{\ast}, s^{\ast}+1, \ldots, e^{\ast} - 1\}$,
\begin{equation*}
    C_{s^*,e^*}^{b} \leq D_{s^*,e^*}^{b} + \sqrt{8\log \left(Td^{\frac{1}{4}}\right)}
    = \sqrt{8\log \left(Td^{\frac{1}{4}}\right)} 
    < C_1\sqrt{\log \left(Td^{\frac{1}{4}}\right)}\leq \zeta_{T,d}
\end{equation*}
and the algorithm terminates.
\end{appendix}
\\

\slc{\textbf{Acknowledgements} We thank the Editor, the Associate Editor, and the three anonymous referees for their helpful comments which have led to a significant improvement of the paper. 
We also thank Piotr Fryzlewicz for constructive conversations. 
This research started whilst S.L. was employed as a Research Special Scientist at the University of Cyprus, supported by the start-up grant that A.A. received from the University of Cyprus. 
Afterwards, S.L. was funded in whole, or in part, by the Luxembourg National Research Fund (FNR), grant reference PRIDE/21/16747448/MATHCODA.}

\textbf{Author contribution} 
A.A. has been regularly reviewing the manuscript and has guided S.L. for the simulations and real data analysis. 
S.L. came up with the idea of the proposed methodology, worked on the real data analysis, performed the simulations, drafted the paper and created the GitHub repository. 
Both authors carried out the proofs of the theoretical results and have carefully reviewed the latest version of the manuscript.

\slc{\textbf{Funding} Open access funding provided by the Cyprus Libraries
Consortium (CLC).
This research started whilst S.L. was employed as a Research Special Scientist at the University of Cyprus, supported by the start-up grant that A.A. received from the University of Cyprus. 
Afterwards, S.L. was funded in whole, or in part, by the Luxembourg National Research Fund (FNR), grant reference PRIDE/21/16747448/MATHCODA.}

\textbf{Data availability} In Section \ref{Real data}, we provide the relevant links where all data
and materials used in the paper can be obtained from.

\section*{Declarations}

\textbf{Competing interests} The authors have no competing interests to declare that are relevant to the content of this article.

\textbf{Ethics approval and consent to participate} Not applicable.

\textbf{Consent for publication} Not applicable.

\textbf{Materials availability} Not applicable.

\slc{\textbf{Code availability} Code and instructions to replicate the results in the paper are available in \url{https://doi.org/10.5281/zenodo.15469890} as well as in \url{https://github.com/Sophia-Loizidou/DAIS}. }

\slc{\textbf{License} This article is licensed under a Creative Commons
Attribution 4.0 International License.}

\bibliographystyle{abbrv}
\bibliography{references}  

\begin{thebibliography}{10}

\bibitem{breakfast_package}
A.~Anastasiou, Y.~Chen, H.~Cho, and P.~Fryzlewicz.
\newblock {\bf{breakfast}}: {M}ethods for fast multiple change-point detection
  and estimation. {{\textsf{R}}} {P}ackage, {V}ersion 2.5.
\newblock pages 1--27, 2024.

\bibitem{ccid}
A.~Anastasiou, I.~Cribben, and P.~Fryzlewicz.
\newblock {Cross-covariance isolate detect: A new change-point method for
  estimating dynamic functional connectivity}.
\newblock {\em Medical Image Analysis}, 75:102252, 2022.

\bibitem{anastasiou2022detecting}
A.~Anastasiou and P.~Fryzlewicz.
\newblock {Detecting multiple generalized change-points by isolating single
  ones}.
\newblock {\em Metrika}, 85(2):141--174, 2022.

\bibitem{anastasiou_generalized_2023}
A.~Anastasiou and A.~Papanastasiou.
\newblock Generalized multiple change-point detection in the structure of
  multivariate, possibly high-dimensional, data sequences.
\newblock {\em Statistics and Computing}, 33(5):94, 2023.

\bibitem{Anscombe1948}
F.~J. Anscombe.
\newblock {T}he {T}ransformation of {P}oisson, {B}inomial and
  {N}egative-{B}inomial {D}ata.
\newblock {\em Biometrika}, 35(3/4):246--254, 1948.

\bibitem{10.2307/2998540}
J.~Bai and P.~Perron.
\newblock {E}stimating and {T}esting {L}inear {M}odels with {M}ultiple
  {S}tructural {C}hanges.
\newblock {\em Econometrica}, 66(1):47--78, 1998.

\bibitem{baranowski2019narrowest}
R.~Baranowski, Y.~Chen, and P.~Fryzlewicz.
\newblock {N}arrowest-{O}ver-{T}hreshold {D}etection of {M}ultiple {C}hange
  {P}oints and {C}hange-{P}oint-{L}ike {F}eatures.
\newblock {\em Journal of the Royal Statistical Society: Series B (Statistical
  Methodology)}, 81(3):649--672, 2019.

\bibitem{wbs_package}
R.~Baranowski and P.~Fryzlewicz.
\newblock Wild {B}inary {S}egmentation for {M}ultiple {C}hange-{P}oint
  {D}etection. {{\textsf{R}}} {P}ackage, {V}ersion 1.4.1.
\newblock pages 1--15, 2024.

\bibitem{brodsky2000non}
B.~Brodsky and B.~Darkhovsky.
\newblock {\em {N}on-{P}arametric {S}tatistical {D}iagnosis {P}roblems and
  {M}ethods}.
\newblock Springer Netherlands, Dordrecht, 2000.

\bibitem{10.1093/biomet/asv031}
H.~Cao and W.~Biao~Wu.
\newblock {Changepoint estimation: another look at multiple testing problems}.
\newblock {\em Biometrika}, 102(4):974--980, 2015.

\bibitem{Carlstein1988}
E.~Carlstein.
\newblock {Nonparametric Change-Point Estimation}.
\newblock {\em The Annals of Statistics}, 16(1):188 -- 197, 1988.

\bibitem{chan2022}
H.-P. Chan and H.~Chen.
\newblock Multi-sequence segmentation via score and higher-criticism tests,
  2017.

\bibitem{10.2307/24310529}
H.~P. Chan and G.~Walther.
\newblock Detection with the scan and the average likelihood ratio.
\newblock {\em Statistica Sinica}, 23(1):409--428, 2013.
\newblock \url{http://www.jstor.org/stable/24310529}.

\bibitem{cho_change-point_2016}
H.~Cho.
\newblock Change-point detection in panel data via double {CUSUM} statistic.
\newblock {\em Electronic Journal of Statistics}, 10(2):2000--2038, 2016.

\bibitem{10.2307/24310145}
H.~Cho and P.~Fryzlewicz.
\newblock Multiscale and multilevel technique for consistent segmentation of
  nonstationary time series.
\newblock {\em Statistica Sinica}, 22(1):207--229, 2012.
\newblock \url{http://www.jstor.org/stable/24310145}.

\bibitem{cho2015multiple}
H.~Cho and P.~Fryzlewicz.
\newblock Multiple-{C}hange-{P}oint {D}etection for {H}igh {D}imensional {T}ime
  {S}eries via {S}parsified {B}inary {S}egmentation.
\newblock {\em Journal of the Royal Statistical Society: Series B (Statistical
  Methodology)}, 77(2):475--507, 2015.

\bibitem{cho_multiple-change-point_2015}
H.~Cho and P.~Fryzlewicz.
\newblock Multiple-{C}hange-{P}oint {D}etection for {H}igh {D}imensional {T}ime
  {S}eries via {S}parsified {B}inary {S}egmentation.
\newblock {\em Journal of the Royal Statistical Society Series B: Statistical
  Methodology}, 77(2):475--507, 2015.

\bibitem{CHO202476}
H.~Cho and C.~Kirch.
\newblock Data segmentation algorithms: Univariate mean change and beyond.
\newblock {\em Econometrics and Statistics}, 30:76--95, 2024.

\bibitem{chu1995mosum}
C.-S.~J. Chu, K.~Hornik, and C.-M. Kaun.
\newblock {MOSUM} tests for parameter constancy.
\newblock {\em Biometrika}, 82(3):603--617, 1995.

\bibitem{7938741}
J.~Ding, Y.~Xiang, L.~Shen, and V.~Tarokh.
\newblock {M}ultiple {C}hange {P}oint {A}nalysis: {F}ast {I}mplementation and
  {S}trong {C}onsistency.
\newblock {\em IEEE Transactions on Signal Processing}, 65(17):4495--4510,
  2017.

\bibitem{10.3150/16-BEJ887}
B.~Eichinger and C.~Kirch.
\newblock {A MOSUM procedure for the estimation of multiple random change
  points}.
\newblock {\em Bernoulli}, 24(1):526 -- 564, 2018.

\bibitem{cpop_package}
P.~Fearnhead and D.~Grose.
\newblock cpop: {D}etecting {C}hanges in {P}iecewise-{L}inear {S}ignals.
\newblock {\em Journal of Statistical Software}, 109(7):1–--30, 2024.

\bibitem{doi:10.1080/10618600.2018.1512868}
P.~Fearnhead, R.~Maidstone, and A.~Letchford.
\newblock {D}etecting {C}hanges in {S}lope {W}ith an ${L}_0$ {P}enalty.
\newblock {\em Journal of Computational and Graphical Statistics},
  28(2):265--275, 2019.

\bibitem{frick_multiscale_2014}
K.~Frick, A.~Munk, and H.~Sieling.
\newblock Multiscale {Change} {Point} {Inference}.
\newblock {\em Journal of the Royal Statistical Society Series B: Statistical
  Methodology}, 76(3):495--580, May 2014.

\bibitem{10.1214/aos/1176347963}
J.~H. Friedman.
\newblock {Multivariate Adaptive Regression Splines}.
\newblock {\em The Annals of Statistics}, 19(1):1 -- 67, 1991.

\bibitem{fryzlewicz2014wild}
P.~Fryzlewicz.
\newblock Wild binary segmentation for multiple change-point detection.
\newblock {\em The Annals of Statistics}, 42(6):2243--2281, 2014.
\newblock \url{https://www.jstor.org/stable/43556493}.

\bibitem{10.1214/17-AOS1662}
P.~Fryzlewicz.
\newblock {Tail-greedy bottom-up data decompositions and fast multiple
  change-point detection}.
\newblock {\em The Annals of Statistics}, 46(6B):3390 -- 3421, 2018.

\bibitem{fryzlewicz2020detecting}
P.~Fryzlewicz.
\newblock Detecting possibly frequent change-points: {W}ild {B}inary
  {S}egmentation 2 and steepest-drop model selection.
\newblock {\em Journal of the Korean Statistical Society}, 49(4):1027--1070,
  2020.

\bibitem{Akilagun_Fast_Optimal}
S.~Gavioli-Akilagun and P.~Fryzlewicz.
\newblock {F}ast and {O}ptimal {I}nference for {C}hange {P}oints in {P}iecewise
  {P}olynomials via {D}ifferencing.
\newblock {\em Electronic Journal of Statistics}, 19(1):593 -- 655, 2025.

\bibitem{Hampel}
F.~R. Hampel.
\newblock {T}he {I}nfluence {C}urve and its {R}ole in {R}obust {E}stimation.
\newblock {\em Journal of the American Statistical Association},
  69(346):383--393, 1974.

\bibitem{HAWKINS2001323}
D.~M. Hawkins.
\newblock Fitting multiple change-point models to data.
\newblock {\em Computational Statistics \& Data Analysis}, 37(3):323--341,
  2001.

\bibitem{1381461}
B.~Jackson, J.~Scargle, D.~Barnes, S.~Arabhi, A.~Alt, P.~Gioumousis, E.~Gwin,
  P.~Sangtrakulcharoen, L.~Tan, and T.~T. Tsai.
\newblock An algorithm for optimal partitioning of data on an interval.
\newblock {\em IEEE Signal Processing Letters}, 12(2):105--108, 2005.

\bibitem{mosum_linear}
H.-S.~O. Joonpyo~Kim and H.~Cho.
\newblock {M}oving {S}um {P}rocedure for {C}hange {P}oint {D}etection under
  {P}iecewise {L}inearity.
\newblock {\em Technometrics}, 66(3):358--367, 2024.

\bibitem{pelt_package}
R.~Killick and I.~A. Eckley.
\newblock changepoint: An {{\textsf{R}}} {P}ackage for {C}hangepoint
  {A}nalysis.
\newblock {\em Journal of Statistical Software}, 58(3):1–19, 2014.

\bibitem{doi:10.1080/01621459.2012.737745}
R.~Killick, P.~Fearnhead, and I.~A. Eckley.
\newblock {O}ptimal {D}etection of {C}hangepoints {W}ith a {L}inear
  {C}omputational {C}ost.
\newblock {\em Journal of the American Statistical Association},
  107(500):1590--1598, 2012.

\bibitem{doi:10.1137/070690274}
S.-J. Kim, K.~Koh, S.~Boyd, and D.~Gorinevsky.
\newblock $\ell_1$ trend filtering.
\newblock {\em SIAM Review}, 51(2):339--360, 2009.

\bibitem{seeded_bs}
S.~Kovács, P.~Bühlmann, H.~Li, and A.~Munk.
\newblock Seeded {B}inary {S}egmentation: {A} general methodology for fast and
  optimal changepoint detection.
\newblock {\em Biometrika}, 110(1):249--256, 2023.

\bibitem{mscp_package}
T.~Levajkovic and M.~Messer.
\newblock {\bf{mscp}}: {M}ultiscale {C}hange {P}oint {D}etection via {G}radual
  {B}andwidth {A}djustment in {M}oving {S}um {P}rocesses. {{\textsf{R}}}
  {P}ackage, {V}ersion 1.0.
\newblock pages 1--6, 2022.

\bibitem{10.1214/22-EJS2101}
T.~Levajković and M.~Messer.
\newblock {Multiscale change point detection via gradual bandwidth adjustment
  in moving sum processes}.
\newblock {\em Electronic Journal of Statistics}, 17(1):70 -- 101, 2023.

\bibitem{10.1214/16-EJS1131}
H.~Li, A.~Munk, and H.~Sieling.
\newblock {FDR-{C}ontrol in {M}ultiscale {C}hange-point {S}egmentation}.
\newblock {\em Electronic Journal of Statistics}, 10(1):918 -- 959, 2016.

\bibitem{Liehrmann22112024}
A.~Liehrmann and G.~Rigaill.
\newblock {Ms.FPOP}: {A} {F}ast {E}xact {S}egmentation {A}lgorithm with a
  {M}ultiscale {P}enalty.
\newblock {\em Journal of Computational and Graphical Statistics}, 0(0):1--11,
  2024.

\bibitem{maeng2023detecting}
H.~Maeng and P.~Fryzlewicz.
\newblock Detecting linear trend changes in data sequences.
\newblock {\em Statistical Papers}, pages 1--31, 2023.

\bibitem{trendSegmentR_package}
H.~Maeng and P.~Fryzlewicz.
\newblock {\bf{trendsegmentR}}: {L}inear {T}rend {S}egmentation. {{\textsf{R}}}
  {P}ackage, {V}ersion 1.3.0.
\newblock pages 1--7, 2023.

\bibitem{maidstone2017optimal}
R.~Maidstone, T.~Hocking, G.~Rigaill, and P.~Fearnhead.
\newblock On optimal multiple changepoint algorithms for large data.
\newblock {\em Statistics and computing}, 27(2):519--533, 2017.

\bibitem{JSSv097i08}
A.~Meier, C.~Kirch, and H.~Cho.
\newblock mosum: {A} {P}ackage for {M}oving {S}ums in {C}hange-{P}oint
  {A}nalysis.
\newblock {\em Journal of Statistical Software}, 97(8):1–42, 2021.

\bibitem{earth_package}
S.~Milborrow, T.~Hastie, R.~Tibshirani, A.~Miller, and T.~Lumley.
\newblock Multivariate {A}daptive {R}egression {S}plines. {{\textsf{R}}}
  {P}ackage, {V}ersion 5.3.4.
\newblock pages 1--53, 2024.

\bibitem{ng1969table}
E.~W. Ng and M.~Geller.
\newblock A {T}able of {I}ntegrals of the {E}rror {F}unctions.
\newblock {\em Journal of Research of the National Bureau of Standards B},
  73(1):1--20, 1969.

\bibitem{NINOMIYA2005237}
Y.~Ninomiya.
\newblock Information criterion for {G}aussian change-point model.
\newblock {\em Statistics \& Probability Letters}, 72(3):237--247, 2005.

\bibitem{stepr_package}
F.~Pein, T.~Hotz, H.~Sieling, and T.~Aspelmeier.
\newblock {\bf{step{R}}}: {M}ultiscale {C}hange-{P}oint {I}nference.
  {{\textsf{R}}} {P}ackage, {V}ersion 2.1-10.
\newblock pages 1--69, 2024.

\bibitem{se-12-2717-2021}
N.~Piana~Agostinetti and G.~Sgattoni.
\newblock Changepoint detection in seismic double-difference data: application
  of a trans-dimensional algorithm to data-space exploration.
\newblock {\em Solid Earth}, 12(12):2717--2733, 2021.

\bibitem{Raimondo1998}
M.~Raimondo.
\newblock {Minimax estimation of sharp change points}.
\newblock {\em The Annals of Statistics}, 26(4):1379 -- 1397, 1998.

\bibitem{rigaill2015pruned}
G.~Rigaill.
\newblock A pruned dynamic programming algorithm to recover the best
  segmentations with $1 $ to ${K}_{\max} $ change-points.
\newblock {\em Journal de la Soci{\'e}t{\'e} Fran{\c{c}}aise de Statistique},
  156(4):180--205, 2015.
\newblock \url{http://www.numdam.org/item/JSFS_2015__156_4_180_0/}.

\bibitem{doi:10.1080/01621459.1993.10476408}
P.~J. Rousseeuw and C.~Croux.
\newblock {A}lternatives to the {M}edian {A}bsolute {D}eviation.
\newblock {\em Journal of the American Statistical Association},
  88(424):1273--1283, 1993.

\bibitem{10.2307/2958889}
G.~Schwarz.
\newblock {E}stimating the {D}imension of a {M}odel.
\newblock {\em The Annals of Statistics}, 6(2):461--464, 1978.
\newblock \url{http://www.jstor.org/stable/2958889}.

\bibitem{SOSACOSTA20182044}
A.~Sosa-Costa, I.~K. Piechocka, L.~Gardini, F.~S. Pavone, M.~Capitanio, M.~F.
  Garcia-Parajo, and C.~Manzo.
\newblock {PLANT}: A {M}ethod for {D}etecting {C}hanges of {S}lope in {N}oisy
  {T}rajectories.
\newblock {\em Biophysical Journal}, 114(9):2044--2051, 2018.

\bibitem{doi:10.1080/00949655.2011.647317}
S.~Spiriti, R.~Eubank, P.~W. Smith, and D.~Young.
\newblock Knot selection for least-squares and penalized splines.
\newblock {\em Journal of Statistical Computation and Simulation},
  83(6):1020--1036, 2013.

\bibitem{TRUONG2020107299}
C.~Truong, L.~Oudre, and N.~Vayatis.
\newblock Selective review of offline change point detection methods.
\newblock {\em Signal Processing}, 167:107299, 2020.

\bibitem{Verzelen_optimal}
N.~Verzelen, M.~Fromont, M.~Lerasle, and P.~Reynaud-Bouret.
\newblock {Optimal change-point detection and localization}.
\newblock {\em The Annals of Statistics}, 51(4):1586 -- 1610, 2023.

\bibitem{vostrikova1981detecting}
L.~Y. Vostrikova.
\newblock Detecting “disorder” in multidimensional random processes.
\newblock In {\em Doklady akademii nauk}, volume 259, pages 270--274. Russian
  Academy of Sciences, 1981.

\bibitem{10.1214/20-EJS1710}
D.~Wang, Y.~Yu, and A.~Rinaldo.
\newblock {Univariate mean change point detection: Penalization, CUSUM and
  optimality}.
\newblock {\em Electronic Journal of Statistics}, 14(1):1917 -- 1961, 2020.

\bibitem{WIGGINS2015346}
P.~Wiggins.
\newblock {A}n {I}nformation-{B}ased {A}pproach to {C}hange-{P}oint {A}nalysis
  with {A}pplications to {B}iophysics and {C}ell {B}iology.
\newblock {\em Biophysical Journal}, 109(2):346--354, 2015.

\bibitem{changepoints_package}
H.~Xu, O.~Padilla, D.~Wang, M.~Li, and Q.~Wen.
\newblock {\bf{changepoints}}: {A} {C}ollection of {C}hange-{P}oint {D}etection
  {M}ethods. {{\textsf{R}}} {P}ackage, {V}ersion 1.1.0.
\newblock pages 1--73, 2022.

\bibitem{YAO1988181}
Y.-C. Yao.
\newblock Estimating the number of change-points via {S}chwarz' criterion.
\newblock {\em Statistics \& Probability Letters}, 6(3):181--189, 1988.

\bibitem{yao1989least}
Y.-C. Yao and S.-T. Au.
\newblock Least-{S}quares {E}stimation of a {S}tep {F}unction.
\newblock {\em Sankhy{\=a}: The Indian Journal of Statistics, Series A}, pages
  370--381, 1989.
\newblock \url{https://www.jstor.org/stable/25050759}.

\bibitem{yu2020review}
Y.~Yu.
\newblock A review on minimax rates in change point detection and localisation,
  2020.

\bibitem{yu_localising_2022}
Y.~Yu, S.~Chatterjee, and H.~Xu.
\newblock Localising change points in piecewise polynomials of general degrees.
\newblock {\em Electronic Journal of Statistics}, 16(1):1855--1890, 2022.

\end{thebibliography}


\begin{thebibliography}{4}
\bibitem[Baranowski, Chen and Fryzlewicz(2019)]{baranowski2019narrowest_supp}
R. Baranowski, Y. Chen and P.  Fryzlewicz (2019).  Narrowest-Over-Threshold Detection of
Multiple Change Points and Change-Point-Like Features. \textit{Journal of the Royal Statistical Society: Series B (Statistical Methodology)}, \textbf{81}:3, 649--672.

\end{thebibliography}
\clearpage
\begingroup
\renewcommand{\thepage}{S\arabic{page}} 
\setcounter{page}{1}                    
\renewcommand{\thesection}{S\arabic{section}} 
\setcounter{section}{0}
\renewcommand{\thefigure}{S\arabic{figure}}
\renewcommand{\thetable}{S\arabic{table}}
\renewcommand{\theequation}{S\arabic{equation}}
\setcounter{figure}{0}
\setcounter{table}{0}
\setcounter{equation}{0}
\begin{center}
  \LARGE \textbf{Supplementary Material for ``Data-adaptive structural change-point detection via isolation''} \\[1em]
  \normalsize
  Andreas Anastasiou\textsuperscript{1}, Sophia Loizidou\textsuperscript{2} \\
  \textsuperscript{1}Department of Mathematics and Statistics, University of Cyprus\\
  \textsuperscript{2}Department of Mathematics, University of Luxembourg \\
\end{center}
\vspace{2em}

\noindent\textbf{Abstract:} In this supplement we provide further simulation results, an investigation of the impact of the values of the expansion parameter, $\lambda_T$ , and the threshold constant in the performance of the algorithm, as well as the step-by-step proof of Theorem 2, which shows the consistency of our method in accurately estimating the true number and the locations of the change-points in the case that the underlying signal ft is continuous, piecewise-linear.

\section{Further simulation results} 
\label{sec:further_simulations}
Some further simulation results are presented in Tables \ref{supp:table:justnoise, long signal} and \ref{table:small_dist2, small_dist3, teeth} for piecewise-constant signals and Tables \ref{table:justnoise_wave} and \ref{table:wave4, wave5, wave6} for continuous, piecewise-linear signals. The signals used are:
\begin{itemize}
    \item[{(S12)}] $justnoise$: sequence of length 6000 with no change-points. The standard deviation is $\sigma=1$.
   
    \item[{(S13)}] $long\_signal$: sequence of length 11000 with 1 change-point at 5500 with values before and after the change-point $0$ and $1.5$. The standard deviation is $\sigma=1$. 
    
    
    \item[{(S14)}] $small\_dist3$: sequence of length 1000 with 6 change-points at 100, 130, 485, 515, 870, 900 with values between change-points 0, 1.5, 0, 1, 0, 1.5, 0. The standard deviation is $\sigma=1$. 
    
    \item[{(S15)}] $teeth$: piecewise-constant signal of length 270 with 13 change-points at 11, 31, 51, 71, 91, 111, 131, 151, 171, 191, 211, 231, 251 with values between change-points 0, 1, 0, 1, $\ldots$, 0, 1. The standard deviation is $\sigma=0.4$. 

    \item[{(S16)}] $justnoise\_wave$: piecewise-linear signal without change-points of length 1000. The starting intercept is $f_1=0$ and slope $f_2-f_1=1$. The standard deviation is $\sigma=1$.

    \item[{(S17)}] $wave4$: continuous, piecewise-linear signal of length 200 with 9 change-points at 20, 40, $\ldots$, 180 with changes in slope 1/6, 1/2, -3/4, -1/3, -2/3, 1, 1/4, 3/4, -5/4. The starting intercept is $f_1=-1$ and slope $f_2-f_1=1/32$. The standard deviation is $\sigma=0.3$. 


    \item[{(S18)}] $wave5$: continuous, piecewise-linear signal of length 350 with 50 change-points at 7, 14, $\ldots$, 343 with changes in slope -2.5, 2.5, $\ldots$, 2.5, -2.5. The starting intercept is $f_1=-0$ and slope $f_2-f_1=1$. The standard deviation is $\sigma=1$.
\end{itemize}

The general conclusions form the tables are that DAIS performs at least as well as the competitors, often having an advantage in computational time and accuracy compared to the best performing competitors. 
Signal {(S12)}, in Table~\ref{supp:table:justnoise, long signal} has no change-points. Most algorithms have very good results with negligible differences in the accuracy in the number of change-points detected.
Similar conclusions hold for Signal {(S13)}, in which the large length of the signal forces some algorithms, especially MSCP, to be extremely slow.
Moreover, comparing DAIS with ID\_th which use the same value for the expansion parameter $\lambda_T$, it is evident that DAIS requires a smaller number of expansions and thus is faster.
Signal
{(S14)} in Table~\ref{table:small_dist2, small_dist3, teeth} indicates that in this difficult structure, where pairs of change-points are close to each other and move towards opposite directions such that they `cancel' each-other out, ID\_th has the best performance, followed by DAIS and MOSUM, in terms of accuracy with respect to the number of change-points detected and the MSE.
For the Signal {(S15)} all algorithms perform similarly.
\sla{
Regarding continuous, piecewise-linear signals, the Signal (S16) in Table \ref{table:justnoise_wave} has constant slope and no change-points. In this signal, all algorithms perform very well, apart from the MARS, TF, and local\_poly methods, which seem to exhibit some overdetection issues. Among the top performing algorithms, DAIS and ID\_ic have the lowest computational time. Regarding the performance of the competing methods in Signals (S17) and (S18) as shown in Table~\ref{table:wave4, wave5, wave6}, DAIS and ID (combined with either the thresholding or the information criterion stopping rule) exhibit very good performance in both signals in terms of accuracy for both the estimated number and the estimated locations of change-points. CPOP has slightly worse (but still quite accurate) performance in both signals. NOT and TS are among the top performing methods in (S17), but not in (S18).}

\begin{table}[tbp]
\caption{Distribution of $\hat{N} - N$ over 100 simulated data sequences of the Signals {(S12)} and {(S13)}. 
    The average MSE and computational times are also given.} \label{supp:table:justnoise, long signal}
\centering
        \begin{tabular}{|l|l|c|c|c|c|c|c|c|c|}
            \hline
            &&\multicolumn{5}{|c|}{} &&& \\ 
            &&\multicolumn{5}{|c|}{$\hat{N} - N$} &&& \\
            Method & Signal & -1 &  0 & 1 & 2 & $\geq3$ & MSE & $d_H$ & Time (s) \\
            \hline
            \textbf{DAIS} && - & \textbf{99} & 0 & 1 & 0 & $1.87 \times 10^{-4}$ & - & 0.237 \\
            ID\_th &   &   - &  92 &         3 &         5 &         0 & $3.59 \times 10^{-4}$ & - & 0.458 \\ 
            \textbf{ID\_ic} &    & - &  \textbf{100} &         0 &         0 &         0 & $\boldsymbol{1.39\times 10^{-4}}$ & - & 0.193 \\ 
            WBS\_th &  &  - &    91 &         8 &         1 &         0 & $2.15 \times 10^{-4}$ & - & 0.050 \\ 
            \textbf{WBS\_ic} && - & \textbf{100} &         0 &         0 &         0 & $\boldsymbol{1.39\times 10^{-4}}$ & - & 0.073 \\ 
            WBS2 & & - &     91 &         5 &         3 &         1 & $2.75 \times 10^{-4}$ & - & 3.570 \\ 
            \textbf{PELT} && - & \textbf{100} & 0 & 0 & 0 & $\boldsymbol{1.39\times 10^{-4}}$ & - & 0.003 \\ 
            \textbf{NOT} & {(S12)} & - &   \textbf{100} &         0 &         0 &         0 & $\boldsymbol{1.39\times 10^{-4}}$ & - & 0.085 \\ 
            MOSUM && - & 94 & 2 & 4 & 0 & $2.96\times 10^{-4}$ & - & 0.028 \\ 
            \textbf{MSCP} && - & \textbf{100} & 0 & 0 & 0 & $\boldsymbol{1.39\times 10^{-4}}$ & - & 83.25 \\ 
            SeedBS\_th && - &     88 &        11 &         1 &         0 & $1.59 \times 10^{-4}$ & - & 0.041 \\ 
            \textbf{SeedBS\_ic} && - &   \textbf{99} &         1 &         0 &         0 & $1.68 \times 10^{-4}$ & - & 0.041 \\ 
            DP\_univar && - & 0 & 0 & 0 & 100 & $372 \times 10^{-4}$ & - & 30.70\\
            \sla{SMUCE} &  & \sla{-} & \sla{78} & \sla{22} & \sla{0} & \sla{0} & \sla{$3.60 \times 10^{-4}$} & \sla{-} & \sla{0.070} \\
            \sla{\textbf{MsFPOP}} & & \sla{-} & \sla{\textbf{97}} & \sla{3} & \sla{0} & \sla{0} & \sla{$2.14 \times 10^{-4}$} & \sla{-} & \sla{0.039} \\
            \hline
            \textbf{DAIS} && 0 & \textbf{99} & 0 & 1 & 0 & $\boldsymbol{4.33 \times 10^{-4}}$ & $43.10 \times 10^{-4}$ &  0.322 \\
            ID\_th & &       0 &        93 &         2 &         5 &         0 & $5.61 \times 10^{-4}$ & $135.00 \times 10^{-4}$ & 0.471 \\  
            \textbf{ID\_ic} &  &      0 &       \textbf{100} &         0 &         0 &         0 & $\boldsymbol{4.37 \times 10^{-4}}$ & $\boldsymbol{1.15 \times 10^{-4}}$ & 0.391 \\ 
            \textbf{WBS\_th} &  &      0 &        \textbf{98} &         2 &         0 &         0 & $\boldsymbol{4.33 \times 10^{-4}}$ & $5.74 \times 10^{-4}$ & 0.083 \\ 
            \textbf{WBS\_ic} && 0 &       \textbf{100} &         0 &         0 &         0 & $\boldsymbol{4.19 \times 10^{-4}}$ & $\boldsymbol{1.06\times 10^{-4}}$ & 0.118 \\ 
            WBS2 &  & 0 &        92 &         3 &         4 &         1 & $5.46 \times 10^{-4}$ & $176.00 \times 10^{-4}$ & 6.380 \\ 
            \textbf{PELT} && 0 & \textbf{100} & 0 & 0 & 0 & $\boldsymbol{4.19 \times 10^{-4}}$ & $\boldsymbol{1.06\times 10^{-4}}$ & 0.004 \\
            \textbf{NOT} & {(S13)} & 0 &       \textbf{100} &         0 &         0 &         0 & $\boldsymbol{4.19 \times 10^{-4}}$ & $\boldsymbol{1.06\times 10^{-4}}$ & 0.127 \\ 
            MOSUM && 0 & 92 & 2 & 6 & 0 & $5.49 \times 10^{-4}$ & $203.60 \times 10^{-4}$ & 0.049 \\
            MSCP && 6 & 93 & 1 & 0 & 0 & $341.35 \times 10^{-4}$ & $334.74 \times 10^{-4}$ & 276.294 \\
             SeedBS\_th && 0 &        90 &        10 &         0 &         0 & $\boldsymbol{4.30 \times 10^{-4}}$ & $223.00 \times 10^{-4}$ & 0.052 \\
            \textbf{SeedBS\_ic} &&   0 &       \textbf{100} &         0 &         0 &         0 & $\boldsymbol{4.19 \times 10^{-4}}$ & $\boldsymbol{1.06\times 10^{-4}}$ & 0.052 \\  
            DP\_univar && 0 & 0 & 0 & 0 & 100 & $368 \times 10^{-4}$ & $4840.00 \times 10^{-4}$ & 190.80\\
            \sla{SMUCE} & & \sla{0} & \sla{90} & \sla{7} & \sla{3} & \sla{0} & \sla{$5.08 \times 10^{-4}$} & \sla{$269.00 \times 10^{-4}$} & \sla{0.117} \\
            \sla{\textbf{MsFPOP}} && \sla{0} & \sla{\textbf{99}} & \sla{1} & \sla{0} & \sla{0} & \sla{$\boldsymbol{4.32\times 10^{-4}}$} & \sla{$23.20\times 10^{-4}$} & \sla{0.068} \\
            \hline
        \end{tabular}
\end{table}

\begin{table}[tbp]
\centering
\caption{Distribution of $\hat{N} - N$ over 100 simulated data sequences of the Signals 
{(S14)} and {(S15)}. 
    The average MSE, $d_H$ and computational times are also given.} \label{table:small_dist2, small_dist3, teeth}
        \begin{tabular}{ |l|l|c|c|c|c|c|c|c|c|c|c|c|c|}
            \hline
            &&\multicolumn{7}{|c|}{} &&& \\ 
            &&\multicolumn{7}{|c|}{$\hat{N} - N$} & & & \\
            Method & Signal & $\leq -3$& $-2$ & $-1$ & 0 & 1 & 2 & $\geq 3$ & MSE & $d_H$ & Time (s) \\
            \hline
            {DAIS} && 0 & 15 & 3 & {80} & 0 & 2 & 0 & 0.0321 & 0.067 & 0.004 \\
             \textbf{ID\_th }& &    0 &     2 &     2 &    \textbf{86} &     5 &     4 &     1 & \textbf{0.0303} & \textbf{0.031} & 0.005 \\  
            ID\_ic &   &  2 &    35 &     0 &    63 &     0 &     0 &     0 & 0.0358 & 0.145 & 0.016 \\ 
            WBS\_th &   &  0 &     4 &     7 &    71 &    14 &     3 &     1 & \textbf{0.0307} & 0.049 & 0.009 \\
            WBS\_ic && 2 &    33 &     1 &    64 &     0 &     0 &     0 & 0.0337 & 0.140 & 0.029 \\ 
            WBS2 &  &  0 &     6 &     5 &    70 &     9 &     6 &     4 & \textbf{0.0306} & 0.048 & 0.504 \\ 
            PELT && 26 & 65 & 0 & 9 & 0 & 0 & 0 & 0.0596 & 0.437 & 0.001 \\
            NOT & {(S14)} &   2 &    33 &     1 &    64 &     0 &   0 &     0 & 0.0332 & 0.139 & 0.043 \\ 
            {MOSUM} && 0 & 13 & 2 & {79} & 3 & 3 & 0 & \textbf{0.0287} & 0.063 & 0.015 \\ 
            MSCP &&    95 &     5 &     0 &     0 &     0 &     0 &     0 & 0.1390 & 0.593 & 2.610 \\ 
            SeedBS\_th &&0 &     4 &    23 &    50 &    15 &     7 &     1 & 0.0338 & 0.049 & 0.019 \\ 
            SeedBS\_ic && 1 &    29 &     2 &    63 &     3 &     1 &     1 & 0.0331 & 0.122 & 0.019 \\ 
            DP\_univar && 0 & 0 & 0 & 2 & 6 & 4 & 88 & 0.0602 & 0.126 & 0.139 \\ 
            \sla{SMUCE} & & \sla{0} & \sla{8} & \sla{39} & \sla{48} & \sla{5} & \sla{0} & \sla{0} & \sla{0.0405} & \sla{\textbf{0.034}} & \sla{0.005} \\
            \sla{MsFPOP} & & \sla{2} & \sla{34} & \sla{1} & \sla{63} & \sla{0} & \sla{0} & \sla{0} & \sla{0.0332} & \sla{0.143} & \sla{0.004} \\
            \hline
            \textbf{DAIS} && 0 & 0 & 1 & \textbf{94} & 4 & 1 & 0 & 0.0272 & 0.010 & 0.002 \\
             ID\_th &  &   0 &     0 &     2 &    88 &    10 &     0 &     0 & 0.0270 & 0.012 & 0.003 \\   
            ID\_ic & &     0 &     0 &     0 &    86 &    11 &     3 &     0 & 0.0272 & 0.011 & 0.008 \\ 
            WBS\_th &  &   0 &     0 &     1 &    84 &    14 &     1 &     0 & \textbf{0.0231} & 0.011 & 0.004 \\ 
            \textbf{WBS\_ic} && 0 &     0 &     0 &    \textbf{93} &     7 &     0 &     0 & \textbf{0.0229} & \textbf{0.008} & 0.010 \\ 
            \textbf{WBS2} &  & 0 &     0 &     1 &    \textbf{96} &     3 &     0 &     0 & \textbf{0.0231} & \textbf{0.008} & 0.096 \\ 
            PELT && 1 & 6 & 3 & 90 & 0 & 0 & 0 & 0.0263 & 0.016 & 0.001 \\ 
            \textbf{NOT} & {(S15)} & 0 &     0 &     0 &    \textbf{94} &     6 &     0 &     0 & \textbf{0.0242} & \textbf{0.008} & 0.018 \\ 
            \textbf{MOSUM} && 0 & 0 & 0 & \textbf{95} & 3 & 2 & 0 &  \textbf{0.0226} & \textbf{0.008} & 0.008 \\ 
            MSCP && 100 & 0 & 0 & 0 & 0 & 0 & 0 &  0.2110 & 0.261 & 0.242 \\ 
            SeedBS\_th && 0 &     2 &     9 &    77 &    10 &     2 &     0 & 0.0283 & 0.018 & 0.026 \\ 
            \textbf{SeedBS\_ic} && 0 &     0 &     0 &    \textbf{93} &     5 &     2 &     0 & \textbf{0.0240} & \textbf{0.009} & 0.026 \\
            DP\_univar && 49 & 18 & 2 & 31 & 0 & 0 & 0 &  0.0889 & 0.210 & 0.003 \\
            \sla{SMUCE} &  & \sla{8} & \sla{13} & \sla{25} & \sla{54} & \sla{0} & \sla{0} & \sla{0} & \sla{0.0598} & \sla{0.029} & \sla{0.002} \\
            \sla{MsFPOP} &  & \sla{100} & \sla{0} & \sla{0} & \sla{0} & \sla{0} & \sla{0} & \sla{0} & \sla{0.2500} & \sla{0.930} & \sla{0.001} \\
            \hline
            \end{tabular}
\end{table}

\begin{table}[tbp]
\caption{Distribution of $\hat{N} - N$ over 100 simulated data sequences of the Signal {(S16)}. 
    The average MSE and computational times are also given.} \label{table:justnoise_wave}
\centering
        \begin{tabular}{|l|l|c|c|c|c|c|c|}
            \hline
            &&\multicolumn{4}{|c|}{} && \\ 
            &&\multicolumn{4}{|c|}{$\hat{N} - N$} && \\
            Method & Signal & 0 & 1 & 2 & $\geq3$ & MSE & Time (s) \\
            \hline
            \textbf{DAIS} && \textbf{100} & 0 & 0 & 0 & $\boldsymbol{2.12 \times 10^{-3}}$ & 0.032 \\
            \textbf{ID\_th} & &   \textbf{100} &       0 &       0 &       0 & $\boldsymbol{2.12\times 10^{-3}}$ & 0.049 \\ 
            \textbf{ID\_ic} && \textbf{100} &       0 &       0 &       0 & $\boldsymbol{2.12\times 10^{-3}}$ &   0.033 \\
            \textbf{CPOP} && \textbf{100} & 0 & 0 & 0 & $2.45\times 10^{-3}$ & 3.220 \\ 
            \textbf{NOT} & {(S16)} & \textbf{98} &       2 &       0 &       0 & $2.46\times 10^{-3}$ &  0.427 \\ 
            MARS && 0 & 100 & 0 & 0 & $5.25\times 10^{-3}$ & 0.006 \\ 
            TF && 11 &      51 &      30 &       8 & $5.29\times 10^{-3}$ & 0.420 \\ 
            \textbf{TS} && \textbf{100} & 0 & 0 & 0 & $\boldsymbol{2.12 \times 10^{-3}}$ & 0.618 \\
            local\_poly && 10 & 5 & 17 & 68 & $6.47 \times 10^{-3}$ & 33.000 \\ 
            \hline
        \end{tabular}
\end{table}

            \begin{table}[tbp]
            \caption{Distribution of $\hat{N} - N$ over 100 simulated data sequences of the Signals {(S17)}
            and {(S18)}. The average MSE, $d_H$ and computational times are also given.} \label{table:wave4, wave5, wave6}
\centering
        \begin{tabular}{|l|l|c|c|c|c|c|c|c|c|c|c|}
            \hline
            &&\multicolumn{7}{|c|}{} &&& \\ 
            &&\multicolumn{7}{|c|}{$\hat{N} - N$} &&& \\
            Method & Signal & $\leq -15$ & $(-15,-2]$ & $-1$ & 0 & 1 & $[2,15)$ & $\geq15$ & MSE & $d_H$ & Time (s) \\
            \hline
            \textbf{DAIS} && - &    0 &    0 &   \textbf{96} &    4 &    0 &    0 & 0.022 & 0.104 & 0.003 \\ 
            \textbf{ID\_th} &  &  - &    0 &    0 &   \textbf{95} &    4 &    1 &    0 & 0.021 & 0.108 & 0.002 \\ 
            ID\_ic &&  - &    0 &    0 &   90 &    9 &    1 &    0 & 0.021 & 0.112 & 0.016 \\ 
            CPOP && - &    0 &    0 &   87 &   12 &    1 &    0 & 0.081 & 0.122 & 0.017 \\ 
            \textbf{NOT} & {(S17)} & - &    0 &    0 &   \textbf{96} &    3 &    1 &    0 & \textbf{0.015} & \textbf{0.088} & 0.065 \\ 
            MARS && - &   84 &   13 &    3 &    0 &    0 &    0 & 3.720 & 2.160 & 0.003 \\ 
            TF && - &    0 &    0 &    0 &    0 &   21 &   79 & 0.024 & 0.433 & 0.097 \\ 
            \textbf{TS} && - &    0 &    0 &   \textbf{96} &    4 &    0 &    0 & 0.094 & 0.198 & 0.110 \\ 
            local\_poly && - &  100 &    0 &    0 &    0 &    0 &    0 & 0.331 & 0.956 & 0.095 \\ 
            \hline
            \textbf{DAIS} && 0 &    1 &    2 &   \textbf{96} &    1 &    0 &    0 & \textbf{0.463} & {0.300} & 0.008 \\ 
            \textbf{ID\_th}  & &  0 &    0 &    2 &   \textbf{97} &    1 &    0 &    0 & \textbf{0.428} & \textbf{0.251} & 0.008 \\ 
            \textbf{ID\_ic} &&  0 &    2 &    0 &   \textbf{97} &    1 &    0 &    0 & 0.476 & 0.287 & 0.045 \\ 
            CPOP && 0 &    0 &    0 &   89 &   11 &    0 &    0 & 1.650 & {0.323} & 0.025 \\ 
            NOT & {(S18)} & 1 &    0 &    3 &   79 &   10 &    7 &    0 & 0.611 & 0.831 & 0.094 \\ 
            MARS && 100 &    0 &    0 &    0 &    0 &    0 &    0 & 6.590 & 48.00 & 0.003 \\ 
            TF && 0 &    0 &    0 &    0 &    0 &   82 &   18 & 1.200 & {0.323} & 0.110 \\ 
            TS && 100 &    0 &    0 &    0 &    0 &    0 &    0 & 6.650 &  49.00 & 0.197 \\ 
            local\_poly && 100 &    0 &    0 &    0 &    0 &    0 &    0 & 6.610 & 5.220 & 0.639 \\ 
            \hline
        \end{tabular}
\end{table}

\section{Investigation of the impact of parameter choice on accuracy}\label{sec: impact_params}
In this section, we investigate the impact of the expansion parameter, $\lambda_T$, and the threshold constant, $C$, in the accuracy of DAIS using a small simulation study.
The setup and signals used are the same as in Section~\ref{simulations} of the main paper and Section~\ref{sec:further_simulations}.
The values used for the expansion parameter are $\lambda_T \in \{1,3,5,10,15,20\}$ and for the threshold constant $C\in\{1.5, 1.6, \ldots,1.9\}$ for the case of piecewise-constant signals and $C\in\{1.9,2,\ldots,2.3\}$ for continuous, piecewise-linear signals.
The results in Table~\ref{tab: increasing_lambda} indicate that our algorithm is robust to small and moderate changes of $\lambda_T$.
Tables~\ref{tab: thr_const_const} and \ref{tab: thr_const_lin} indicate that any value of the threshold constant in the intervals $[1.5,1.9]$ for piecewise-constant and $[1.9,2.3]$ for continuous, piecewise-linear signals has good estimation accuracy.
\begin{table}[!h]
\sla{
\caption{Distribution of $\hat{N} - N$ over 100 simulated data sequences of signals for $\lambda_T \in \{1,3,5,10,15,20\}$, for the piecewise-constant Signals (S1), (S3), (S4), (S12) and (S15), and the continuous, piecewise-linear Signals (S9) and (S10). The average MSE, $d_H$ and computational times are also given.}
\label{tab: increasing_lambda}
\centering
        \begin{tabular}{|l|c|c|c|c|c|c|c|c|c|c|c|}
            \hline
            &&\multicolumn{7}{|c|}{} &&& \\ 
            &&\multicolumn{7}{|c|}{$\hat{N} - N$} &&& \\
            $\lambda_T$ & Signal & $\leq 3$ & -2 & -1 &0 & 1 & 2 & $\geq 3$ & MSE & $d_H$ &Time (s) \\
            \hline
             1 & &     - &     11 &      4 &     82 &      1 &      1 &      1 & 0.013 & 0.076 & 0.010 \\ 
            3 &   &   - &     13 &      2 &     85 &      0 &      0 &      0 & 0.012 & 0.073 & 0.004 \\ 
            5 & (S1)  &   - &     16 &      3 &     79 &      2 &      0 &      0 & 0.013 & 0.095 & 0.003 \\ 
            10 &  &    - &     18 &      2 &     77 &      2 &      1 &      0 & 0.013 & 0.102 & 0.002 \\ 
            15 &  &    - &     21 &      2 &     76 &      0 &      1 &      0 & 0.014 & 0.114 & 0.001 \\ 
            20 &   &   - &     22 &      2 &     73 &      2 &      1 &      0 & 0.014 & 0.122 & 0.001 \\ 
            \hline
            1 & &    0 &     0 &     1 &    95 &     4 &     0 &     0 & 0.022 & 0.009 & 0.003 \\ 
            3 &  &   0 &     0 &     0 &    98 &     2 &     0 &     0 & 0.022 & 0.009 & 0.003 \\ 
            5 & (S3) &   0 &     0 &     2 &    97 &     1 &     0 &     0 & 0.022 & 0.009 & 0.002 \\ 
            10 &  &   0 &     0 &     5 &    95 &     0 &     0 &     0 & 0.025 & 0.011 & 0.002 \\ 
            15 &   &  0 &     0 &     3 &    92 &     5 &     0 &     0 & 0.026 & 0.012 & 0.002 \\ 
            20 &   &  0 &     0 &     3 &    91 &     5 &     1 &     0 & 0.028 & 0.013 & 0.003 \\ 
            \hline
            1 &   & 0 &    1 &    0 &   89 &    9 &    1 &    0 & 1.800 & 0.017 & 0.003 \\ 
            3 &  &  0 &    1 &    0 &   89 &    9 &    1 &    0 & 1.750 & 0.018 & 0.002 \\ 
            5 & (S4) &   0 &    1 &    0 &   92 &    7 &    0 &    0 & 1.790 & 0.016 & 0.002 \\ 
            10 &  &  0 &    1 &    0 &   95 &    4 &    0 &    0 & 1.750 & 0.015 & 0.002 \\ 
            15 & &   0 &    4 &    0 &   91 &    5 &    0 &    0 & 1.830 & 0.018 & 0.002 \\ 
            20 & &   0 &    6 &    0 &   93 &    1 &    0 &    0 & 1.890 & 0.018 & 0.002 \\ 
            \hline
            1 &  &  0 &    0 &    0 &   98 &    2 &    0 &    0 & 0.031 & 0.089 & 0.028 \\ 
            3 &  &  0 &    0 &    0 &   98 &    2 &    0 &    0 & 0.030 & 0.086 & 0.011 \\ 
            5 & (S9) &  0 &    0 &    0 &   99 &    1 &    0 &    0 & 0.029 & 0.085 & 0.007 \\ 
            10 &  &  0 &    0 &    0 &   98 &    2 &    0 &    0 & 0.028 & 0.083 & 0.004 \\ 
            15 &  &  0 &    0 &    0 &   99 &    1 &    0 &    0 & 0.026 & 0.082 & 0.004 \\ 
            20 &  &  0 &    0 &    0 &  100 &    0 &    0 &    0 & 0.025 & 0.072 & 0.003 \\ 
            \hline
            1 &  &  0 &    0 &    0 &   96 &    4 &    0 &    0 & 0.292 & 0.325 & 0.034 \\ 
            3 &  &  0 &    0 &    0 &   98 &    2 &    0 &    0 & 0.265 & 0.297 & 0.024 \\ 
            5 &  (S10) &  0 &    1 &    0 &   97 &    2 &    0 &    0 & 0.246 & 0.299 & 0.023 \\ 
            10 &  &  0 &    0 &    0 &   99 &    1 &    0 &    0 & 0.215 & 0.251 & 0.021 \\ 
            15 &  &  0 &    0 &    0 &  100 &    0 &    0 &    0 & 0.228 & 0.247 & 0.021 \\ 
            20 &  &  1 &    0 &    0 &   81 &   16 &    2 &    0 & 0.375 & 0.385 & 0.021 \\ 
            \hline
            1 &&      - &      - &      - &     99 &      0 &      1 &      0 & $1.9 \times 10^{-4}$ &    - & 0.599 \\ 
            3 &&      - &      - &      - &    100 &      0 &      0 &      0 & $1.5\times 10^{-4}$ &    - & 0.209 \\ 
            5 & (S12) &      - &      - &      - &    100 &      0 &      0 &      0 & $1.5\times 10^{-4}$ &    - & 0.124 \\ 
            10 &&     - &      - &      - &    100 &      0 &      0 &      0 & $1.5\times 10^{-4}$ &    - & 0.062 \\ 
            15 &&      - &      - &      - &    100 &      0 &      0 &      0 & $1.5\times 10^{-4}$ &    - & 0.037 \\ 
            20 &&      - &      - &      - &    100 &      0 &      0 &      0 & $1.5\times 10^{-4}$ &    - & 0.032 \\ 
            \hline
            1 & &   0 &    1 &    0 &   92 &    7 &    0 &    0 & 0.028 & 0.011 & 0.004 \\ 
            3 & &   0 &    1 &    0 &   94 &    5 &    0 &    0 & 0.027 & 0.011 & 0.003 \\ 
            5 & (S15) &  0 &    1 &    0 &   96 &    3 &    0 &    0 & 0.026 & 0.010 & 0.003 \\ 
            10 & &   0 &    1 &    0 &   97 &    1 &    1 &    0 & 0.025 & 0.009 & 0.002 \\ 
            15 &  &  0 &    0 &    0 &  100 &    0 &    0 &    0 & 0.025 & 0.008 & 0.003 \\ 
            20 & &   1 &    0 &    0 &   98 &    1 &    0 &    0 & 0.026 & 0.011 & 0.003 \\ 
            \hline
            \end{tabular}
}
\end{table}

\begin{table}[htbp]
\sla{
\caption{Distribution of $\hat{N} - N$ over 100 simulated data sequences of signals for $C\in\{1.5, 1.6, 1.7, 1.8, 1.9\}$, for the piecewise-constant Signals (S1), (S3), (S4) and (S14). The average MSE, $d_H$ and computational times are also given.\label{tab: thr_const_const}}
\centering
        \begin{tabular}{|l|c|c|c|c|c|c|c|c|c|c|c|}
            \hline
            &&\multicolumn{7}{|c|}{} &&& \\ 
            &&\multicolumn{7}{|c|}{$\hat{N} - N$} &&& \\
            $C$ & Signal & $\leq 3$ & -2 & -1 &0 & 1 & 2 & $\geq 3$ & MSE & $d_H$ &Time (s) \\
            \hline
            1.5 &&    - &    4 &    2 &   68 &   12 &   12 &    2 & 0.015 & 0.090 & 0.004 \\ 
            1.6 &  &  - &    8 &    2 &   85 &    3 &    1 &    1 & 0.013 & 0.062 & 0.004 \\ 
            1.7 & (S1)  &  - &   14 &    2 &   84 &    0 &    0 &    0 & 0.013 & 0.078 & 0.004 \\ 
            1.8 & &   - &   21 &    2 &   77 &    0 &    0 &    0 & 0.014 & 0.113 & 0.004 \\ 
            1.9 &   & - &   28 &    4 &   68 &    0 &    0 &    0 & 0.016 & 0.149 & 0.004 \\
            \hline
            1.5 & &   0 &    0 &    0 &   91 &    8 &    1 &    0 & 0.021 & 0.009 & 0.002 \\ 
            1.6 &  &   0 &    0 &    0 &   95 &    5 &    0 &    0 & 0.021 & 0.008 & 0.002 \\ 
            1.7 & (S3) &    0 &    0 &    0 &   98 &    2 &    0 &    0 & 0.022 & 0.009 & 0.002 \\ 
            1.8 & &    0 &    0 &    3 &   96 &    1 &    0 &    0 & 0.022 & 0.010 & 0.002 \\ 
            1.9 & &    0 &    1 &   10 &   89 &    0 &    0 &    0 & 0.024 & 0.013 & 0.002 \\ 
            \hline
            1.5 &  &  0 &    0 &    0 &   72 &   21 &    3 &    4 & 1.890 & 0.022 & 0.002 \\ 
            1.6 &  &  0 &    0 &    0 &   82 &   15 &    3 &    0 & 1.780 & 0.018 & 0.002 \\ 
            1.7 & (S4) &  0 &    1 &    0 &   89 &    9 &    1 &    0 & 1.750 & 0.018 & 0.002 \\ 
            1.8 &  &  0 &    2 &    0 &   96 &    1 &    1 &    0 & 1.730 & 0.016 & 0.002 \\ 
            1.9 & &   0 &    3 &    0 &   96 &    0 &    1 &    0 & 1.760 & 0.017 & 0.002 \\ 
            \hline
            1.5 & & - &    - &    - &   75 &    6 &   16 &    3 & 0.001 &  - & 0.197 \\ 
            1.6 &  &  - &    - &    - &   93 &    2 &    5 &    0 & 0.001 &  - & 0.222 \\ 
            1.7 & (S12) &   - &    - &    - &   99 &    0 &    1 &    0 & 0.001 &  - & 0.216 \\ 
            1.8 &  &  - &    - &    - &   99 &    0 &    1 &    0 & 0.001 &  - & 0.223 \\ 
            1.9 & &   - &    - &    - &  100 &    0 &    0 &    0 & 0.001 &  - & 0.231 \\ 
            \hline
            1.5 & & 0 &    0 &    0 &   74 &   16 &    6 &    4 & 0.027 & 0.013 & 0.002 \\ 
            1.6 &  &    0 &    0 &    0 &   89 &    7 &    2 &    2 & 0.027 & 0.011 & 0.002 \\ 
            1.7 & (S15) &    0 &    1 &    0 &   94 &    5 &    0 &    0 & 0.027 & 0.011 & 0.002 \\ 
            1.8 & &    1 &    1 &    0 &   96 &    2 &    0 &    0 & 0.027 & 0.013 & 0.002 \\ 
            1.9 & &    1 &    4 &    1 &   93 &    1 &    0 &    0 & 0.028 & 0.017 & 0.002 \\ 
            \hline
            \end{tabular}
}
\end{table}

\begin{table}[htbp]
\sla{
\caption{Distribution of $\hat{N} - N$ over 100 simulated data sequences of signals for $C\in\{1.9, 2, 2.1, 2.2\}$, for the continuous, piecewise-linear Signals (S9) and (S10). The average MSE, $d_H$ and computational times are also given.\label{tab: thr_const_lin}}
\centering
        \begin{tabular}{|l|c|c|c|c|c|c|c|c|c|c|c|}
            \hline
            &&\multicolumn{7}{|c|}{} &&& \\ 
            &&\multicolumn{7}{|c|}{$\hat{N} - N$} &&& \\
            $C$ & Signal & $\leq 3$ & -2 & -1 &0 & 1 & 2 & $\geq 3$ & MSE & $d_H$ &Time (s) \\
            \hline
            1.9 &  &  0 &    0 &    0 &   92 &    8 &    0 &    0 & 0.032 & 0.097 & 0.009 \\ 
            2 &  &  0 &    0 &    0 &   97 &    3 &    0 &    0 & 0.032 & 0.091 & 0.010 \\ 
            2.1 &  (S9) &  0 &    0 &    0 &   98 &    2 &    0 &    0 & 0.030 & 0.087 & 0.010 \\ 
            2.2 &  &  0 &    0 &    0 &   99 &    1 &    0 &    0 & 0.029 & 0.083 & 0.009 \\ 
            2.3 &  &  0 &    0 &    0 &  100 &    0 &    0 &    0 & 0.028 & 0.075 & 0.010 \\ 
            \hline
            1.9 &  &   0 &    0 &    0 &   92 &    7 &    1 &    0 & 0.269 & 0.304 & 0.018 \\ 
            2 &  &   0 &    0 &    0 &   96 &    4 &    0 &    0 & 0.265 & 0.297 & 0.017 \\ 
            2.1 & (S10) &    0 &    0 &    0 &   99 &    1 &    0 &    0 & 0.265 & 0.295 & 0.017 \\ 
            2.2 & &    0 &    1 &    0 &   97 &    2 &    0 &    0 & 0.267 & 0.311 & 0.018 \\ 
            2.3 & &    0 &    1 &    0 &   95 &    4 &    0 &    0 & 0.270 & 0.329 & 0.032 \\ 
            \hline
            \end{tabular}
}
\end{table}
\section{Proof of Theorem~\ref{consistency_theorem_slope}} \label{sec: proof_theorem_slope}
For the proof of Theorem~\ref{consistency_theorem_slope}, we require the following two lemmas.
\begin{lemma} \label{Lemma_for_main_thm_slope1}
Suppose $\boldsymbol{f} = \left( f_1, f_2, \dots, f_{T} \right)^T$ is a piecewise-linear vector and $r_1, r_2, \ldots, r_N$ are the locations of the change-points. 
Suppose $1 \leq s < e \leq T$, such that $r_{j-1} \leq s < r_j < e \leq r_{j+1}$, for some $j=1,2,\ldots, N$. Let $\Delta^f_j = \left| 2f_{r_j} - f_{r_j+1} - f_{r_j-1}\right|$ and $\eta = \min\{r_j-s, e-r_j\}$. Then,
\begin{equation*}
    C^{r_j}_{s,e}(\boldsymbol{f}) = \max_{s<b<e} C^{b}_{s,e}(\boldsymbol{f})\left\{\begin{array}{l}
    \geq \frac{1}{\sqrt{24}} \eta^{3/2} \Delta_j^f, \\
    \leq \frac{1}{\sqrt{3}} (\eta + 1)^{3/2} \Delta_j^f
\end{array}\right.
\end{equation*}
\end{lemma}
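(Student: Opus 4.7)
\medskip

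\noindent\textbf{Proof proposal for Lemma \ref{Lemma_for_main_thm_slope1}.}

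The plan splits the statement into two parts: (i) showing that $b=r_j$ is the argmax of $C^b_{s,e}(\boldsymbol{f})$ over $b\in(s,e)$, and (ii) deriving the two-sided inequality on $C^{r_j}_{s,e}(\boldsymbol{f})$. Both parts will lean on the defining property of the contrast vector $\boldsymbol{\phi^b_{s,e}}$ from \eqref{phi_definition}: it is the unit vector (in $\ell^2$) supported on $\{s,\ldots,e\}$ which is affine on $\{s,\ldots,b\}$, affine on $\{b+1,\ldots,e\}$, continuous at $b$, and orthogonal to the two-dimensional subspace of affine functions on $\{s,\ldots,e\}$. This characterization, which is the one used in the construction of the contrast in \cite{baranowski2019narrowest}, is the key structural fact I would invoke throughout.

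For part (i), I would decompose $\boldsymbol{f}$ on $\{s,\ldots,e\}$ as $\boldsymbol{f}|_{[s,e]} = L + \Delta^f_j\,\boldsymbol{g}_{r_j}$, where $L$ is an affine function on $\{s,\ldots,e\}$ (namely, the affine extension of $\boldsymbol{f}|_{[s,r_j]}$ to all of $[s,e]$), and $\boldsymbol{g}_{r_j}$ is a fixed ``unit kink'' vector supported on $\{r_j+1,\ldots,e\}$ taking the ramp shape $t\mapsto (t-r_j)$ (up to a sign). Orthogonality of $\boldsymbol{\phi^b_{s,e}}$ to affine functions gives $\langle \boldsymbol{f},\boldsymbol{\phi^b_{s,e}}\rangle = \Delta^f_j\,\langle \boldsymbol{g}_{r_j},\boldsymbol{\phi^b_{s,e}}\rangle$ for every $b\in(s,e)$. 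Since $\boldsymbol{g}_{r_j}$ itself lies in the span of $L$'s affine space plus $\boldsymbol{\phi^{r_j}_{s,e}}$ (this 3-dimensional space of continuous piecewise-linear functions with possible kink at $r_j$ contains $L$, $\boldsymbol{1}$ and $\boldsymbol{\phi^{r_j}_{s,e}}$), we can further write $\boldsymbol{g}_{r_j} = L' + \kappa\,\boldsymbol{\phi^{r_j}_{s,e}}$ for some affine $L'$ and scalar $\kappa$, whence $\langle \boldsymbol{f},\boldsymbol{\phi^b_{s,e}}\rangle = \Delta^f_j\,\kappa\,\langle \boldsymbol{\phi^{r_j}_{s,e}},\boldsymbol{\phi^b_{s,e}}\rangle$. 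Cauchy--Schwarz applied to the two unit vectors $\boldsymbol{\phi^{r_j}_{s,e}}$ and $\boldsymbol{\phi^b_{s,e}}$ yields $|\langle \boldsymbol{\phi^{r_j}_{s,e}},\boldsymbol{\phi^b_{s,e}}\rangle|\leq 1$ with equality iff $b=r_j$, so $r_j$ is the unique maximizer.

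For part (ii), the previous decomposition already gives $C^{r_j}_{s,e}(\boldsymbol{f}) = \Delta^f_j\,|\kappa|$, so everything reduces to bounding the explicit scalar $\kappa = \langle \boldsymbol{g}_{r_j},\boldsymbol{\phi^{r_j}_{s,e}}\rangle$. I would compute this inner product directly using the closed form in \eqref{phi_definition}. Since $\boldsymbol{g}_{r_j}$ vanishes on $\{s,\ldots,r_j\}$, only the right branch contributes, and the sum reduces to $\sum_{t=r_j+1}^{e}(t-r_j)\,\phi^{r_j}_{s,e}(t)$. Substituting the affine form of $\phi^{r_j}_{s,e}$ on the right branch, this becomes a combination of $\sum_{t=r_j+1}^{e}(t-r_j)$ and $\sum_{t=r_j+1}^{e}(t-r_j)\,t$, which evaluate to elementary polynomials in $L:=r_j-s+1$ and $R:=e-r_j$. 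After the algebra, the expression for $\kappa$ should collapse into a closed form whose square is (up to the combinatorial factor in $\alpha^{r_j}_{s,e}\beta^{r_j}_{s,e}$) proportional to $\frac{L^2 R^2\,(L-1)(R+1)\text{-type terms}}{(L+R)(L+R+1)(L+R-1)\,(\text{mixed term})}$. Bounding this ratio using $\eta=\min(L-1,R)$ (so that both $L-1,R\geq \eta$ and $L+R\leq 2(\eta+1)+|L-1-R|$) produces the asserted $\eta^{3/2}$ lower bound and $(\eta+1)^{3/2}$ upper bound via elementary monotonicity estimates.

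The main obstacle will be the bookkeeping in part (ii): the definition of $\boldsymbol{\phi^b_{s,e}}$ mixes the factors $\alpha^b_{s,e}$ and $\beta^b_{s,e}$ in an asymmetric way between the two branches, so pushing the identity $C^{r_j}_{s,e}(\boldsymbol{f})=\Delta^f_j\,|\kappa|$ to a clean closed form requires patient cancellation. Once that closed form is in place, the two inequalities follow from coarse bounds of the type $L-1,R\in[\eta,\ell-\eta-1]$ and $\ell-1\leq 2(\eta+1)$ when $\eta$ is the minimum, which are routine.
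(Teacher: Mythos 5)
Your proposal should first be set against what the paper actually does: the paper does not prove this lemma at all, it simply cites Lemma 5 of Baranowski, Chen and Fryzlewicz (2019), so any complete argument is necessarily a different route. Your part (i) is correct and is in essence the standard argument behind that citation: on $\{s,\ldots,e\}$ one has $\boldsymbol{f} = L \pm \Delta^f_j\,\boldsymbol{g}$ with $L$ affine and $g_t = (t-r_j)_+$, since the slope change at $r_j$ is $f_{r_j+1}-2f_{r_j}+f_{r_j-1}$, whose modulus is exactly $\Delta^f_j$; orthogonality of $\boldsymbol{\phi^b_{s,e}}$ to affine vectors, the expansion of $\boldsymbol{g}$ in the basis $\{\boldsymbol{1},(t),\boldsymbol{\phi^{r_j}_{s,e}}\}$ of the three-dimensional space of continuous piecewise-linear vectors with kink at $r_j$, and Cauchy--Schwarz for the unit vectors $\boldsymbol{\phi^{r_j}_{s,e}},\boldsymbol{\phi^{b}_{s,e}}$ then give $C^{b}_{s,e}(\boldsymbol{f}) = \Delta^f_j\,|\kappa|\,|\langle\boldsymbol{\phi^{r_j}_{s,e}},\boldsymbol{\phi^{b}_{s,e}}\rangle| \leq \Delta^f_j\,|\kappa| = C^{r_j}_{s,e}(\boldsymbol{f})$, with $\kappa = \langle\boldsymbol{g},\boldsymbol{\phi^{r_j}_{s,e}}\rangle$. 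That part stands.

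Part (ii), however, is not yet a proof: the closed form for $\kappa$ is only guessed (``should collapse into \ldots $(L-1)(R+1)$-type terms''), and the two inequalities with the exact constants $1/\sqrt{24}$ and $1/\sqrt{3}$ --- which are the entire content of the bound --- are asserted rather than derived; your auxiliary inequality $L+R\leq 2(\eta+1)+|L-1-R|$ is trivially true and does not by itself yield an $\eta^{3/2}$ rate. The gap closes cleanly, and without the branch-by-branch algebra with $\alpha^b_{s,e},\beta^b_{s,e}$ from \eqref{phi_definition} that you anticipate, once you notice a consequence of your own setup: since $\boldsymbol{\phi^{r_j}_{s,e}}$ is the normalized projection $P\boldsymbol{g}/\|P\boldsymbol{g}\|_2$ of the ramp onto the orthogonal complement of the affine vectors, your $\kappa$ equals $\|P\boldsymbol{g}\|_2$. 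For the upper bound, with $L = r_j-s+1$ and $R = e-r_j$, bound $\|P\boldsymbol{g}\|_2$ by the distance to either of two affine candidates: $\|\boldsymbol{g}\|_2^2 = \sum_{u=1}^{R}u^2 \leq (R+1)^3/3$ and $\|\boldsymbol{g}-(t-r_j)\|_2^2 = \sum_{u=1}^{L-1}u^2 \leq L^3/3$; taking the minimum and using $\eta=\min\{L-1,R\}$ gives $\kappa \leq (\eta+1)^{3/2}/\sqrt{3}$. For the lower bound, test against $w_t = |t-r_j| - \frac{\eta(\eta+1)}{2\eta+1}$ for $|t-r_j|\leq\eta$ and $w_t=0$ otherwise; this is supported in $[s,e]$, is orthogonal to $\boldsymbol{1}$ (by the choice of the centering constant) and to $(t)$ (by symmetry about $r_j$), so $\kappa \geq \langle\boldsymbol{g},w\rangle/\|w\|_2$, and a short computation gives $\langle\boldsymbol{g},w\rangle/\|w\|_2 = \bigl(\eta(\eta+1)(\eta^2+\eta+1)/(12(2\eta+1))\bigr)^{1/2}$, which is at least $\eta^{3/2}/\sqrt{24}$ because $2(\eta+1)(\eta^2+\eta+1)\geq \eta^2(2\eta+1)$ reduces to $3\eta^2+4\eta+2\geq 0$. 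With these two substitutions your outline becomes a complete, self-contained proof of the cited lemma.
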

\begin{proof}
    See Lemma 5 from \cite{baranowski2019narrowest_supp}.
\end{proof}

\begin{lemma} \label{Lemma_for_main_thm_slope2}
Suppose $\boldsymbol{f} = \left( f_1, f_2, \dots, f_{T} \right)^T$ is a piecewise-linear vector and $r_1, r_2, \ldots, r_N$ are the locations of the change-points. Suppose $1 \leq s < e \leq T$, such that $r_{j-1} \leq s < r_j < e \leq r_{j+1}$, for some $j=1,2,\ldots, N$. Let $\rho = \left| r-b \right|, \Delta^f_j = \left| 2f_{r_j} - f_{r_j+1} - f_{r_j-1}\right|, \eta_L = r_j-s$ and $\eta_R = e-r_j$. Then,
\begin{equation*}
    \|\boldsymbol{\phi_{s,e}^b}\langle\boldsymbol{f},\boldsymbol{\phi_{s,e}^b}\rangle - \boldsymbol{\phi_{s,e}^{r}}\langle\boldsymbol{f},\boldsymbol{\phi_{s,e}^{r}}\rangle\|_{2}^2 = \left( C^{r_j}_{s,e}(\boldsymbol{f}) \right)^2 - \left( C^{b}_{s,e}(\boldsymbol{f}) \right)^2.
\end{equation*}
In addition,
\begin{enumerate}
    \item for any $r_j \leq b < e$, $\left( C^{r_j}_{s,e}(\boldsymbol{f}) \right)^2 - \left( C^{b}_{s,e}(\boldsymbol{f}) \right)^2 = \frac{1}{63}\min\{\rho, \eta_L\}^3 \Big( \Delta^f_j \Big)^2$;
    
    \item for any $s \leq b < r_j$, $\left( C^{r_j}_{s,e}(\boldsymbol{f}) \right)^2 - \left( C^{b}_{s,e}(\boldsymbol{f}) \right)^2 = \frac{1}{63}\min\{\rho, \eta_R\}^3 \Big( \Delta^f_j \Big)^2$.
\end{enumerate}
\end{lemma}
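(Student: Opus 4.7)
The plan is to exploit the fact that on $[s,e]$ the signal $\boldsymbol{f}$ has a single knot at $r_j$ and therefore lies in a three-dimensional subspace, spanned by $\{1, t, \boldsymbol{\phi^{r_j}_{s,e}}\}$. By inspection of \eqref{phi_definition}, the contrast vector $\boldsymbol{\phi^b_{s,e}}$ has unit $\ell_2$-norm and is orthogonal to every linear function on $[s,e]$ (this is what makes $\langle \boldsymbol{f}, \boldsymbol{\phi^b_{s,e}}\rangle$ vanish when no change is present). Consequently, writing $\boldsymbol{f}|_{[s,e]} = \boldsymbol{L} + \alpha\,\boldsymbol{\phi^{r_j}_{s,e}}$ for a suitable linear $\boldsymbol{L}$ and scalar $\alpha$, I would note that $\alpha = \langle \boldsymbol{f}, \boldsymbol{\phi^{r_j}_{s,e}}\rangle$ and
\[
\langle \boldsymbol{f}, \boldsymbol{\phi^b_{s,e}}\rangle \;=\; \alpha\,\langle \boldsymbol{\phi^{r_j}_{s,e}}, \boldsymbol{\phi^b_{s,e}}\rangle \;=\; \langle \boldsymbol{f}, \boldsymbol{\phi^{r_j}_{s,e}}\rangle \, \langle \boldsymbol{\phi^{r_j}_{s,e}}, \boldsymbol{\phi^b_{s,e}}\rangle.
\]

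The first identity of the lemma then drops out of a short calculation. Setting $\mu := \langle \boldsymbol{\phi^{r_j}_{s,e}}, \boldsymbol{\phi^b_{s,e}}\rangle$ and using unit norms,
\[
\bigl\|\boldsymbol{\phi^b_{s,e}}\langle \boldsymbol{f},\boldsymbol{\phi^b_{s,e}}\rangle - \boldsymbol{\phi^{r_j}_{s,e}}\langle \boldsymbol{f},\boldsymbol{\phi^{r_j}_{s,e}}\rangle\bigr\|_2^2
\;=\; \alpha^2 \mu^2 + \alpha^2 - 2\alpha^2 \mu^2
\;=\; \alpha^2(1-\mu^2)
\;=\; \langle \boldsymbol{f},\boldsymbol{\phi^{r_j}_{s,e}}\rangle^2 - \langle \boldsymbol{f},\boldsymbol{\phi^b_{s,e}}\rangle^2,
\]
which, squared and absolute-valued, is exactly $(C^{r_j}_{s,e}(\boldsymbol{f}))^2 - (C^b_{s,e}(\boldsymbol{f}))^2$. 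This proves the norm identity cleanly and shows that the remaining task is to evaluate $\alpha^2(1-\mu^2)$ explicitly in the two cases $b \geq r_j$ and $b < r_j$.

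For the explicit computation of parts 2 and 3, I would reduce to a canonical signal. Since both $\boldsymbol{\phi^{r_j}_{s,e}}$ and $\boldsymbol{\phi^b_{s,e}}$ annihilate the linear component $\boldsymbol{L}$, I can replace $\boldsymbol{f}$ by the continuous piecewise-linear function $\boldsymbol{g}$ which is identically zero on $\{s,\ldots,r_j\}$ and equals $(t - r_j)\Delta^f_j$ on $\{r_j,\ldots,e\}$; this has kink magnitude $\Delta^f_j$ at $r_j$ and yields the same inner products. Substituting the piecewise-affine forms of $\boldsymbol{\phi^{r_j}_{s,e}}$ and $\boldsymbol{\phi^b_{s,e}}$ from \eqref{phi_definition} into the sums $\langle \boldsymbol{g}, \boldsymbol{\phi^{r_j}_{s,e}}\rangle$ and $\langle \boldsymbol{g}, \boldsymbol{\phi^b_{s,e}}\rangle$ reduces the problem to evaluating a handful of polynomial sums of the type $\sum t$, $\sum t^2$ over $\{s,\ldots,r_j\}$ and $\{r_j+1,\ldots,e\}$, depending on whether $b$ sits left or right of $r_j$. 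After cancellations, the difference $\alpha^2(1-\mu^2)$ must collapse to the claimed closed form $\tfrac{1}{63}\min\{\rho,\eta_L\}^3(\Delta^f_j)^2$ (resp. with $\eta_R$).

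The bookkeeping in the last step is where I expect all the difficulty to lie: the algebra is elementary but the denominators in $\alpha^b_{s,e}$ and $\beta^b_{s,e}$ cause the intermediate expressions to be bulky, and isolating the universal constant $1/63$ requires a careful grouping in terms of $\rho = |r_j - b|$ and the relevant side-length $\eta_L$ or $\eta_R$. I would handle the two sub-cases symmetrically, verify the result against a small numerical check for specific $(s, r_j, b, e)$, and if the direct expansion proves unwieldy, fall back on the equivalent route of computing $\mu^2$ via Lemma~\ref{Lemma_for_main_thm_slope1} (which already provides $\alpha^2$ exactly when applied with the CUSUM of $\boldsymbol{g}$ on suitable subintervals) and then deducing $\alpha^2(1-\mu^2)$ by subtracting two such expressions.
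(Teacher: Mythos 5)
Your derivation of the displayed norm identity is correct and, in fact, more informative than what the paper offers: the paper's entire ``proof'' of this lemma is the citation ``See Lemma 7 from Baranowski, Chen and Fryzlewicz (2019)'', so there is no argument in the paper to compare against. Your route --- observing that $\boldsymbol{\phi^b_{s,e}}$ has unit $\ell_2$-norm and is orthogonal to constants and linear trends on $[s,e]$ (both hold for \eqref{phi_definition}; e.g.\ for $s=1,e=5,b=3$ one gets $\boldsymbol{\phi^3_{1,5}}=\tfrac{1}{\sqrt{280}}(-8,2,12,2,-8)$, which sums to zero against $1$ and $t$ and has unit norm), so that $\boldsymbol{f}$ restricted to $[s,e]$ lies in ${\rm span}\{1,t,\boldsymbol{\phi^{r_j}_{s,e}}\}$, whence $\langle\boldsymbol{f},\boldsymbol{\phi^b_{s,e}}\rangle=\mu\,\langle\boldsymbol{f},\boldsymbol{\phi^{r_j}_{s,e}}\rangle$ and the difference of squared CUSUMs equals $\alpha^2(1-\mu^2)$ --- is clean and complete for the first claim.

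The genuine gap is in your treatment of parts 1 and 2: you assert that after substituting the canonical hinge signal, the expansion of $\alpha^2(1-\mu^2)$ ``must collapse to the claimed closed form'' $\tfrac{1}{63}\min\{\rho,\eta_L\}^3\big(\Delta^f_j\big)^2$. It cannot, because that expression is not the exact value of $\alpha^2(1-\mu^2)$; no amount of bookkeeping will produce the stated equality. Concretely, take $s=1$, $e=5$, $r_j=3$, $b=4$ and $\boldsymbol{f}=(0,0,0,1,2)$, so $\Delta^f_j=1$, $\rho=1$, $\eta_L=2$. Direct evaluation of \eqref{phi_definition} gives $\big(C^{3}_{1,5}(\boldsymbol{f})\big)^2=\tfrac{196}{280}=\tfrac{7}{10}$ and $\big(C^{4}_{1,5}(\boldsymbol{f})\big)^2=\tfrac{96}{240}=\tfrac{2}{5}$, so the left-hand side is $\tfrac{3}{10}$ (and indeed $\mu^2=\tfrac{4}{7}$, so $\alpha^2(1-\mu^2)=\tfrac{7}{10}\cdot\tfrac{3}{7}=\tfrac{3}{10}$, confirming your first identity), whereas the claimed right-hand side is $\tfrac{1}{63}$. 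The true content of the cited source is the one-sided bound $\big(C^{r_j}_{s,e}(\boldsymbol{f})\big)^2-\big(C^{b}_{s,e}(\boldsymbol{f})\big)^2\geq\tfrac{1}{63}\min\{\rho,\eta_L\}^3\big(\Delta^f_j\big)^2$ (resp.\ $\eta_R$), and that lower-bound direction is all that Step 3 of the proof of Theorem~\ref{consistency_theorem_slope} actually uses at \eqref{lambda_expression}. Proving it is an estimation argument --- you must bound the correlation $\mu^2$ away from $1$ by a term of order $\min\{\rho,\eta\}^3$ relative to $\alpha^2$ --- not an exact evaluation. Your proposed fallback does not repair this: Lemma~\ref{Lemma_for_main_thm_slope1} bounds only $C^{r_j}_{s,e}(\boldsymbol{f})=\max_b C^b_{s,e}(\boldsymbol{f})$, not $C^b$ at an arbitrary $b$, and its two-sided bounds carry mismatched constants ($1/\sqrt{24}$ below, $1/\sqrt{3}$ above), so subtracting a lower bound for one term from an upper bound for the other yields nothing useful, let alone an identity.
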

\begin{proof}
    See Lemma 7 from \cite{baranowski2019narrowest_supp}.
\end{proof}

The steps of the proof of Theorem~\ref{consistency_theorem_slope} are the same as for Theorem~\ref{consistency_theorem}, which can be found in Appendix~\ref{proofs}.
\sla{We work under the notation in \eqref{eq: def minimum magnitude of change}, \eqref{eq: def magnitude of change} and \eqref{eq: def distance between cpts}.}
\vspace{0.2in}
\\
\begin{proof}
We will prove the more specific result
\begin{equation}
\label{mainresult_theorem_slope}
\mathbb{P}\Biggl( \hat{N} = N, \max_{j=1, 2, \ldots, N} \biggl( \left| \hat{r}_j - r_j \right| \left( \Delta^{f}_j \right)^{2/3} \biggl) \leq C_{3} (\log T)^{1/3} \Biggl) \geq 1 - \frac{1}{6\sqrt{\pi}{T}},
\end{equation}
which implies result \eqref{main result slope} of the main paper.
\vspace{0.1in}
\\
{\textbf{Steps 1 \& 2:}} 
Similar to Theorem~\ref{consistency_theorem}, denote
\begin{align}\label{A_T_slope}
& A_{T}^{\ast} = \left\lbrace \max_{s,b,e: 1\leq s \leq b < e \leq T}\left|C^{b}_{s,e}(\boldsymbol{X}) - C^{b}_{s,e}(\boldsymbol{f}) \right|\leq \sqrt{8\log T} \right\rbrace \nonumber\\
& B_{T}^{\ast} = \left\lbrace \max_{j=1,2\ldots,N} \max_{\substack{r_{j-1}<s\leq r_j\\r_j < e \leq r_{j+1}\\s\leq b < e}} \frac{\left|\left\langle\boldsymbol{\phi_{s,e}^b}\langle\boldsymbol{f},\boldsymbol{\phi_{s,e}^b}\rangle - \boldsymbol{\phi_{s,e}^{r}}\langle\boldsymbol{f},\boldsymbol{\phi_{s,e}^{r}}\rangle,\boldsymbol{\epsilon}\right\rangle\right|}{\|\boldsymbol{\phi_{s,e}^b}\langle\boldsymbol{f},\boldsymbol{\phi_{s,e}^b}\rangle - \boldsymbol{\phi_{s,e}^{r}}\langle\boldsymbol{f},\boldsymbol{\phi_{s,e}^{r}}\rangle\|_{2}}\leq \sqrt{8 \log T}\right\rbrace.
\end{align}
The same reasoning as in the proof of Theorem~\ref{consistency_theorem} leads to $\Prob\left(A_{T}^{\ast} \right) \geq 1-1/(12\sqrt{\pi}{T})$ and $\Prob\left(B_{T}^{\ast} \right) \geq 1-1/(12\sqrt{\pi}{T})$. Therefore, it holds that
\begin{equation*}
    \Prob\left(A_{T}^{\ast} \cap B_{T}^{\ast} \right) \geq 1-\frac{1}{6\sqrt{\pi}{T}}
\end{equation*}
{\textbf{Step 3:}} From now on, we assume that $A_T^{\ast}$ and $B_T^{\ast}$ both hold. The constants we use are
\begin{equation*}
    C_1 = \sqrt{\frac{2}{3}} C_3^{\frac{3}{2}} + \sqrt{8}, \quad C_2 = \frac{1}{8\sqrt{3n^3}} - \frac{2\sqrt{2}}{C^{\ast}}, \quad C_3 = 63^{\frac{1}{3}}(2\sqrt{2}+4)^\frac{2}{3},
\end{equation*}
where $C^{\ast}$ satisfies Assumption (A3), $\delta_T ^ {3/2}\underline{f}_T \geq C^{\ast} \sqrt{\log T}$ and 
\slb{$n \geq 3/2$}.
As before, for $j \in \{1, 2, \ldots, N\}$ define $I^L_j$ and $I^R_j$ as in \eqref{intervals_discussion}.
For $d_{s,e} = \textrm{argmax}_{t \in \{s, s+1, \dots, e-2\}}\{\lvert X_{t+2} - 2X_{t+1} + X_{t} \rvert\}$ being the location of the largest difference detected in the interval $[s,e]$, $1\leq s<e\leq T$, define $c_{m}^l$ and $c_{k}^r$ as in \eqref{end-points_proof} of the main paper. 
\slb{Since the length of the intervals in \eqref{intervals_discussion} is $\delta_T/2n$,
$\lambda_T \leq \delta_T/2n$ ensures that there exists at least one $m\in \{0,1,\ldots,K^{\max}\}$ and at least one $k\in \{1,2\ldots,K^{\max}\}$}
such that $c_m^l \in I^L_j$ and $c_k^r \in I^R_j$ for all $j \in \{1, 2, \ldots, N\}$.
\vspace{0.1in}
\\
At the beginning of DAIS, $s=1, e=T$ and the first change-point that will get detected depends on the value of $d_{1,T}$. 
As already explained in Section~\ref{discussion}, for $j\in\{1,\ldots,N-1\}$, the largest difference $d_{s,e}$ will be at most at a distance $\delta_{1,T}^j$ from the nearest change-point $r_j$ or $r_{j+1}$, where $\delta_{1,T}^j \leq \frac{\Tilde{\delta}_j}{2} - \frac{3\delta_T}{4n}$ for $\Tilde{\delta}_j = r_{j+1} - r_j$. 
The first point to get detected will be the point that is closest to the largest difference $d_{1,T}$. 
The interval where the detection of this change-point occurs, cannot contain more than one change-points, as was explained in Section~\ref{discussion}.
\vspace{0.1in}
\\
Without loss of generality, we suppose that the first change-point to get detected is $r_J$ for some $J \in \left\lbrace 1,2,\ldots, N\right\rbrace$. 
We can show that there exists an interval $[c_m^l, c_k^r]$, for $m,k \in \{0,1,\ldots,K^{\max}\}$, such that $r_J$ is isolated using exactly the same argument as in (B7) of the main paper. 
We will now show that for $\tilde{b}_J = \textrm{argmax}_{c_m^l\leq t < c_k^r} C^t_{c_m^l, c_k^r}(\boldsymbol{X})$, then $C^{\Tilde{b}_J}_{c_m^l, c_k^r}(\boldsymbol{X}) > \zeta_T$. 
Using \eqref{A_T_slope}, we have that
\begin{equation} \label{initial_bound_slope}
    C^{\Tilde{b}_J}_{c_m^l, c_k^r}(\boldsymbol{X}) 
    \geq C^{r_J}_{c_m^l, c_k^r}(\boldsymbol{X}) 
    \geq C^{r_J}_{c_m^l, c_k^r}(\boldsymbol{f}) - \sqrt{8\log T}.
\end{equation}
From Lemma \ref{Lemma_for_main_thm_slope1}, we have that 
\begin{equation*}
    C^{r_J}_{c_m^l, c_k^r}(\boldsymbol{f}) 
    \geq \frac{1}{\sqrt{24}} \biggl( \min \{r_J-c_m^l, c_k^r - r_J \} \biggr) ^ \frac{3}{2} \Delta_J^f.
\end{equation*}
Showing that
\begin{equation} \label{bound_for_min_slope}
    \min\{c_k^r-r_J,r_J-c_m^l\}\geq \frac{\delta_T}{2n}.
\end{equation}
follows the exact same steps as Step 3 in the proof of Theorem~\ref{consistency_theorem} and will not be repeated. 
Now, using Assumption (A3) and the results in \eqref{initial_bound_slope}, \eqref{bound_for_min_slope}, we have that
\begin{align}\label{slope_observed}
    C^{\Tilde{b}_J}_{c_m^l, c_k^r}(\boldsymbol{X}) 
    & \geq \frac{1}{\sqrt{24}} \left( \frac{\delta_T}{2n} \right)^\frac{3}{2} \Delta^f_J - \sqrt{8\log T}
    \geq \frac{1}{\sqrt{24}} \left( \frac{\delta_T}{2n} \right)^\frac{3}{2}\underline{f_T} - \sqrt{8\log T} \nonumber \\
    & = \biggl( \frac{1}{8\sqrt{3n^3}} - \frac{2\sqrt{2\log T}}{\delta_T^{3/2}\underline{f}_T} \bigg) \delta_T^{3/2} \underline{f}_T
    \geq \biggl( \frac{1}{8\sqrt{3n^3}} - \frac{2\sqrt{2}}{C^{\ast}} \bigg) \delta_T^{3/2} \underline{f}_T \\
    & = C_2 \delta_T^{3/2} \underline{f}_T > \zeta_T \nonumber
\end{align}
and thus, the change-point will get detected.
\vspace{0.1in}
\\
Therefore, there will be an interval of the form $[c_{\tilde{m}}^l, c_{\tilde{k}}^r]$, such that the interval contains $r_J$ and no other change-point and $\max_{c_{\tilde{m}}^l \leq t < c_{\tilde{k}}^r} C^{t}_{c_{\tilde{m}}^l, c_{\tilde{k}}^r}(\boldsymbol{X}) > \zeta_T$. 
For $k^{\ast}, m^{\ast} \in \{0, 1, \ldots K^{\max}\}$, denote by $c_{m^{\ast}}^l \geq c_{\tilde{m}}^l$ and $c_{k^{\ast}}^r \leq c_{\tilde{k}}^r$ the first left- and right-expanding points, respectively, that this happens and let $b_J = \textrm{argmax}_{c_{m^{\ast}}^l \leq t < c_{k^{\ast}}^r} C^{t}_{c_{{m^{\ast}}}^l, c_{{k^{\ast}}}^r}(\boldsymbol{X})$, with $C^{b_J}_{c_{{m^{\ast}}}^l, c_{{k^{\ast}}}^r}(\boldsymbol{X}) > \zeta_T$. 
Note that $b_J$ cannot be an estimation of $r_j, j\neq J$, as $r_J$ is isolated in the interval where it is detected. 
Our aim now is to find $\Tilde{\gamma}_T > 0$, such that for any $b^{\ast} \in \{c_{m^{\ast}}^l, c_{m^{\ast}}^l + 1, \ldots, c_{k^{\ast}}^r-1\}$ with $\lvert b^{\ast} - r_J\rvert \Bigl( \Delta_J^f \Bigl)^2 > \Tilde{\gamma}_T$, we have that
\begin{equation}\label{contradiction_slope}
    \Bigl( C^{r_J}_{c_{{m^{\ast}}}^l, c_{{k^{\ast}}}^r}(\boldsymbol{X}) \Bigl)^2 > \Bigl( C^{b^{\ast}}_{c_{{m^{\ast}}}^l, c_{{k^{\ast}}}^r}(\boldsymbol{X}) \Bigl)^2.
\end{equation}
Proving \eqref{contradiction_slope} and using the definition of $b_J$, we can conclude that $\lvert b_J - r_J\rvert \Bigl( \Delta_J^f \Bigl)^2 \leq \Tilde{\gamma}_T$. 
Now, using \eqref{model_sigma1}, it can be shown, for $\boldsymbol{\phi_{s,e}^{b}}$ as defined in \eqref{phi_definition}, that \eqref{contradiction_slope} is equivalent to
\begin{align} \label{equivalent_contradiction_slope}
    \Bigl( C^{r_J}_{c_{{m^{\ast}}}^l, c_{{k^{\ast}}}^r}(\boldsymbol{f}) \Bigl)^2 - \Bigl( & C^{b^{\ast}}_{c_{{m^{\ast}}}^l, c_{{k^{\ast}}}^r}(\boldsymbol{f}) \Bigl)^2
    > \Bigl( C^{b^{\ast}}_{c_{{m^{\ast}}}^l, c_{{k^{\ast}}}^r}(\boldsymbol{\epsilon}) \Bigl)^2 - \Bigl( C^{r_J}_{c_{{m^{\ast}}}^l, c_{{k^{\ast}}}^r}(\boldsymbol{\epsilon}) \Bigl)^2 \nonumber\\ 
    &+ 2\Big \langle \boldsymbol{\phi_{c_{m^{\ast}}^l, c_{k^{\ast}}^r}^{b^{\ast}}} \langle \boldsymbol{f}, \boldsymbol{\phi_{c_{m^{\ast}}^l, c_{k^{\ast}}^r}^{b^{\ast}}} \rangle - \boldsymbol{\phi_{c_{m^{\ast}}^l, c_{k^{\ast}}^r}^{r_J}} \langle \boldsymbol{f}, \boldsymbol{\phi_{c_{m^{\ast}}^l, c_{k^{\ast}}^r}^{r_J}} \rangle, \boldsymbol{\epsilon} \Big \rangle.
\end{align}
Without loss of generality, assume that $b^{\ast} \in [r_J, c_{k^{\ast}}^r)$ and a similar approach holds when $b^{\ast} \in [c_{m^{\ast}}^l, r_J)$. 
Denote
\begin{equation} \label{Lambda_definition_slope}
    \Lambda \colon = \Bigl( C^{r_J}_{c_{{m^{\ast}}}^l, c_{{k^{\ast}}}^r}(\boldsymbol{f}) \Bigl)^2 - \Bigl( C^{b^{\ast}}_{c_{{m^{\ast}}}^l, c_{{k^{\ast}}}^r}(\boldsymbol{f}) \Bigl)^2.
\end{equation}
For the right-hand side of \eqref{equivalent_contradiction_slope} using \eqref{A_T_slope},
\begin{equation} \label{eq_epsilon_slope}
    \Bigl( C^{b^{\ast}}_{c_{{m^{\ast}}}^l, c_{{k^{\ast}}}^r}(\boldsymbol{\epsilon}) \Bigl)^2 - \Bigl( C^{r_J}_{c_{{m^{\ast}}}^l, c_{{k^{\ast}}}^r}(\boldsymbol{\epsilon}) \Bigl)^2 
    \leq \max_{s,e,b:s\leq b<e} \Bigl( C^{b}_{s, e}(\boldsymbol{\epsilon}) \Bigl)^2 - \Bigl( C^{r_J}_{c_{{m^{\ast}}}^l, c_{{k^{\ast}}}^r}(\boldsymbol{\epsilon}) \Bigl)^2 
    \leq 8\log T.
\end{equation}
Using Lemma \ref{Lemma_for_main_thm_slope2}, \eqref{A_T_slope} and \eqref{Lambda_definition_slope}, we have that for the left-hand side,
\begin{align} \label{eq_extra_stuff_slope}
    2\Big \langle \boldsymbol{\phi_{c_{m^{\ast}}^l, c_{k^{\ast}}^r}^{b^{\ast}} }
    & \langle \boldsymbol{f}, \boldsymbol{\phi_{c_{m^{\ast}}^l, c_{k^{\ast}}^r}^{b^{\ast}}} \rangle - \boldsymbol{\phi_{c_{m^{\ast}}^l, c_{k^{\ast}}^r}^{r_J}} \langle \boldsymbol{f}, \boldsymbol{\phi_{c_{m^{\ast}}^l, c_{k^{\ast}}^r}^{r_J}} \rangle, \boldsymbol{\epsilon} \Big \rangle \nonumber\\
    &\leq 2 \big\| \boldsymbol{\phi_{c_{m^{\ast}}^l, c_{k^{\ast}}^r}^{b^{\ast}}} \langle \boldsymbol{f}, \boldsymbol{\phi_{c_{m^{\ast}}^l, c_{k^{\ast}}^r}^{b^{\ast}}} \rangle - \boldsymbol{\phi_{c_{m^{\ast}}^l, c_{k^{\ast}}^r}^{r_J}} \langle \boldsymbol{f}, \boldsymbol{\phi_{c_{m^{\ast}}^l, c_{k^{\ast}}^r}^{r_J}} \rangle \big\|_2 \sqrt{8\log T} \nonumber\\
    & = 2\sqrt{\Lambda}\sqrt{8\log T} .
\end{align}
Using \eqref{Lambda_definition_slope}, \eqref{eq_epsilon_slope} and \eqref{eq_extra_stuff_slope}, we can conclude that \eqref{equivalent_contradiction_slope} is satisfied if $\Lambda > 8\log T + \sqrt{2} \sqrt{\Lambda}\sqrt{8\log T}$ is satisfied, which has solution
\begin{equation*}\label{lambda_bound}
    \Lambda > \left(2\sqrt{2} +4\right)^2 \log T.
\end{equation*}
Using Lemma \ref{Lemma_for_main_thm_slope2}, we can conclude that
\begin{align}\label{lambda_expression}
    & \Lambda > \left(2\sqrt{2} +4\right)^2 \log T \nonumber\\
    & \Leftrightarrow \frac{1}{63} \left( \min\{\lvert r_J-b^{\ast} \rvert, r_J - c_{m^{\ast}}^l\} \right)^3 \Bigl( \Delta_J^f \Bigr)^2 > \left(2\sqrt{2} +4\right)^2 \log T \nonumber\\
    & \Leftrightarrow \min\{\lvert r_J-b^{\ast} \rvert, r_J - c_{m^{\ast}}^l\} > \frac{\left(63\log T\right)^{1/3} \left( 2\sqrt{2} +4 \right)^{2/3}}{\Bigl( \Delta_J^f \Bigr)^{2/3}} = \frac{C_3\left(\log T\right)^{1/3}}{\Bigl( \Delta_J^f \Bigr)^{2/3}}
\end{align}
Now, if for sufficiently large T
\begin{equation} \label{min_bound3_slope}
    \min\{r_J - c_{m^{\ast}}^l, c_{k^{\ast}}^r - r_J\} 
    > 2^{1/3}C_3\frac{\left(\log T\right)^{1/3}}{\Bigl( \Delta_J^f \Bigr)^{2/3}}-1,
\end{equation}
it follows that 
\begin{equation*} \label{required_bound_slope}
    \min\{r_J - c_{m^{\ast}}^l, c_{k^{\ast}}^r - r_J\} 
    > C_3\frac{\left(\log T\right)^{1/3}}{\Bigl( \Delta_J^f \Bigr)^{2/3}},
\end{equation*}
and we can deduce that \eqref{lambda_expression} is restricted to
\begin{equation*}
    \lvert r_J-b^{\ast} \rvert 
    > C_3\frac{\left(\log T\right)^{1/3}}{\Bigl( \Delta_J^f \Bigr)^{2/3}}
\end{equation*}
which implies \eqref{contradiction_slope}. 
So, we conclude that necessarily
\begin{equation} \label{result_step3_slope}
    \lvert r_J - b_J\rvert \Bigl( \Delta_J^f \Bigl)^{2/3} 
    \leq C_3 \left(\log T \right)^{1/3}.
\end{equation}
But \eqref{min_bound3_slope} must be true since if we assume that 
\begin{equation*}
    \min\{r_J - c_{m^{\ast}}^l, c_{k^{\ast}}^r - r_J\} \leq 2^{1/3}C_3\frac{\left(\log T\right)^{1/3}}{\Bigl( \Delta_J^f \Bigr)^{2/3}}-1,
\end{equation*}
then, using Lemma \ref{Lemma_for_main_thm_slope1}, we have that
\begin{align*}
    C^{b_J}_{c_{{m^{\ast}}}^l, c_{{k^{\ast}}}^r}(\boldsymbol{X}) 
    & 
    \leq C^{r_J}_{c_{{m^{\ast}}}^l, c_{{k^{\ast}}}^r}(\boldsymbol{f})  + \sqrt{8\log T} \\
    & \leq \frac{1}{\sqrt{3}}\left( \min\{r_J - c_{m^{\ast}}^l, c_{k^{\ast}}^r - r_J\} + 1 \right) ^ {3/2} \Delta^f_J + \sqrt{8\log T}\\
    & \leq \frac{1}{\sqrt{3}}\left( 2^{1/3}C_3\frac{\left(\log T\right)^{1/3}}{\Bigl( \Delta_J^f \Bigr)^{2/3}} \right) ^ {3/2} \Delta^f_J + \sqrt{8\log T} \\ 
    & = \sqrt{\frac{2}{3}} C_3^{3/2}\sqrt{\log T} + \sqrt{8\log T} \\
    & = \left( \sqrt{\frac{2}{3}} C_3^{3/2} +\sqrt{8}\right) \sqrt{\log T}= C_1 \sqrt{\log T}
    \leq \zeta_T,
\end{align*}
which contradicts $C^{b_J}_{c_{{m^{\ast}}}^l, c_{{k^{\ast}}}^r}(\boldsymbol{X}) > \zeta_T$.
\vspace{0.1in}
\\
Thus, we have proved that for $\lambda_T \leq \delta_T/2n$, working under the assumption that both $A_T^{\ast}$ and $B_T^{\ast}$ hold, there will be an interval $[c_{m^{\ast}}^l, c_{k^{\ast}}^r]$ with $C^{b_J}_{c_{{m^{\ast}}}^l, c_{{k^{\ast}}}^r}(\boldsymbol{X}) > \zeta_T$, where $b_J = \textrm{argmax}_{c_{m^{\ast}}^l \leq t < c_{k^{\ast}}^r} C^{t}_{c_{{m^{\ast}}}^l, c_{{k^{\ast}}}^r}(\boldsymbol{X})$ is the estimated location for the change-point $r_J$ that satisfies \eqref{result_step3_slope}.
\vspace{0.1in}
\\
{\textbf{Step 4:}} After the detection of the change-point $r_J$ at the estimated location $b_J$ in the interval $[c_{m^{\ast}}^l, c_{k^{\ast}}^r]$, the process is repeated in the disjoint intervals $[1, c_{m^{\ast}}^l]$ and $[c_{k^{\ast}}^r, T]$.
Proving that there is no other change-point in the interval $[c_{m^{\ast}}^l, c_{k^{\ast}}^r]$ can be done in exactly the same way as Step 4 in the case of piecewise-constant signals and will not be repeated here.
\vspace{0.1in}
\\
{\textbf{Step 5:}} After detecting $r_J$, the algorithm will first check the interval $[1, c_{m^{\ast}}^l]$. 
So, unless $r_J = r_1$ and $[1, c_{m^{\ast}}^l]$ contains no other change-points, the next change-point to get detected will be one of $r_1, r_2, \ldots, r_{J-1}$. 
The location of the largest difference in the interval $[1, c_{m^{\ast}}^l]$, $d_{1, c_{m^{\ast}}^l}$, will again determine which change-point will be detected next and as in Step 5 for piecewise-constant signals, we only need to consider the case when the next change-point to get detected is $r_{J-1}$. 
\vspace{0.1in}
\\
Now, concentrating on the case that the next change-point to get detected is $r_{J-1}$, we mention that since this is the closest change-point to the already detected $r_J$, we need to make sure that detection is possible. 
As before, for $k_{J-1} \in \{1,\ldots,K^{\max}\}$ and $m_{J-1} \in \{k_{J-1} - 1, k_{J-1}\}$
we will show that $r_{J-1}$ gets detected in \linebreak $[c_{m_{J-1}^{\ast}}^l, c_{k_{J-1}^{\ast}}^r]$, where $c_{m_{J-1}^{\ast}}^l \geq c_{m_{J-1}}^l$ and $c_{k_{J-1}^{\ast}}^r \leq c_{k_{J-1}}^r \leq c_{m^{\ast}}^l$ and its detection is at location 
\begin{equation*}
    b_{J-1} = \textrm{argmax}_ {c_{m_{J-1}^{\ast}}^l \leq t < c_{k_{J-1}^{\ast}}^r} C^t _ {c_{m_{J-1}^{\ast}}^l, c_{k_{J-1}^{\ast}}^r}\left( \boldsymbol{X}\right),
\end{equation*} 
which satisfies $\bigl| r_{J-1} - b_{J-1}\bigl| \Bigl( \Delta_{J-1}^f \Bigl)^{2/3} \leq C_3 \left(\log T\right)^{1/3}$. 
Firstly, $r_{J-1}$ is isolated in the interval $[c_{m_{J-1}}^l, c_{k_{J-1}}^r]$ using the same argument as in \eqref{isolation} of the main paper. 
Using Lemma \ref{Lemma_for_main_thm_slope2}, we have that for $\tilde{b}_{J-1} = \textrm{argmax}_ {c_{m_{J-1}}^l \leq t < c_{k_{J-1}}^r} C^t _ {c_{m_{J-1}}^l, c_{k_{J-1}}^r} \left( \boldsymbol{X}\right)$,
\begin{align} \label{step 5}
    C ^{\tilde{b}_{J-1}} _ {c_{m_{J-1}}^l, c_{k_{J-1}}^r} & \left( \boldsymbol{X}\right) 
    \geq C ^{\tilde{r}_{J-1}} _ {c_{m_{J-1}}^l, c_{k_{J-1}}^r} \left( \boldsymbol{X}\right)
    \geq C^{\tilde{r}_{J-1}} _ {c_{m_{J-1}}^l, c_{k_{J-1}}^r}\left( \boldsymbol{f}\right) - \sqrt{8\log T} \nonumber\\
    & \geq \frac{1}{\sqrt{24}} \left(\min\{r_{J-1}-c_{m_{J-1}}^l,c_{k_{J-1}}^r-r_{J-1}\}\right)^{3/2} \Delta^f_{J-1}- \sqrt{8\log T}.
\end{align}
Before we show that $\min\{c_{k_{J-1}}^r-r_{J-1},r_{J-1}-c_{m_{J-1}}^l\} \geq \delta_T/2n$, we need show that $c_{m^{\ast}}^l$ satisfies $c_{m^{\ast}}^l - r_{J-1}\geq \delta_T/2n$
The proof is exactly the same as in Step 5 of the proof of Theorem 1. 
It can be deduced that the right end-point of the interval will satisfy $c_{k_{J-1}}^r - r_{J-1} > \delta_T/2n$ for some $k_{J-1}$. 
For the left end-point of the interval, the same holds as in Step 3.
In any case, $\min\{c_{k_{J-1}}^r-r_{J-1},r_{J-1}-c_{m_{J-1}}^l\} > \delta_T/2n$ holds and so, from \eqref{step 5}, using exactly the same calculations as \eqref{slope_observed}, we have that
\begin{align*}
    C^{\tilde{b}_{J-1}} _ {c_{m_{J-1}}^l, c_{k_{J-1}}^r} \left( \boldsymbol{X}\right)
    = C_2 (\delta_T)^{3/2} \underline{f}_T > \zeta_T.
\end{align*}
Therefore, we have shown that there exists an interval of the form $[c_{\tilde{m}_{J-1}}^l, c_{\tilde{k}_{J-1}}^r]$ with $\max_{c_{\tilde{m}_{J-1}}^l \leq b < c_{\tilde{k}_{J-1}}^r} C^t _ {c_{\tilde{m}_{J-1}}^l, c_{\tilde{k}_{J-1}}^r}\left( \boldsymbol{X}\right) > \zeta_T$. 
\vspace{0.1in}
\\
Now, denote $c_{m_{J-1}^{\ast}}^l, c_{k_{J-1}^{\ast}}^r$ the first points where this occurs and $b_{J-1}$ as defined above with $C^{b_J-1} _ {c_{m_{J-1}^{\ast}}^l, c_{k_{J-1}^{\ast}}^r} \left( \boldsymbol{X}\right) > \zeta_T$. 
We can show that $\bigl| r_{J-1} - b_{J-1}\bigl| \Bigl( \Delta_{J-1}^f \Bigl)^{2/3} \leq C_3 \left(\log T\right)^{1/3}$, following exactly the same process as in Step 3.
\vspace{0.1in}
\\
After detecting $r_{J-1}$ in the interval $[c_{m_{J-1}^{\ast}}^l, c_{k_{J-1}^{\ast}}^r]$, DAIS will restart on intervals $[1, c_{m_{J-1}^{\ast}}^l]$ and $[c_{k_{J-1}^{\ast}}^r, c^l_{m^\ast}]$. 
Step 5 can be applied to all intervals, as long as there is a change-point. 
We can conclude that all change-points will get detected, one by one, and their estimated locations will satisfy $\bigl| r_j - b_j\bigl| \Bigl( \Delta_j^f \Bigl)^{2/3} \leq C_3 \left(\log T\right)^{1/3}$, $\forall j \in \{1, 2, \ldots, N\}$. 
There will not be any double detection issues as each interval contains no previously detected change-points.
\vspace{0.1in}
\\
{\textbf{Step 6:}} After detecting all the change-points at locations $b_1, b_2, \ldots, b_N$ using the intervals $[c_{m_j^{\ast}}^l, c_{k_j^{\ast}}^r]$ for $j\in \{1, \ldots, N\}$, the algorithm will check all intervals of the form $[c_{k_{j-1}^{\ast}}^r, c_{m_j^{\ast}}^l]$ and $[c_{k_j^{\ast}}^r, c_{m_{j+1}^{\ast}}^l]$, with $c_{m_0^{\ast}}^l = 1$ and $c_{k_{N+1}^{\ast}}^r = T$. 
At most $N+1$ intervals of this form, containing no change-points will be checked. 
Denoting by $[s^{\ast}, e^{\ast}]$ any of those intervals, we can show that DAIS will not detect any change-point as for $b \in \{s^{\ast}, s^{\ast}+1, \ldots, e^{\ast} - 1\}$,
\begin{equation*}
    C^{b}_{s^{\ast}, e^{\ast}} \left( \boldsymbol{X}\right)
    \leq C^{b}_{s^{\ast}, e^{\ast}} \left( \boldsymbol{f}\right)  + \sqrt{8 \log T}
    = \sqrt{8 \log T}
    < C_1 \sqrt{\log T}
    \leq \zeta_T.
\end{equation*}
The algorithm will terminate after not detecting any change-points in all intervals.
\\
\end{proof}

\endgroup
\end{document}